\documentclass[12pt]{article}
\usepackage{amsfonts}
\usepackage{amsmath, amsthm, amssymb, bbm}
\usepackage{xcolor,hyperref,multirow,graphicx}
\usepackage{float}
\usepackage[title]{appendix}
\usepackage{natbib}
\usepackage{subcaption}
\usepackage{enumerate}
\usepackage[inline, shortlabels]{enumitem}
\usepackage[multiple]{footmisc}
\usepackage[font=small,labelfont=bf,tableposition=top]{caption}

\DeclareCaptionLabelFormat{andtable}{#1~#2  \&  \tablename~\thetable}


\addtolength{\oddsidemargin}{-.5in}
\addtolength{\evensidemargin}{-.5in}
\addtolength{\textwidth}{1in}
\addtolength{\textheight}{1.3in}
\addtolength{\topmargin}{-.8in}

\linespread{1.5}

\providecommand{\keywords}[1]
{
  \small	
  \textbf{\textit{Keywords---}} #1
}

\newtheorem{assumption}{Assumption}

\newtheorem{theorem}{Theorem}

\newtheorem{lemma}{Lemma}
\newtheorem{corollary}{Corollary}
\newtheorem{proposition}{Proposition}

\newcommand{\plim}{\operatorname*{plim}}

\newcommand{\argmin}{\operatorname*{argmin}}

\oddsidemargin 0pt \evensidemargin 0pt \marginparwidth 1in
\marginparsep 0pt \topmargin 0pt \headheight 0pt \headsep 0pt
\textheight 8.75in \textwidth 6.3in \topskip 0pt \footskip 1cm

\newcommand{\norm}[1]{\left\lVert#1\right\rVert}

\newcommand\independent{\protect\mathpalette{\protect\independenT}{\perp}}
\def\independenT#1#2{\mathrel{\rlap{$#1#2$}\mkern2mu{#1#2}}}

\usepackage{mathtools}

\begin{document}

\title{\Large Linear Multidimensional Regression\\ with 
Interactive Fixed-Effects
\thanks{%
I would like to greatly thank Martin Weidner. 
I would also like to thank Lars Nesheim. 
Additionally, I would like to thank Tim Christensen, Ivan Fern\'andez-Val, Dennis Kristensen, Matthew Read, and Andrei Zeleneev for their helpful suggestions along with the audience at IPDC 2022, IAAE Conference 2022, the Bristol Econometrics Study Group 2022, the MEG 2022/2023, and the NASMES 2024. 
This research was
supported by the
European Research Council grant
ERC-2018-CoG-819086-PANEDA. 
}
}
 \author{
  \large Hugo Freeman\footnote{
                    freem391@msu.edu.au,
                   Department of Economics,
                   Michigan State University,
                   East Lansing,
                   USA.
                   }
}
\date{\today}
\maketitle
\vspace{-0.8cm}
\begin{abstract}
    This paper studies a linear model for multidimensional panel data of three or more dimensions with unobserved interactive fixed-effects. The main estimator uses a Neyman-orthogonal approach, and requires two preliminary steps. First, the model is embedded within a two-dimensional panel framework where factor model methods in \cite{Bai2009} lead to consistent, but slowly converging, estimates. The second step develops a weighted-within transformation that is robust to multidimensional interactive fixed-effects and achieves the parametric rate of consistency. The estimator is shown to be asymptotically normal. The methods are implemented to estimate the demand elasticity for beer. 
\end{abstract}

\keywords{Multidimensional panel data; 
    interactive fixed-effects}

\section{Introduction}

Models of multidimensional data -- data with more than two dimensions -- are becoming increasingly important in econometric analysis as large data sets with a multidimensional structure become available. 
For example, consider studying demand elasticities with consumption data that varies by product, $i$, store, $j$, repeated over time, $t$.\footnote{A non-exhaustive list of related examples can be found in the introduction of \cite{matyas2017econometrics}. } 
In this context, analysts may be concerned with shifts in taste preferences unobserved by the econometrician related to unobserved characteristics in each dimension - like a cultural or sporting event that impacts prices and demand heterogeneously over product, store, and time.  
Whilst these are not explicitly observed by the econometrician, fixed-effects are a useful tool to infer unobserved heterogeneity and control for such variation. 

Additive fixed-effects in multidimensional data can at most accommodate variation in unobserved heterogeneity over a subset of dimensions with any of the fixed-effects terms. 
For example, in the three-dimensional model, additive effects only control for variation over $ij$, $it$ and $jt$, but not jointly over all $ijt$. 
Hence, if heterogeneity is across all dimensions interactively, additive fixed-effects cannot sufficiently control for unobserved heterogeneity. 
This paper develops tools to control for unobserved heterogeneity in the form of interactive fixed-effects that controls for variation that interacts over all dimensions of the data.

Consider $\beta$ estimation in the interactive fixed-effects model with three dimensions:
\begin{align}\label{eqn:IntroModel}
    Y_{ijt} = X_{ijt}^\prime\beta + 
    \sum_{\ell = 1}^L \lambda_{i\ell}  \delta_{j\ell}  \gamma_{t\ell} +
    \varepsilon_{ijt}, 
\end{align} 
where all terms in $\sum_{\ell = 1}^L \lambda_{i\ell}  \delta_{j\ell}  \gamma_{t\ell}$ are unobserved.\footnote{Bounded $L$ is only needed for terms interacting over all dimensions, and for the parametric rate of consistency. Less sparse models with $L$ unbounded are possible, but lead to slower convergence. }
Considering three dimensions is without loss of generality for the methods used herein. 
The object $L$ is often difficult to calculate, see \cite{haastad1989tensor}. 
However, for purposes of this analysis it is only necessary to know what is called the multilinear rank, defined in Section~\ref{sect:notation}. 
The multilinear rank is the set of matrix-ranks of the different matrices the array can be transformed into along each indice. 
For example, the multilinear rank of the interactive fixed-effects in three dimensions is a vector of three potentially different ranks - the first being the rank when the first dimensions are the rows, and the second and third dimensions are jointly the columns, and so on for the other multilinear rank entries. 
In the presence of covariates, estimation of interactive fixed-effect rank is still an open area of research in the standard panel setting. 
Simulations suggest estimation with more than the required components performs well.\footnote{In standard panel data overestimating rank leads to valid inference, see \cite{MoonWeidner2015}. } 
Additive fixed-effects are omitted for brevity but are subsumed by the interactive fixed-effect term or can be removed with a simple within transformation. 

Let $X_{ijt}$ be arbitrarily correlated with interactive fixed-effects, $\sum_{\ell = 1}^L \lambda_{i\ell}  \delta_{j\ell}  \gamma_{t\ell}$, but independent of the noise term, $\varepsilon_{ijt}$. 
The challenge to estimating $\beta$ is isolating variation in $X_{ijt}$ that is not correlated with the interactive fixed-effects term. 
This paper develops multidimensional weighted-within transformations that project out this unobserved heterogeneity and shows settings where standard factor methods work well. 
The weighted within transformation is a novel contribution. 
Group fixed-effects from \cite{bonhomme2022discretizing}, and in \cite{freeman2023linear} are also possible. 

This paper makes two main contributions to the literature. 
The first is to show that the three or higher dimensional model can be couched in a standard two-dimensional panel data model, and to derive sufficient conditions for consistency using factor model methods from \cite{Bai2009,MoonWeidner2015}, albeit at slow rates of convergence. 
The second contribution is to introduce weighted fixed-effects methods, which, when combined with a double debias procedure, achieves the parametric rate of convergence and asymptotic normality for $\beta$ estimates.
This is yet to be shown in the three dimensional case using existing methods. 
Simulations corroborate these theoretical findings and an empirical demand estimation application demonstrates the estimator in practice.

The novel estimator proposed in this paper can be described as an extension to the usual within transformation. 
Consider additive fixed-effects of the form $a_{ij} + b_{it} + c_{jt}$. 
These can be projected out using the transformation,
\begin{align}\label{withinEstimator}
        \dot{{Y}}_{ijt} = {Y}_{ijt} - \Bar{{Y}}_{\cdot jt} - \Bar{{Y}}_{i\cdot t} - \Bar{{Y}}_{ij\cdot} + \Bar{{Y}}_{\cdot\cdot t} + \Bar{{Y}}_{\cdot j\cdot}+ \Bar{{Y}}_{i\cdot \cdot} - \Bar{{Y}}_{\cdot \cdot \cdot}, 
\end{align}
applied equivalently to $X_{ijt}$, where the bar variables denote the average taken over the ``dotted'' index for the entire sample. 
That is, 
$\Bar{{Y}}_{\cdot jt} := \frac{1}{N_1}\sum_{i = 1}^{N_1} Y_{ijt}$, 
$\Bar{{Y}}_{\cdot\cdot t} := \allowbreak \frac{1}{N_1N_2} \sum_{i = 1}^{N_1}\sum_{j = 1}^{N_2}\allowbreak Y_{ijt}$, etc. 
In the presence of interactive fixed-effects, the simple within transformation demeans each fixed-effect term to leave,
\begin{align*}
    \sum_{\ell = 1}^L 
    \Big(\lambda_{i\ell} - \Bar{\lambda}_{\ell}\Big)
    \Big(\delta_{j\ell}  - \Bar{\delta}_{\ell}\Big)
    \Big(\gamma_{t\ell}  - \Bar{\gamma}_{\ell}\Big)
\end{align*}
which is clearly not controlled for if the $\{\lambda,\delta,\gamma\}$ terms admit heterogeneity over $i,j,t$. 

Consider instead an extension of the within transformation that uses weighted means instead of uniform means. 
With an abuse of notation, consider,  
\begin{align}\label{withinGroupEstimator}
    \begin{split}
        \check{Y}_{ijt} &= Y_{ijt} - \Bar{Y}_{i^*jt} - \Bar{Y}_{ij^* t} - \Bar{Y}_{ijt^*} + \Bar{Y}_{i^*j^* t} + \Bar{Y}_{i^* jt^*}+ \Bar{Y}_{ij^* t^*} - \Bar{Y}_{i^*j^*t^*}
    \end{split}
\end{align}
where the bar variables combined with the star indices denote weighted means for that observation. 
For example, $\Bar{Y}_{i^*jt} = \sum_{i^\prime}w_{i,i^\prime}Y_{i^\prime jt}$ for weights across $i^\prime$ for each $i$. 
In turn, the remainder from the interactive fixed-effects term is,
\begin{align*}
    \sum_{\ell = 1}^L 
    \Big(\lambda_{i\ell} - \sum_{i^\prime}w_{i,i^\prime}{\lambda}_{i^\prime\ell}\Big)
    \Big(\delta_{j\ell} - \sum_{j^\prime}w_{j,j^\prime}{\delta}_{j^\prime\ell}\Big)
    \Big(\gamma_{t\ell} - \sum_{t^\prime}w_{t,t^\prime}{\gamma}_{t^\prime\ell}\Big).
\end{align*}
If appropriate weights are used, the weighted within transformation can project out the interactive fixed-effects. 
The same can be true if the weighted means are replaced with cluster means for appropriate cluster assignments, similar to a group fixed-effects estimator.
Hence, with a relatively small change to how the within transformation is performed, much more general fixed-effects can be considered in the linear model. 

The model for interactive fixed-effects has precedent in the standard two-dimensional panel data setting, for example in \cite{Bai2009,Pesaran2006},
\begin{align}\label{eqn:bai}
    Y_{it} = X_{it}^\prime\beta + \sum_{\ell = 1}^L\lambda_{i\ell} f_{t\ell} + e_{it}.
\end{align}
The interactive term $\sum_{\ell = 1}^L\lambda_{i\ell} f_{t\ell}$ also sufficiently captures variation in additive individual and time effects without the need to specify these separately. 
For multidimensional applications the problem in \eqref{eqn:IntroModel} can be transformed to a two dimensional problem and estimated as \eqref{eqn:bai} directly using the transformed data.
Indeed, Section~\ref{sect:Matrixestimation} displays useful preliminary convergence rates for this approach. However, 
convergence rates can suffer severely from the over-parametisation implied by \eqref{eqn:bai}. 
Without strong assumptions on the sparsity of the fixed-effects, only a slow rate of convergence can be guaranteed for this approach. 
However, when this approach is used to construct preliminary estimators of the fixed-effects, the parametric rate of consistency is possible when these are combined with the novel weighted within transformations. 
\cite{kuersteiner2020dynamic} consider a similar dimension specific projection in the time dimension in the presence of error correlations, in particular see Supplement D.5.2.

On top of this slow convergence rate for the two dimensional transformation, finite sample bias may also arise when $L$ is large and only a subset of the unobserved heterogeneity parameters are low-dimensional. 
For unbiased estimates of $\beta$, transforming the multidimensional array to a matrix then estimating \eqref{eqn:bai} requires either: (a) all the fixed-effects are low-dimensional; or (b) 
that a known subset of the fixed effects are low-dimensional, which is a strong assumption.
Alternatively, whilst the weighted transformations also require that a subset of the fixed-effect parameters are low-dimensional, the analyst does not need to know which ones are. 
In this sense, the novel weighted estimator is robust. 
A concrete example is considered in a simulation exercise, and there is evidence in the empirical application of differences in the rank of fixed-effects in each dimension.

The beer demand elasticity application uses Dominick's supermarket data for Chicago, 1991-1995, where price and quantity vary over product, store, and fortnight.
For comparison, barley price is used as an instrument to estimate elasticities.
IV estimates show demand is strongly negatively elastic (-3.39), however, estimates are very imprecise. 
This could be because the instrument only varies over one dimension, time, or, e.g., because prices of other inputs are also highly variable so the instrument does not explain much price variation. 
Estimates from the weighted within transformation also show demand is downward sloping and elastic (-3.12), but with much better precision. 
Factor model estimates are highly sensitive to how the data is transformed into two-dimensions. 
The estimates from the weighted-within transformation are similar to the own-price elasticities estimated in \cite{hausman1994competitive}. 

The technical component of this paper relates to the numerical analysis literature on low-rank approximations of multidimensional arrays. 
As pointed out in \cite{de2008tensor}, the low-rank approximation problem in the tensor setting is not well-posed, hence presents technical difficulties. 
See \cite{kolda2009tensor} for a summary of the multidimensional array decomposition problem, and \citet{vannieuwenhoven2012new, rabanser2017introduction}. 
Therefore, it is necessary to innovate on this tensor low-rank problem to find appropriate analytical results. 
To this end, this paper utilises well-posed components of  
two-dimensional sub-problems
for use in nuisance parameter applications. 
The advantage in this paper is that fixed-effects only need to be differenced out at a sufficiently fast asymptotic rate, and the low-rank tensor problem does not need to be solved explicitly.\footnote{\cite{elden2011perturbation}, along with related papers, suggest a reformulation of the low multilinear rank problem that may have promising applications. }

The paper is organised as follows:
Section~\ref{sect:Application} motivates the estimator with a beer demand estimation empirical application;
Section~\ref{sect:motivation} introduces the general model;
Section~\ref{sect:estimators} details the estimators and convergence results; 
Section~\ref{sect:Discussion} discusses the convergence results and some practical considerations; 
Section~\ref{sect:sims} displays the simulation results;
and
Section~\ref{sect:conclusion} concludes.

\section{Empirical application - demand estimation for beer}\label{sect:Application}

The proposed methods are applied to estimate the demand elasticity for beer. 
Price and quantity for beer sales is taken from the Dominick's supermarket dataset for the years 1991-1995 and is related to supermarkets across the Chicago area. 
Price and quantity vary over three dimensions in this example -- product ($i$), store ($j$) and fortnights ($t$). 
Fixed-effects that interact across all three dimensions can control for temporal taste shocks to beer consumption that differ over both product and store. 
Take for instance a large sporting event (temporary $t$ shock) that changes preferences differently across locations ($j$) and across certain subsets of sponsored beer ($i$). 
Analysts may want some way to control for such heterogeneity that interacts over all dimensions, even if they do not explicitly observe such sponsorship variables. 
For example, 
in the stadiums for the many NBA finals playoffs the Chicago Bulls played in the early 1990's, 
Miller Lite beer advertisements could be seen alongside advertisements for a substitute product, Canadian Club whisky. 
This suggests these events attracted large marketing campaign spends for these and other beer substitute brands that most likely also included price offers at local supermarkets. 
Whilst the impact of these advertisements and price offers on the demand for or price of beer is not clear and,
further, that it is reasonably safe to assume the econometrician does not observe the plethora of marketing campaigns around these events,
the analyst would most likely still want to control for aggregate shocks like these. 

In their critique of the fixed-effects approach to demand estimation, \cite{berry2021foundations} note the approach in its simplistic additive form cannot control effects that interact over products $\times$ markets, see Section 2.5.1 in \cite{berry2021foundations}. 
This limitation is significantly relaxed with interactive terms over all three dimensions. 
Indeed, Table~\ref{tbl:demandappLogLog} suggests even products $\times$ markets interactions are limiting - considering products $\times$ markets $\times$ time substantially shifts point estimates, cf. ``Additive Fixed-effects'' versus ``Weighted-within'' estimates. 

Models for demand estimation ideally account for endogenous variation in prices and quantity. 
The classic instrumental variable approach is to find a supply shifter that shifts the supply curve, allowing the econometrician to trace out the slope of the demand curve. 
A popular instrument in the estimation of beer demand is the commodity price for barley, one of the product's main ingredients, see e.g. \citet{saleh2014simple,tremblay1995advertising, richards2021dynamic}. 
Since the price of barley is arguably not driven by the demand for it by any one supplier of beer, it can be a useful variable to instrument for price shifts. 
In the following, it is taken as given that the price of barley is exogenous with respect to the fixed-effects and noise term, $\varepsilon$. 

For valid inference, the instrument should be highly correlated with the price of beer. 
In this dataset, correlation between the price of barley, which varies over only month, every second $t$, and price of beer depends on how beer price is first aggregated. 
If beer price is first integrated over $i$, $j$, and to the monthly level, such that it only varies over every second $t$, then it is highly correlated with the price of barley, at 0.79. 
However, if beer price is not aggregated at all it is only correlated at 0.001. 
This heuristic suggests there are important product and store level price drivers for beer that are not accounted for by fluctuations in the price of barley, or indeed by price fluctuations only over time. 
Standard errors for the IV estimator below are much larger than other estimators, which may be explained partly by this loss in effective sample size.\footnote{The effective sample size for the IV estimator drops from $N_1N_2T$ to just $T$. }

Table~\ref{tbl:demandappLogLog} refers to estimates from the following model, 
\begin{align*}
    \log(quantity_{ijt}) = \log(price_{ijt})\beta + \sum_{\ell = 1}^L \lambda_{i\ell}  \delta_{j\ell}  \gamma_{t\ell} + \varepsilon_{ijt}.
\end{align*}
No additional controls are included here since they are low-dimensional and subsumed by additive fixed-effects. 
Estimates for pooled OLS are positive, a contradiction that demand curves are downward sloping. 
IV estimates are negative, but standard errors are large. 
As noted above, the effective sample size for the second stage of the instrumental variable approach is orders of magnitude smaller because the instrument only varies over time, specifically every second fortnight. 
Estimates under the additive model for fixed-effects are also negative, with much smaller standard errors than the IV estimate. 

\begin{table}[h]
\centering
    \begin{tabular}{|l|ccc|}
  \hline
  Estimator & $\widehat{\beta}$ &(Het. SE) & (HAC SE) 
  \\
  \hline 
    Pooled OLS & 1.304 & (0.007) &(0.010) \\ 
  Additive Fixed-effects & -0.069 & (0.040) &(0.500) \\ 
  Pooled IV & -3.393& (1.372)& (N/A) \\ 
  \textbf{Weighted-within} (This paper) & -3.115 & (0.078) & (0.169) \\ 
  Factor (Product (i) as rows) & -2.786 &(0.055) &(0.325) \\ 
  Factor (Store (j) as rows) & -0.098 &(0.042) &(0.334) \\ 
  Factor (Time (t) as rows) & -0.033 &(0.003)& (0.004) \\ 
   \hline
\end{tabular}
\caption{
Log-log demand elasticities (45 products, 48 stores, 110 fortnights).\\
{\footnotesize
Heteroskedastic and HAC robust standard errors estimates in brackets. 
}
}\label{tbl:demandappLogLog}
\end{table}

\begin{figure}[ht]
    \centering
    {Factor model vs. Weighted-within over number of factors}
    \includegraphics[width = 0.45\linewidth]{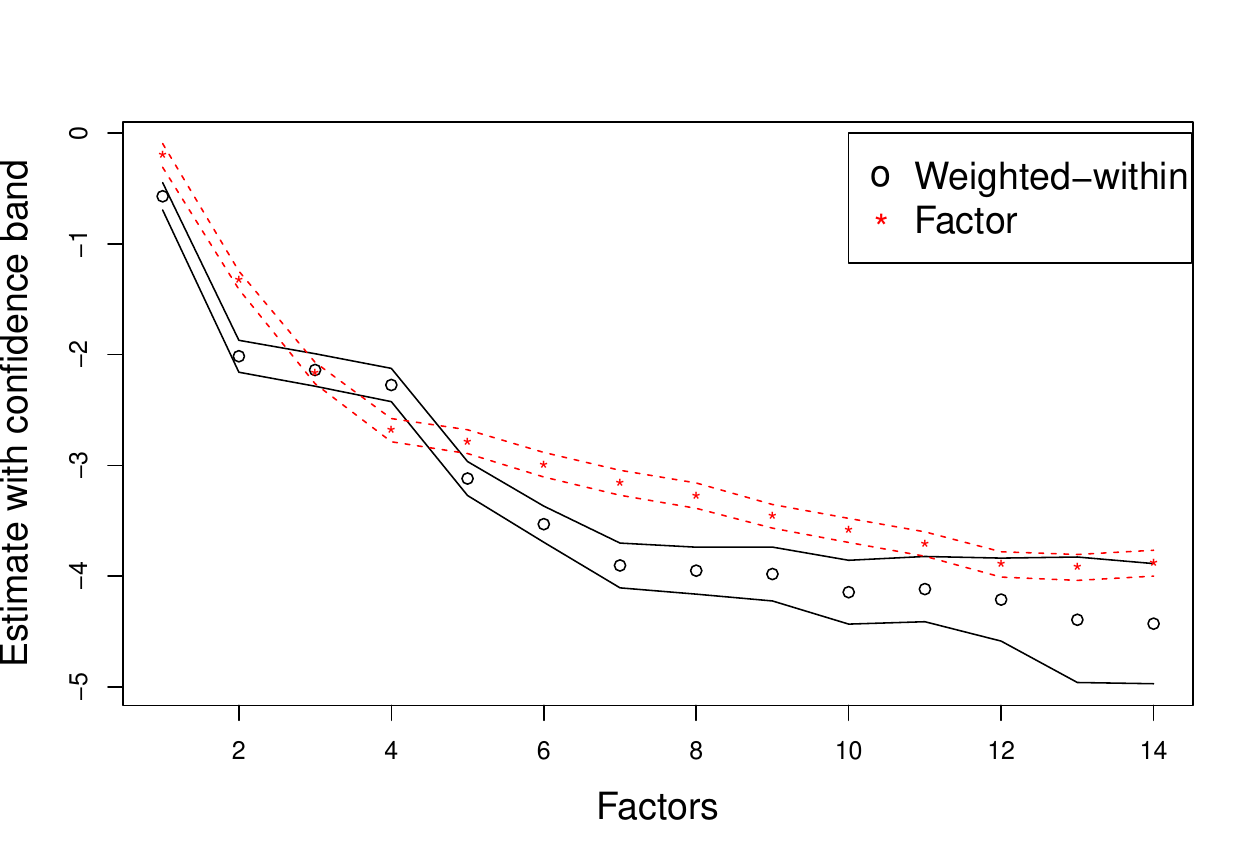}
    \includegraphics[width = 0.45\linewidth]{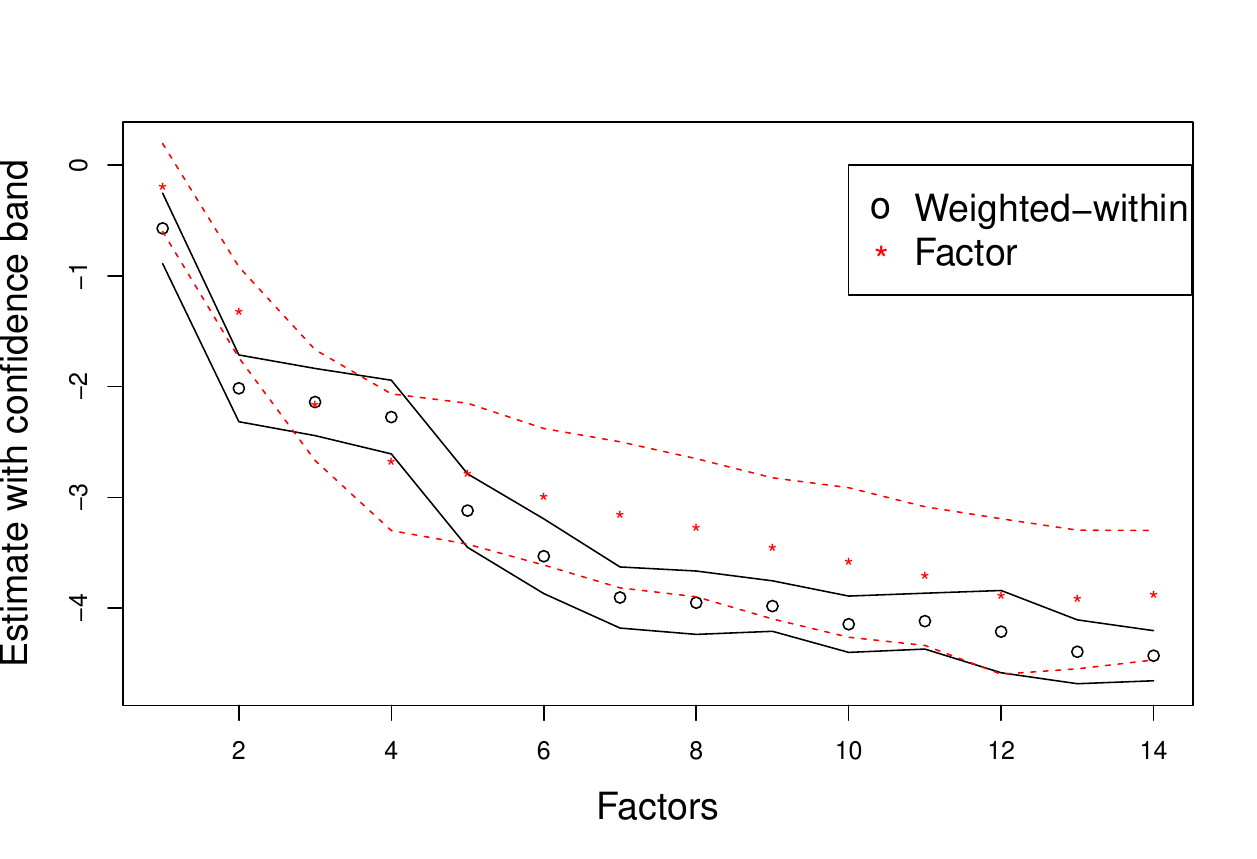}
    \caption{Elasticities over number of factors. Left panel het.  SEs, right panel HAC SEs}
    \label{fig:factorvsWW}
\end{figure}

The matrix method is implemented in all three dimensions, 
i.e., transform the three-dimensional array into three different matrices - with products as rows, with stores as rows, and time as rows. 
Five factors are estimated for each. 
The results suggest the factor model for fixed-effects is very sensitive to how the array is organised into a two-dimension problem, with vastly different conclusions drawn from each specification. 
Simulations in Section~\ref{sect:sims} suggest an explanation - sparsity of the fixed-effects in the interaction term can be heterogeneous, and imply a different factor model rank when the tensor of data is flattened, or matricised, in different dimensions.

This paper's novel estimator, the weighted-within, estimates elasticities similar to the IV point estimates, but with much smaller standard errors.\footnote{
To choose bandwidths:
For $h \leq 0.2$, coefficient and variance estimates become increasingly unstable, thus this was the lower bound of consideration.
Estimates did not change in an economically meaningful way with $h$, so the bandwidth was chosen to minimise standard errors. 
} 
Weighted within estimates are similar to the own-price elasticity from Table 1 in \cite{hausman1994competitive}, which average around $-2.5$.

Figure~\ref{fig:factorvsWW} displays estimates and confidence bands for the factor model with products as rows, and the weighted-within estimator, allowing the number of interactive terms to increase. 
The figure displays two things. 
First, there is evidence that up to approximately 7-8 interactive terms may be appropriate. 
Second, there is a substantial shift in estimate with the weighted-within estimator across the number of interactive terms. 
This implies a persistent debias with the weighted-within transformation. 
HAC confidence intervals for the weighted-within estimator are also substantially tighter than the factor model.

\section{Model}\label{sect:motivation}

Let $\beta^0$ denote the true parameter value for the slope coefficients. 
The model in full dimensional generality is,
\footnote{
For example, in index notation this model can be written as, 
\begin{align*}
    Y_{i_1, i_2, \dots, i_d} = \sum_{k = 1}^K X_{i_1, i_2, \dots, i_d; k}\beta_k^0 
    + \mathcal{A}_{i_1, i_2, \dots, i_d} 
    + \varepsilon_{i_1, i_2, \dots, i_d}
\end{align*}
with $\mathcal{A}_{i_1, i_2, \dots, i_d} = \sum_{\ell = 1}^L \varphi^{(1)}_{i_1\ell} \dots  \varphi^{(d)}_{i_d\ell}$.
}
\begin{align}\label{eqn:fullModel}
    \boldsymbol{Y} = \sum_{k = 1}^K \boldsymbol{X}_k \beta_k^0  
    + \boldsymbol{\mathcal{A}} 
    + \boldsymbol{\varepsilon}, 
\end{align}
where $\boldsymbol{Y}, \boldsymbol{X}_k, \boldsymbol{\varepsilon} \in\mathbb{R}^{N_1\times N_2\times\dots \times N_d}$.
$\boldsymbol{\mathcal{A}} = \sum_{\ell = 1}^L \varphi^{(1)}_\ell \circ \dots\circ \varphi^{(d)}_\ell$ where $\varphi^{(n)}_\ell\in\mathbb{R}^{N_n}$ for each $n = 1, \dots, d$ and ``$\circ$'' is the outer product. 
$L$ is bounded and fixed. 
$\boldsymbol{\varepsilon}$ is a noise term independent of all $\boldsymbol{X}_k$ and all unobserved fixed-effects terms. 
Take $i_n\in\{1,\dots,N_n\}$ for all $n\in\{1,\dots,d\}$ as the dimension specific index, where $N_n$ is the sample size of dimension $n$. 
The regressors $\boldsymbol{X}_k$ may be arbitrarily correlated with $\boldsymbol{\mathcal{A}}$. 
Throughout this paper all dimensions are considered to grow asymptotically, that is $N_n \rightarrow \infty$ for all $n$, however, for the asymptotic theory only a subset of two dimensions need to grow asymptotically.

Model \eqref{eqn:fullModel} can be seen as a natural extension of the \cite{Bai2009} model to three (or more) dimensions with $\boldsymbol{\mathcal{A}}$ interpreted as a ``higher-dimensional'' factor stucture. 
Similar to this strand of the literature, all terms in $\boldsymbol{\mathcal{A}}$ are considered fixed nuisance parameters. 
There are potentially many extensions to the factor model setting in \cite{Bai2009} to the higher dimension case. 
This paper starts with what seems the most natural extension. 

$\mathcal{A}$ incorporates additive fixed effects that vary in any strict subset of dimensions. 
For example, in the three dimensional setting it can control for the additive terms, $a_{ij} + b_{it} + c_{jt}$. 
These can be controlled for using $L = \min\{N_1, N_2\} + \min\{N_1, N_3\} + \min\{N_2, N_3\}$, 
with the first $\min\{N_1, N_2\}$ terms $\sum_{\ell = 1}^{\min\{N_1, N_2\}} \varphi_{i\ell}^{(1)}\varphi_{j\ell}^{(2)} = a_{ij}$ by making $\varphi_{t\ell}^{(3)}$ constant for $\ell = 1,\dots, \min\{N_1, N_2\}$, 
and so on for the $ b_{it}$ and $c_{jt}$.\footnote{Strictly speaking, this is not allowed if $L$ is bounded and fixed. Bounded and fixed $L$ is really only required for any fixed-effect term that varies over all $i,j,t$. } 
These could also be controlled for directly using the standard within-transformation before considering the model in \eqref{eqn:fullModel}.

\subsection{Notation and preliminaries}\label{sect:notation}

For a $d$-order tensor, $\mathbf{A}$, a factor-$n$ flattening, denoted as $\mathbf{A}_{(n)}$, is the rearrangement of the tensor into a matrix with dimension $n$ varying along the rows and the remaining dimensions simultaneously varying over the columns. 
That is, $\mathbf{A}_{(n)}\in\mathbb{R}^{N_n\times N_{n+1}N_{n+2}\dots N_1\dots N_{n-1}}$. 
The Frobenius norm, $\|\cdot\|_F$, of a matrix or tensor is the entry-wise norm, $\|\mathbf{A}\|_F^2 = \sum_{i_1 = 1}^{N_1} \dots \sum_{i_d = 1}^{N_d} A_{i_1\dots i_d}^2$. 
The spectral norm, denoted $\| \cdot\|$, is the largest singular value of a matrix. 
For a $d$-order tensor, $\mathbf{A}$, the multilinear rank, denoted $\mathbf{r}$, is a vector of matrix ranks after factor-$n$ flattening in each dimension, with each component of this vector $r_n = rank\big(\mathbf{A}_{(n)}\big)$. 
Tensor rank, different to multilinear rank, is defined as the least number of outer products of vectors to replicate the tensor. 
That is, for tensor $\mathbf{A}$ and vectors $u_\ell^{(n)}\in\mathbb{R}^{N_n}$, tensor rank is the smallest $L$ such that $\mathbf{A} = \sum_{\ell = 1}^L u_\ell^{(1)}\circ \dots\circ u_\ell^{(d)}$, where $\circ$ is the outer product of a vector. 
The notation $a\lesssim b$ means the asymptotic order of $a$ is bounded by the asymptotic order of $b$. 
The $n$-mode product of a tensor $\mathbf{A}$ and matrix $B$ is denoted $\mathbf{A}\times_n B$ and has elements 
$$(\mathbf{A}\times_n B)_{i_1,\dots,j,\dots,i_d} = \sum_{i_n = 1}^{N_n} A_{i_1,\dots, i_n, \dots, i_d} B_{j,i_n}, $$
which is equivalent to saying the flattening $(\mathbf{A}\times_n B)_{(n)} = B\mathbf{A}_{(n)}$.

The singular value decomposition of a matrix, ${A}\in \mathbb{R}^{N_1\times N_2}$ is 
\begin{align}\label{eqn:SVD}
    {A} = U\Sigma V^\prime = \sum_{r = 1}^{\min\{N_1,N_2\}} \sigma_r u_r v_r^\prime, 
\end{align}
where $U$ is the matrix of left singular vectors, $u_r$, $V$ is the matrix of right singular vectors, $v_r$, and $\Sigma$ is a diagonal matrix of singular values, $\sigma_r$, with values running in descending order down the diagonal. 
For a rank-$r$ matrix, the first $r$ entries on the diagonal of $\Sigma$ are strictly positive and the remaining entries are zero. 

Take the approximation problem, 
\begin{align}
    \min_{A^\prime}\|A - A^\prime\|_F \, \textrm{ such that }rank(A^\prime) = k.
\end{align}
It is well known from the Eckart-Young-Mirsky theorem that the solution to this approximation problem is the first $k$ terms of the singular value decomposition, i.e. $\sum_{r = 1}^{k} \sigma_r u_r v_r^\prime$. 
The Eckart-Young-Mirsky theorem effectively picks out the row and column subspaces that best explain variation in the matrix $A$ as the leading columns of the matrix $U$, respectively of $V$. 
The sum of squared error at the minimiser is thus $ \sum_{r = k + 1}^{\min\{N_1,N_2\}} \sigma_r^2$. 
This is commonly called a low-rank approximation and forms the cornerstone for estimation of unobserved heterogeneity in the factor model and interactive fixed-effects models in \citet{BaiNg2002,Bai2009, MoonWeidner2015} amongst others. 

The Eckart-Young-Mirsky theorem, however, does not extend to the three or higher dimensional setting, see \cite{de2008tensor} for details. 
To avoid this complication, the multidimensional problem can be translated to the two-dimensional setting to utilise the Eckart-Young-Mirsky theorem, or the fixed-effects parameters need to be shrunk separately, as is done with the weighted-within transformation.

\section{Estimation}\label{sect:estimators}

This section starts with a brief explanation of the main estimation approach to be used for inference.
This main estimator draws on the double debias approach summarised in \cite{chernozhukov2022locally}. 
Sections~\ref{sect:Matrixestimation}-\ref{sect:estimatorsKernel} detail two preliminary estimation steps required for the main estimator, and Section~\ref{sect:inference} details how these fit together in the main estimator to produce an asymptotically normal estimator centred at the true value. 
The two preliminary estimators may be of independent interest. 
The second preliminary estimator in Section~\ref{sect:estimatorsKernel}, and main estimator in Section~\ref{sect:inference} are novel to this paper.
The first preliminary estimator in Section~\ref{sect:Matrixestimation} is also novel, but is a very simple adaptation of two-dimensional estimators. 

The main estimator used for inference is presented first as an infeasible estimator, with the preliminary steps making it feasible.
Consider the conditional expectation, $\mathbb{E}\left(\boldsymbol{Y}|\boldsymbol{\mathcal{A}}\right) = \Gamma_{Y}$, $\mathbb{E}\left(\boldsymbol{X}|\boldsymbol{\mathcal{A}}\right) = \Gamma_{X}$, such that, 
\begin{align}\label{eqn:maininference}
    \boldsymbol{X} = \boldsymbol{\Gamma}_{X} + \boldsymbol\eta,
    \quad\quad 
    \boldsymbol{\Gamma}_{Y} = \boldsymbol{\Gamma}_{X}\cdot\beta^0 + \boldsymbol{\mathcal{A}},
    \quad\quad 
    \boldsymbol{Y} - \boldsymbol{\Gamma}_{Y} = (\boldsymbol{X} - \boldsymbol{\Gamma}_{X})\cdot\beta^0 + \boldsymbol{\varepsilon},
\end{align}
with $\mathbb{E}\left(\boldsymbol\eta|\boldsymbol{\mathcal{A}}\right) = 0$, $\mathbb{E}\left(\boldsymbol{\varepsilon}|\boldsymbol{X},\boldsymbol{\mathcal{A}}\right) = 0$. 
The display for $\boldsymbol{X}$ is general, in that if $\boldsymbol{X}$ is unrelated to $\boldsymbol{\mathcal{A}}$, $\boldsymbol{\Gamma}_{X} = 0$. 
Hence, the display in \eqref{eqn:maininference} is a representation, not an imposed model. 
If the correlation between $\boldsymbol{X}$ and $\boldsymbol{\mathcal{A}}$ is weak in the sense that the empirical mean of $\boldsymbol{\Gamma}_{X_k}^2$ converges to zero at arbitrary rate, then the following orthogonality conditions are not required and the convergence result from the preliminary estimator is sufficient. 
In this sense, the estimation approach below is robust to the existence of $\boldsymbol{\Gamma}_{X_k}$, and how prevalent it is in the generating process for each $X_k$ up to weak regularity conditions. 

The moment conditions used to estimate $\beta$ are, 
\begin{align*}
\mathbb{E}[m(\Gamma_Y,\Gamma_X,\beta)]
=
    \mathbb{E}[(X - \Gamma_X)(Y - \Gamma_Y - (X - \Gamma_X)'\beta)] = 0 
\end{align*}
These conditions are Neyman Orthogonal with respect to $\Gamma_Y,\Gamma_X$ in that, 
\begin{align*}
    \mathbb{E}[\partial m(\Gamma_Y,\Gamma_X,\beta)/\partial \Gamma_Y]
    &= 
    \mathbb{E}[X - \Gamma_X] = 0
    \\
    \mathbb{E}[\partial m(\Gamma_Y,\Gamma_X,\beta)/\partial \Gamma_X]
    &= 
    - \mathbb{E}[(Y - \Gamma_Y - (X - \Gamma_X)'\beta)] +  \mathbb{E}[X - \Gamma_X]'\beta = 0
\end{align*}
This property allows error to be multiplicative over estimation error in $\Gamma_Y,\Gamma_X$ in the feasible version of the estimator, since only second-order terms appear in the error. 
The weighted-within transformation applied to $Y$ and $X$ in the main estimator is automatically endowed with this Neyman orthogonal property, since it is simply a linear smoother across both $Y$ and $X$. 
That is, for linear smoother $S$, $SY = SX\beta^0 + S\mathcal{A} + S\varepsilon =\hat \Gamma_X \beta^0 + \hat{\mathcal{A}} + S\varepsilon = \hat\Gamma_Y$.

The (infeasible) inference corrected estimator, $\widehat\beta_{IC}^{(infeasible)}$, is
\begin{align*}
    \widehat\beta_{IC}^{(infeasible)} = \left({\rm vec}_K(\boldsymbol{X} - \boldsymbol{\Gamma}_{X})^\prime {\rm vec}_K(\boldsymbol{X} - \boldsymbol{\Gamma}_{X}) \right)^{-1} \left({\rm vec}_K(\boldsymbol{X} - \boldsymbol{\Gamma}_{X})^\prime {\rm vec}(\boldsymbol{Y} - \boldsymbol{\Gamma}_{Y}) \right),
\end{align*}
where ${\rm vec}_K(\boldsymbol{X} - \boldsymbol{\Gamma}_{X})\in \mathbb{R}^{\prod_n N_n \times K}$ is shorthand for the matrix of vectorised covariates, with each column the vectorised $(\boldsymbol{X}_k - \boldsymbol{\Gamma}_{X_k})$. The ${\rm vec}$ notation for $(\boldsymbol{Y} - \boldsymbol{\Gamma}_{Y})$ is the standard vectorisation. 
This estimator is robust to fixed-effects appearing in the equation for $Y$ and/or $X$, and is less sensitive to misspecification of the fixed-effects in each. 
Terms $\Gamma_Y$ and $\Gamma_X$ do not need to be explicitly estimated for the feasible estimator, only differenced out from $Y$, respectively $X$. 

The second preliminary estimator in Section~\ref{sect:estimatorsKernel} below, in particular the projections involved there, provide these objects. 
The first preliminary estimator in Section~\ref{sect:Matrixestimation} is required for the second preliminary estimator. 
As is shown in Section~\ref{sect:inference}, this estimator is robust to fixed-effects appearing in very general form in $X$. Indeed, slow consistency rates for ${\boldsymbol{\Gamma}}_{X}$ estimates are sufficient under weak regularity in $\varepsilon$. 

\subsection{Matrix low-rank approximation estimator}\label{sect:Matrixestimation}

This section provides a description of some matrix methods that can be applied directly to the multidimensional model and stipulates the assumptions required for consistency. 
\cite{kapetanios2021estimation} employ a similar approach for three-dimensional arrays in conjunction with the \cite{Pesaran2006} common correlated effects estimator. 
\cite{babii2022tensor} employ a similar matricisation procedure as that detailed below, but are interested in inference on the fixed-effect parameters.

Consider recasting the multidimensional array problem into a two dimensional panel problem by flattening $Y$ and $X$ in the $n$-th dimension,
\begin{align*}
    Y_{(n)} = X_{(n)}^\prime \beta^0 + \varphi^{(n)} \Gamma^\prime_n +\varepsilon_{(n)}
\end{align*}
where $Y_{(n)}, X_{(n)} ,\varepsilon_{(n)} \in \mathbb{R}^{N_n\times \prod_{n^\prime \neq n}N_{n^\prime}}$, $\varphi^{(n)}$ is an $N_n\times r_{n}$ matrix and $\Gamma_n$ is an $\prod_{n^\prime \neq n}N_{n^\prime}\times r_{n}$ matrix that accounts for variation in the remaining $\varphi^{(n^\prime)}$ for all $n^\prime\neq n$. 
The term $r_{n}$ is indexed by the dimension $n$ because it may vary non-trivially according to the flattened dimension. 
This is then the model described in \eqref{eqn:bai}, that is, the standard linear model with factor structure unobserved heterogeneity as studied in \cite{Bai2009}. 

The two-dimensional fixed-effects estimator for a given flattening, $n$, optimises
\begin{align}\label{eqn:factorOptimiser}
    R(\beta, \widehat{r}_n, n) = 
    \min_{\substack{\varphi^{(n)}\in \mathbb{R}^{N_n\times \widehat{r}_n},\\
    \Gamma_n\in \mathbb{R}^{\prod_{n^\prime \neq n}N_{n^\prime} \times \widehat{r}_n}}} \left\| Y_{(n)} - X_{(n)}^\prime \beta - \varphi^{(n)} \Gamma_n^\prime \right\|_F^2.
\end{align}
Then $\widehat{\beta}^{2D}_{(n)} = \argmin_{\beta} R(\beta, \widehat{r}_n, n)$ is the slope estimate for the two-dimensional setup. 
The analyst must choose both the dimension to flatten in, $n$, and the rank of the estimated interactive fixed-effects term, $\widehat{r}_n$. 
It is well known that the minimum in \eqref{eqn:factorOptimiser} is achieved using the leading $\widehat{r}_n$ terms from the singular value decomposition of the error term, $Y_{(n)} - X_{(n)}^\prime \beta$.
This gives $\widehat{\varphi}^{(n)}$ as the first ${\widehat{r}_n}$ columns of $\widehat{U}\widehat{\Sigma}$ and $\widehat{\Gamma}_n$ as the first ${\widehat{r}_n}$ columns of $\widehat{V}$ where $\widehat{U}$, $\widehat{\Sigma}$ and $\widehat{V}$ are the terms from \eqref{eqn:SVD} of the singular value decomposition of $Y_{(n)} - X_{(n)}^\prime \beta$. 
Because this error term is a function of $\beta$, an iteration is required between estimating $\beta$ and finding the singular value decomposition of the error term. 
This is a well studied iteration; for details see \cite{Bai2009} or \cite{MoonWeidner2015}. 

In the following assumptions let $\widehat{r}_n$ be the estimated number of factors for the $(n)$-flattening of the regression line when applying the least square methods in \eqref{eqn:factorOptimiser}. 
Also, let $\mathcal{L}\subset \{1,\dots,d\}$ be a non-empty subset of the dimensions. 
In the following, the multilinear rank of $\boldsymbol{\mathcal{A}}$ is restricted such that it is low-rank along at least one of the flattenings.

\begin{assumption}[Bounded norms of covariates and exogenous error]~
\label{ass:norms}
\begin{enumerate}[(i).]
    \item $\big\|{{X}_k}\big\|_F = O_p\left(\prod_{n=1}^d \sqrt{N_n}\right)$ for each $k$
    \item $\norm{{\varepsilon}_{(n^*)}} = O_p\left(\max\{\sqrt{N_{n^*}},\prod_{m\neq n^* }\sqrt{N_m}\}\right)$ for each $n^*\in\mathcal{L}$
\end{enumerate}
\end{assumption}

\begin{assumption}[Weak exogeneity]~
\label{ass:Exog}
    ${\rm vec}(X_k)^\prime{\rm vec}(\varepsilon) = O_p\left(\prod_{n=1}^d \sqrt{N_n}\right)$ for each $k$
\end{assumption}

\begin{assumption}[Low multilinear rank]~
\label{ass:multiLrank}
    For some positive integer, $c$, $r_{n^*} < c $ for all $n^*\in\mathcal{L}$,
    where $r_n$ is the $n$\textsuperscript{th} component of the multilinear rank of $\boldsymbol{\mathcal{A}}$. 
\end{assumption}

\begin{assumption}[Non-singularity]~
\label{ass:NonSing}
    Let $\sigma_s(A)$ be the $s$\textsuperscript{th} singular value for a matrix $A$. 
    Consider linear combinations $\delta_{n^*}\cdot X_{(n^*)} = \sum_k \delta_{n^*,k} X_{(n^*),k} $. 
    For each dimension $n^*\in\mathcal{L}$ that satisfies Assumption~\ref{ass:multiLrank}, then for $K\times 1$ unit vector $\delta_{n^*}$, 
    \begin{align*}
        \min_{\{\delta_{n^*}\in \mathbb{R}^K, \norm{\delta_{n^*}} = 1 \}} 
        \sum_{s = r_{n^*} + \widehat{r}_{n^*} + 1}^{\min\{N_{n^*}, \prod_{m \neq n^*}N_m\}} 
        \sigma_s^2\left( \frac{(\delta_{n^*}\cdot X_{(n^*)})}{\prod_{n}\sqrt{N_n}} \right) > b > 0 \quad \quad wpa1.
    \end{align*}
\end{assumption}

Assumptions~\ref{ass:norms}, \ref{ass:Exog} and \ref{ass:NonSing} are standard regularity assumptions already well established in the literature, e.g. see \cite{MoonWeidner2015}. 
Assumption~\ref{ass:norms}.(i) ensures that the covariates have bounded norms, for example having bounded second moments.
Assumption~\ref{ass:norms}.(ii) allows for some weak correlation across dimensions, see \cite{MoonWeidner2015}, or is otherwise implied if the noise terms are independently distributed with bounded fourth moments, see \cite{latala2005some}. 
Assumption~\ref{ass:Exog} is implied if $X_{i_1,i_2,\dots,i_d;k}\varepsilon_{i_1,i_2,\dots,i_d}$ are zero mean, bounded second moment and only admits weak correlation across dimensions for each $k = 1,\dots, K$. 
Assumption~\ref{ass:NonSing} states that, after factor projection, the set of covariates still collectively admit full-rank variation.

Assumption~\ref{ass:multiLrank} is new and asserts that there exists at least one flattening of the interactive term, $\boldsymbol{\mathcal{A}}$, that is low-dimensional or simply low-rank. 
This requires that at least one of the unobserved terms $\varphi^{(n)}$ is low dimensional. 
Note that not all dimensions must satisfy Assumption~\ref{ass:multiLrank} for the below result. 
If the correct dimension is chosen then variation from the interactive term can be sufficiently projected out using the factor model approach. 
This makes up the statement of the following Proposition. 

\begin{proposition}
\label{prop:bai}
    Let $\widehat{\beta}_{(n^*)}^{2D}$ be the estimator from \cite{Bai2009} after first flattening along dimension $n^*\in\mathcal{L}$. 
    If Assumptions~\ref{ass:norms}-\ref{ass:NonSing} hold, the subset $\mathcal{L}$ is non-empty, and the estimated number of factors $\widehat{r}_{n^*}\geq r_{n^*}$, then, for each $n^*\in\mathcal{L}$ satisfying Assumption~\ref{ass:multiLrank},  
    \begin{align*}
        \norm{\widehat{\beta}_{(n^*)}^{2D} - \beta^0} = O_p\left(\frac{1}{\sqrt{\min\{N_{n^*}, \prod_{n\neq n^*}N_n\}}}\right).
    \end{align*}
\end{proposition}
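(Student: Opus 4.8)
The plan is to adapt the interactive-fixed-effects consistency argument of \cite{Bai2009,MoonWeidner2015} to the single flattening $n^*$, tracking asymptotic orders carefully so as to recover the slow rate. Write $N:=N_{n^*}$, $M:=\prod_{m\neq n^*}N_m$ and $P:=\prod_n N_n = NM$, put $\Delta:=\beta^0-\beta$ and $C:=X_{(n^*)}'\Delta=\sum_k\Delta_k X_{(n^*),k}$, and recall $Y_{(n^*)}-X_{(n^*)}'\beta = C+\mathcal{A}_{(n^*)}+\varepsilon_{(n^*)}$. Since the inner minimisation in \eqref{eqn:factorOptimiser} is solved by the leading $\widehat r_{n^*}$ singular components (Eckart--Young--Mirsky), the profiled objective is $R(\beta,\widehat r_{n^*},n^*)=\sum_{s>\widehat r_{n^*}}\sigma_s^2(C+\mathcal{A}_{(n^*)}+\varepsilon_{(n^*)})$, and the whole argument runs off the minimiser inequality $R(\widehat\beta_{(n^*)}^{2D})\le R(\beta^0)$.

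For the upper bound, since $\mathrm{rank}(\mathcal{A}_{(n^*)})=r_{n^*}\le\widehat r_{n^*}$ by Assumption~\ref{ass:multiLrank}, taking $Z=\mathcal{A}_{(n^*)}$ in the rank-$\widehat r_{n^*}$ approximation gives $R(\beta^0)\le\|\varepsilon_{(n^*)}\|_F^2$. For the lower bound I would first peel off $\mathcal{A}_{(n^*)}$ using the variational identity $\sum_{s>p}\sigma_s^2(\cdot)=\min_{\mathrm{rank}(Z)\le p}\|\cdot-Z\|_F^2$ together with the rank-shift it induces, obtaining $R(\widehat\beta)\ge\sum_{s>\widehat r_{n^*}+r_{n^*}}\sigma_s^2(C+\varepsilon_{(n^*)})=\min_{G}\operatorname{tr}\big((C+\varepsilon_{(n^*)})M_G(C+\varepsilon_{(n^*)})'\big)$, where $M_G=I-GG'$ ranges over projections removing $\widehat r_{n^*}+r_{n^*}$ column directions. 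Expanding the trace into a covariate term, a pure-noise term and a cross term, and bounding each \emph{uniformly in $G$}, the covariate term obeys $\operatorname{tr}(CM_GC')\ge\sum_{s>\widehat r_{n^*}+r_{n^*}}\sigma_s^2(C)\ge b\,P\,\|\Delta\|^2$ by Assumption~\ref{ass:NonSing} --- note the summation index $\widehat r_{n^*}+r_{n^*}+1$ is exactly the lower limit appearing there --- while the pure-noise term satisfies $\operatorname{tr}(\varepsilon_{(n^*)}M_G\varepsilon_{(n^*)}')\ge\|\varepsilon_{(n^*)}\|_F^2-\|\varepsilon_{(n^*)}\|^2(\widehat r_{n^*}+r_{n^*})$.

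The decisive step, and where I expect the main difficulty, is controlling the cross term $\operatorname{tr}(CM_G\varepsilon_{(n^*)}')=\langle C,\varepsilon_{(n^*)}\rangle-\operatorname{tr}(CGG'\varepsilon_{(n^*)}')$ uniformly over the data-dependent factor space $G$. The direct piece $\langle C,\varepsilon_{(n^*)}\rangle=\sum_k\Delta_k\,\mathrm{vec}(X_{(n^*),k})'\mathrm{vec}(\varepsilon_{(n^*)})$ is $O_p(\|\Delta\|\sqrt P)$ by weak exogeneity (Assumption~\ref{ass:Exog}) and only feeds the parametric rate. The binding piece is the factor-space interaction $\operatorname{tr}(CGG'\varepsilon_{(n^*)}')$: the key is to bound it through the \emph{spectral} norm, $|\operatorname{tr}(CGG'\varepsilon_{(n^*)}')|\le\|C\|_F\,\|\varepsilon_{(n^*)}\|\sqrt{\widehat r_{n^*}+r_{n^*}}$, which is uniform over all rank-$(\widehat r_{n^*}+r_{n^*})$ projections and hence survives the minimisation over $G$. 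With $\|C\|_F=O_p(\|\Delta\|\sqrt P)$ from Assumption~\ref{ass:norms}.(i) and $\|\varepsilon_{(n^*)}\|=O_p(\sqrt{\max\{N,M\}})$ from Assumption~\ref{ass:norms}.(ii), this contributes $O_p(\|\Delta\|\sqrt{P\max\{N,M\}})$. Using the much larger $\|\varepsilon_{(n^*)}\|_F$ here instead would destroy the rate, so keeping the noise in spectral norm throughout is the crux.

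Finally I would assemble. Substituting the bounds into $R(\widehat\beta)\le R(\beta^0)$, the Frobenius noise term $\|\varepsilon_{(n^*)}\|_F^2$ cancels from both sides, leaving the quadratic inequality
\[
 b\,P\,\|\Delta\|^2 \;\lesssim\; \|\Delta\|\sqrt{P\max\{N,M\}} \;+\; \max\{N,M\}
\]
with high probability. Solving this quadratic in $\|\Delta\|$ and using $\max\{N,M\}/P = 1/\min\{N,M\}$ yields $\|\widehat\beta_{(n^*)}^{2D}-\beta^0\|=O_p\big(1/\sqrt{\min\{N,M\}}\big)=O_p\big(1/\sqrt{\min\{N_{n^*},\prod_{n\neq n^*}N_n\}}\big)$, as claimed; both the cross term and the residual noise term $\|\varepsilon_{(n^*)}\|^2(\widehat r_{n^*}+r_{n^*})$ happen to deliver the same rate. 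The argument applies verbatim for each $n^*\in\mathcal{L}$ satisfying Assumption~\ref{ass:multiLrank}.
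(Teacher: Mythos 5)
Your proposal is correct and is essentially the paper's own argument: the paper proves Proposition~\ref{prop:bai} simply by observing that the flattening reduces the problem to the standard linear interactive fixed-effects model and citing the consistency proof of \cite{MoonWeidner2015}, and what you have written is a faithful, self-contained reconstruction of that very proof (profiling via Eckart--Young--Mirsky, the minimiser inequality, the rank-shift to peel off $\mathcal{A}_{(n^*)}$, spectral-norm control of the noise--factor cross term uniformly over the projection, and the quadratic inequality in $\|\Delta\|$). The mapping of Assumptions~\ref{ass:norms}--\ref{ass:NonSing} onto the Moon--Weidner conditions and the resulting rate $1/\sqrt{\min\{N_{n^*},\prod_{n\neq n^*}N_n\}}$ are exactly as intended.
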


Proposition~\ref{prop:bai} follows directly from \cite{MoonWeidner2015} since the flattening procedure reduces the problem to the standard linear interactive fixed-effects model. 
This result only applies to estimates in the dimension(s) that satisfy the low-rank assumption in Assumption~\ref{ass:multiLrank}, i.e., the analyst has chosen the correct dimension to flatten. 
The constraint $\widehat{r}_{n^*}\geq r_{n^*}$ can also be changed to $\widehat{r}_{n^*}\geq c$; however, this is more conservative than required for the statement of the result. 
This constraint does not require knowledge of $r_{n^*}$, just that the number of estimated factors is greater than or equal the true number. 
The estimation procedure from Proposition~\ref{prop:bai} can also be augmented to flatten over multiple indices, but makes Assumption~\ref{ass:multiLrank} harder to satisfy. 
For example, take the tensor $\boldsymbol{\mathcal{A}}$ flattened over the first two indices as $\mathcal{A}_{(1,2)}\in \mathbb{R}^{N_1N_2\times\prod_{n\notin\{1,2\}}N_n}$. 
If the parameters $\varphi^{(n)}$ for $n = 3,\dots,d$ are high-dimensional, Assumption~\ref{ass:multiLrank} is only satisfied when both $\varphi^{(1)}$ and $\varphi^{(2)}$ and their product space is low-dimensional. 
However, flattening along multiple dimensions can improve the convergence rate in Proposition~\ref{prop:bai} to 
$O_p\left(\min\{N_{1}N_{2}, \prod_{n\notin\{1,2\}}N_n\}\right)^{-1/2}$.

\subsection{Weighted-within estimator}\label{sect:estimatorsKernel}

Presented here is a simplified version of the estimator, where weights are formed from normed difference over a whole vector. 
This presents a curse of dimensionality that is solved with an iterative version of the estimator. 
Indeed, appendix~\ref{sect:AppendixIterative} presents an iterative version of the estimator that is theoretically more tractable, albeit more complicated. 
The iterated version is a backfitting version of the estimator presented here. 
Simplifying assumptions are made in this section to avoid complexities related to backfitters.

Let $\widehat{\varphi}^{(n)}_{i_n}\in\widehat{\Phi}_n$ generically denote a known or estimated proxy for fixed-effect ${\varphi}^{(n)}_{i_n}$.
Let $\mathcal{W}$ be an ordered set of weight matrices, where the $n$\textsuperscript{th} item ${W}_{n}\in \mathbb{R}^{N_n \times N_n}$ has elements, 
\begin{align}\label{eqn:kernelWeights}
	W_{n,i_ni^\prime_n} := 
    \frac{k\left(\frac{1}{h_n} \norm{\widehat{\varphi}^{(n)}_{i_n} - \widehat{\varphi}^{(n)}_{i^\prime_n}} \right)}
    {\sum_{i^\prime_n = 1}^{N_n} k\left(\frac{1}{h_n} \norm{\widehat{\varphi}^{(n)}_{i_n} - \widehat{\varphi}^{(n)}_{i^\prime_n}}\right)}, 
\end{align}
where $k$ is a kernel function, and $h_n$ is a bandwidth parameter. 
The weighted-within transformation in \eqref{withinGroupEstimator} can be generalised with the following series of $n$-mode products, 
\begin{align*}
    \check{\mathbf{Y}}:=  \mathbf{Y}\times_1 M_1 \times_2 M_2\times_3\dots\times_d M_d, 
\end{align*}
likewise also on each $\mathbf{X}_k$, where $M_n = \mathbb{I}_{N_n} - W_{n}$ and $\times_n$ is the $n$-mode product defined in Section~\ref{sect:notation}. 
Define the weighted-within estimator as, 
\begin{align*}
    \widehat{\beta}_{\mathcal{W}}: = 
    \left(\frac{1}{\prod_n N_n}\sum_{i_1}\dots\sum_{i_d}\check{X}_{i_1,\dots ,i_d}\check{X}_{i_1,\dots, i_d}^\prime\right)^{-1}
    \frac{1}{\prod_n N_n}\sum_{i_1}\dots\sum_{i_d}\check{X}_{i_1,\dots ,i_d}\check{Y}_{i_1,\dots, i_d}.
\end{align*}

Proxies, $\widehat{\varphi}^{(n)}_{i_n}$, estimated from matrix method in Section~\ref{sect:Matrixestimation}  are considered for inference in Section~\ref{sect:inference}, and is a more challenging setting than when proxies are observed.
Further discussion of proxy estimation is relegated to Section~\ref{sect:DiscussionProxyEst}.

\begin{assumption}[Kernels]\label{ass:Kernels}
Kernel function, $k(\cdot)$,
\begin{enumerate*}[(i).,series = tobecont, itemjoin = \quad]
    \item $k(u)\geq 0$
    \item $\int uk(u)du = 0$\\
    \item $\int k(u)du = 1$
    \item 
    $\int u^2 k(u)du <\infty$
    \item $k(u) = 0$ for $u>U$, $U$ bounded
\end{enumerate*}
\end{assumption}
Assumption~\ref{ass:Kernels}.(i)-(iv) are standard kernel function restrictions. 
Assumption~\ref{ass:Kernels}.(v) regularises the kernel function to decay sufficiently quickly.

\begin{assumption}[Regularity conditions]~
\label{ass:regCondKer}
Let $\check{T}_{i_1,\dots ,i_d}$ be the entries of tensor $\mathbf{T}$ 
after the weighted fixed-effects are differenced out.
Then, where $f_{\widehat{\Phi}_n}(\widehat\varphi_n)$ are proxy densities,
\begin{enumerate}[(i).]
    \item $\left(\frac{1}{\prod_n N_n}\sum_{i_1}\dots\sum_{i_d}\check{X}_{i_1,\dots ,i_d}\check{X}_{i_1,\dots, i_d}^\prime\right) $ converges in probability to a nonrandom positive definite matrix as
    $N_1,\dots, N_d \rightarrow \infty$. 
    \item $\frac{1}{\prod_n N_n}\sum_{i_1}\dots\sum_{i_d}\check{X}_{i_1,\dots ,i_d}\check{\varepsilon}_{i_1,\dots, i_d}= O_p\left(\frac{1}{\sqrt{\prod_nN_n}}\right)$. 
    \item 
    For all $n = 1,\dots, d$: 
    $f_{\widehat{\Phi}_n}(\widehat{\varphi}_{i_{n}}^{({n})}) >0 \,\,\forall \widehat{\varphi}_{i_{n}}^{({n})}\in\widehat\Phi_n$; 
    $\sup|f_{\widehat{\Phi}_n}''| <\infty$.
\end{enumerate}
\end{assumption}
Assumption~\ref{ass:regCondKer}.(i) is analogous to Assumption~\ref{ass:NonSing}, appropriated to the weighted-within projection. 
This can be verified if $\mathbf{X}$ admits a component with sufficiently independent variation across $i_1,\dots,i_d$, e.g. an i.i.d. noise component. 
Assumption~\ref{ass:regCondKer}.(ii) requires weak exogeneity in the covariates after the weighted-within transformation, which can be viewed as similar to Assumption~\ref{ass:Exog}. 
Assumption~\ref{ass:regCondKer}.(ii) is achieved with weakly dependent distributed errors and a sample split, similar to \cite{freeman2023linear}, see Section~\ref{sect:AppendixSupp} including Lemma~\ref{lemma:regCondKer}. 
It is, however, a strong simplifying assumption in view of the curse of dimensionality when kernel functions are evaluated over normed vectors, not scalars. 
Assumption~\ref{ass:regCondKer}.(ii) could be stated as $O_p(\prod_nN_n^{-3/4}h_n^{-L/2})$ to account for the $L$-dimensional vector.
However, the backfitting approach in Appendix~\ref{sect:AppendixIterative} alleviates this succinctly, so this complication is ignored here.

\begin{proposition}[Upper bound on kernel weighted estimator]\label{lemma:consKernel}
    Let Assumptions~\ref{ass:Kernels}-\ref{ass:regCondKer} hold. 
    \\
    Let 
    $\frac{1}{N_{n^*}}\sum_{i_{n^*}}\left\| \varphi_{i_{n^*}}^{({n^*})} - \widehat{\varphi}_{i_{n^*}}^{({n^*})} \right\|^2 = O_p(C_{n^*}^{-2})$ for ${n^*}\in\mathcal{M}^\prime$ with $C_{n^*}^{-2}\rightarrow 0$,
    and 
    \\
    $\frac{1}{N_{n^\prime}}\sum_{i_{n^\prime}}\left\| \varphi_{i_{n^\prime}}^{({n^\prime})} - \widehat{\varphi}_{i_{n^\prime}}^{({n^\prime})} \right\|^2 = O_p(1)$
    for $n^\prime \notin \mathcal{M}^\prime$, 
    where $\mathcal{M}^\prime$ is a non-empty subset of dimensions.
    Then, with $N_nh_n \rightarrow \infty$ for all $n$, 
    \begin{align*}
        \norm{\widehat{\beta}_{\mathcal{W}} - \beta^0} 
        = 
        \sqrt{L}
        O_p\left(\prod_{n^*\in\mathcal{M}^\prime}
        {O_p\left({C_{n^*}^{-1}} \right)
        + 
        O_p\left(h_{n^*}\right)}
        \right) 
        + 
        O_p\left(\prod_{n = 1}^d\frac{1}{\sqrt{N_{n}}} \right)
        .
    \end{align*}
    For $h_n \sim O(C_n^{-1})$ this reduces to 
    \begin{align*}
        \norm{\widehat{\beta}_{\mathcal{W}} - \beta^0} 
        = 
        \sqrt{L}
        O_p\left(\prod_{n^*\in\mathcal{M}^\prime}
        {O_p\left({C_{n^*}^{-1}} \right)}
        \right) 
        + 
        O_p\left(\prod_{n = 1}^d\frac{1}{\sqrt{N_{n}}} \right)
        .
    \end{align*}
\end{proposition}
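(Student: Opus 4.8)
The plan is to treat $\widehat{\beta}_{KER,\mathcal{W}}$ as a pooled OLS estimator on transformed data and to decompose its sampling error into a noise component and a fixed-effect bias component, matching the two summands of the claimed bound. Write $M_n = \mathbb{I}_{N_n} - W_n$ and let a check denote the weighted-within transform $\check{\mathbf{T}} = \mathbf{T}\times_1 M_1 \times_2\cdots\times_d M_d$. Since this transform is linear, the transformed regression reads $\check{Y}_{i_1,\dots,i_d} = \sum_k \check{X}_{i_1,\dots,i_d;k}\beta_k^0 + \check{\mathcal{A}}_{i_1,\dots,i_d} + \check{\varepsilon}_{i_1,\dots,i_d}$, so that
\begin{align*}
\widehat{\beta}_{KER,\mathcal{W}} - \beta^0 = \Big(\tfrac{1}{\prod_n N_n}\sum_{i_1,\dots,i_d}\check{X}\,\check{X}^\prime\Big)^{-1}\Big(\tfrac{1}{\prod_n N_n}\sum_{i_1,\dots,i_d}\check{X}\,\check{\varepsilon} + \tfrac{1}{\prod_n N_n}\sum_{i_1,\dots,i_d}\check{X}\,\check{\mathcal{A}}\Big).
\end{align*}
By Assumption~\ref{ass:regCondKer}.(i) the Gram matrix converges to a positive definite limit, so its inverse is $O_p(1)$ and it suffices to bound the two score terms separately.

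For the noise term I would invoke Assumption~\ref{ass:regCondKer}.(ii), the high-level exogeneity condition that delivers $O_p\big((\prod_n N_n)^{-1/2}\big)$ for the covariate--noise moment after transformation and that is designed precisely to absorb any dependence between $\check{X}$ and $\varepsilon$ induced through estimated weights; the only remaining point is that the second application of the averaging operators does not inflate this order, which holds because each $W_n$ is row-stochastic, so $\check{\varepsilon} = \varepsilon - (\text{local averages of }\varepsilon)$ and the local averages of the mean-zero noise are asymptotically negligible relative to the leading term. This gives the parametric summand $O_p\big(\prod_n N_n^{-1/2}\big)$. For the bias term the key algebraic fact is that the $n$-mode products distribute over the outer-product structure of $\boldsymbol{\mathcal{A}} = \sum_{\ell=1}^L \varphi^{(1)}_\ell\circ\cdots\circ\varphi^{(d)}_\ell$, giving
\begin{align*}
\check{\boldsymbol{\mathcal{A}}} = \sum_{\ell=1}^L \big(M_1\varphi^{(1)}_\ell\big)\circ\cdots\circ\big(M_d\varphi^{(d)}_\ell\big).
\end{align*}
Cauchy--Schwarz bounds $\tfrac{1}{\prod_n N_n}\sum\check{X}\,\check{\mathcal{A}}$ by $\big(\tfrac{1}{\prod_n N_n}\sum\check{X}^2\big)^{1/2}\big(\tfrac{1}{\prod_n N_n}\|\check{\boldsymbol{\mathcal{A}}}\|_F^2\big)^{1/2}$, the first factor being $O_p(1)$ again by Assumption~\ref{ass:regCondKer}.(i). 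Because the Frobenius norm of an outer product factorises, $\tfrac{1}{\prod_n N_n}\|(M_1\varphi^{(1)}_\ell)\circ\cdots\circ(M_d\varphi^{(d)}_\ell)\|_F^2 = \prod_n \tfrac{1}{N_n}\norm{M_n\varphi^{(n)}_\ell}^2$, and the triangle inequality for the $L$-term sum contributes the factor $L$ (hence $\sqrt{L}$ after the square root), the leftover sum over $\ell$ being absorbed into $O_p$ since $L$ is fixed.

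The crux is the per-dimension local-averaging lemma: for $n^*\in\mathcal{M}^\prime$ I must show $\tfrac{1}{N_{n^*}}\norm{M_{n^*}\varphi^{(n^*)}_\ell}^2 = O_p\big((C_{n^*}^{-1}+h_{n^*})^2\big)$, whereas for $n^\prime\notin\mathcal{M}^\prime$ only the crude $\tfrac{1}{N_{n^\prime}}\norm{M_{n^\prime}\varphi^{(n^\prime)}_\ell}^2 = O_p(1)$ is available, which is exactly why the product in the statement runs only over $\mathcal{M}^\prime$. For the sharp bound I would decompose the coordinate deviation, using $\sum_{i^\prime}W_{n^*,i i^\prime}=1$, as
\begin{align*}
\varphi^{(n^*)}_{i} - \sum_{i^\prime}W_{n^*,i i^\prime}\varphi^{(n^*)}_{i^\prime} = \big(\varphi^{(n^*)}_{i}-\widehat{\varphi}^{(n^*)}_{i}\big) - \sum_{i^\prime}W_{n^*,i i^\prime}\big(\varphi^{(n^*)}_{i^\prime}-\widehat{\varphi}^{(n^*)}_{i^\prime}\big) + \sum_{i^\prime}W_{n^*,i i^\prime}\big(\widehat{\varphi}^{(n^*)}_{i}-\widehat{\varphi}^{(n^*)}_{i^\prime}\big).
\end{align*}
The first two are proxy-error terms with root-mean-square $O_p(C_{n^*}^{-1})$ by hypothesis (the weighted average inheriting the rate because $W_{n^*}$ is row-stochastic). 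The third is a pure self-smoothing term: since the weight on $i^\prime$ is proportional to $k(\norm{\widehat{\varphi}^{(n^*)}_i-\widehat{\varphi}^{(n^*)}_{i^\prime}}/h_{n^*})$, the displacement is effectively capped at order $h_{n^*}$; formally $\sum_{i^\prime}W_{n^*,i i^\prime}\norm{\widehat{\varphi}^{(n^*)}_i-\widehat{\varphi}^{(n^*)}_{i^\prime}} = h_{n^*}\,\frac{\sum_{i^\prime}k(u_{i^\prime})u_{i^\prime}}{\sum_{i^\prime}k(u_{i^\prime})}$ with $u_{i^\prime}=\norm{\widehat{\varphi}^{(n^*)}_i-\widehat{\varphi}^{(n^*)}_{i^\prime}}/h_{n^*}$, and the ratio is $O_p(1)$ because $\int u\,k(u)\,du<\infty$ (Assumption~\ref{ass:Kernels}.(iii)--(iv)) while the denominator stays away from degeneracy by Assumption~\ref{ass:sampling} and Lemma~\ref{remark:sampling}. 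Hence the third term is $O_p(h_{n^*})$, and combining the three gives the $(C_{n^*}^{-1}+h_{n^*})$ rate. Assembling the Gram inverse, the noise bound, and the factorised bias bound delivers the first display; the second follows by replacing $h_{n^*}$ with $O(C_{n^*}^{-1})$ when $h_{n^*}\lesssim C_{n^*}^{-1}$.

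The main obstacle is precisely this self-smoothing step: controlling $\sum_{i^\prime}W_{n^*,i i^\prime}(\widehat{\varphi}^{(n^*)}_i-\widehat{\varphi}^{(n^*)}_{i^\prime})$ uniformly enough that its mean square is $O_p(h_{n^*}^2)$ requires ruling out sparse neighbourhoods where the kernel denominator degenerates and almost all the weight concentrates on a handful of observations. This is what Assumption~\ref{ass:sampling} guarantees, namely order-$N_{n^*}$ non-empty neighbourhoods \emph{wpa1} so that only a negligible fraction of coordinates degenerate, but the delicate part is showing that the exceptional coordinates contribute negligibly \emph{on average} rather than merely pointwise, which is exactly where the kernel tail condition in Assumption~\ref{ass:Kernels}.(iv) does the real work.
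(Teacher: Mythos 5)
Your proposal follows essentially the same route as the paper's proof: the same OLS decomposition into a parametric noise term (handled by Assumption~\ref{ass:regCondKer}) and a fixed-effect bias term, the same Cauchy--Schwarz plus outer-product factorisation reducing the bias to per-dimension quantities, and the same three-way triangle-inequality split into two proxy-error terms of order $O_p(C_{n^*}^{-2})$ and a self-smoothing term of order $O_p(h_{n^*}^2)$. The only substantive difference is in how the self-smoothing term is made rigorous: where you appeal to a first-moment bound on the kernel, the paper executes a dyadic annulus decomposition over $\|\widehat{\varphi}_{i_n}^{(n)}-\widehat{\varphi}_{j_n}^{(n)}\|\in(mh_n,(m+1)h_n)$ and uses Assumption~\ref{ass:Kernels}.(iv) to sum the annuli --- which is precisely the "delicate part" you correctly flag at the end.
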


Proposition~\ref{lemma:consKernel} shows convergence for the kernel estimator is bounded by convergence of the proxy estimates. 
Smaller bandwidths $h_n$ lead to higher finite sample variance, and also make Assumption~\ref{ass:regCondKer}.(i)-(ii) more difficult to satisfy, hence it is set no smaller than $O(C_n^{-1})$. 
Conditions in Section~\ref{sect:inference}, and discussion in Section~\ref{sect:DiscussionProxyEst} suggest $O_p(C_{n^*}^{-1})$ can be $O_p\big(1/\sqrt{N_{n^*}}\big)$ under regularity conditions imposed in \cite{Bai2009}. 
Hence, the parametric rate is attainable if $\mathcal{M}^\prime = \{1,\dots,d\}$ and $L$ is fixed and bounded. 
Implicit in the conditions for $C_{n^*}^{-1} = 1/\sqrt{N_{n^*}}$ is that the correct multilinear rank of the interactive fixed-effects is known, at least for the dimensions $\mathcal{M}^\prime$. 
This can likely be relaxed to the case where the upper bound on the multilinear rank is known, which would follow from the results in \cite{MoonWeidner2015} used in Proposition~\ref{prop:bai}, but left for further research.

\subsection{Neyman Orthogonal Estimator}\label{sect:inference}

This section establishes asymptotic normality for the main estimator, named inference corrected estimator.\footnote{The procedures here may not in general translate to the matrix low-rank approximation estimator, since convergence rates can be too slow for that estimator. }
Proposition~\ref{lemma:consKernel} establishes an upper bound on the kernel weighted fixed-effect estimator convergence rate that can be refined to exactly the parametric rate for the fixed-effect asymptotic bias component. 
To ensure the bias from the fixed-effect term converges sufficiently quickly a further correction is used, proposed here.  

Consider again conditional expectations, $\mathbb{E}\left(\boldsymbol{Y}|\boldsymbol{\mathcal{A}}\right) = \Gamma_{Y}$, $\mathbb{E}\left(\boldsymbol{X}|\boldsymbol{\mathcal{A}}\right) = \Gamma_{X}$, such that, 
\begin{align}\label{eqn:inference}
    \boldsymbol{X} = \boldsymbol{\Gamma}_{X} + \boldsymbol\eta,
    \quad\quad 
    \boldsymbol{\Gamma}_{Y} = \boldsymbol{\Gamma}_{X}\cdot\beta^0 + \boldsymbol{\mathcal{A}},
    \quad\quad 
    \boldsymbol{Y} - \boldsymbol{\Gamma}_{Y} = (\boldsymbol{X} - \boldsymbol{\Gamma}_{X})\cdot\beta^0 + \boldsymbol{\varepsilon},
\end{align}
with $\mathbb{E}\left(\boldsymbol\eta|\boldsymbol{\mathcal{A}}\right) = 0$, $\mathbb{E}\left(\boldsymbol{\varepsilon}|\boldsymbol{X},\boldsymbol{\mathcal{A}}\right) = 0$. 
As before, the display for $\boldsymbol{X}$ is general, in that if $\boldsymbol{X}$ is unrelated to $\boldsymbol{\mathcal{A}}$, $\boldsymbol{\Gamma}_{X} = 0$.  
However, for identification $\boldsymbol\eta$ must admit positive sum of squares. 

Repeated here, the (infeasible) inference corrected estimator, $\widehat\beta_{IC}^{(infeasible)}$, is
\begin{align*}
    \widehat\beta_{IC}^{(infeasible)} = \left({\rm vec}_K(\boldsymbol{X} - \boldsymbol{\Gamma}_{X})^\prime {\rm vec}_K(\boldsymbol{X} - \boldsymbol{\Gamma}_{X}) \right)^{-1} \left({\rm vec}_K(\boldsymbol{X} - \boldsymbol{\Gamma}_{X})^\prime {\rm vec}(\boldsymbol{Y} - \boldsymbol{\Gamma}_{Y}) \right),
\end{align*}
where ${\rm vec}_K(\boldsymbol{X} - \boldsymbol{\Gamma}_{X})\in \mathbb{R}^{\prod_n N_n \times K}$ is shorthand for the matrix of vectorised covariates, with each column the vectorised $(\boldsymbol{X}_k - \boldsymbol{\Gamma}_{X_k})$. The ${\rm vec}$ notation for $(\boldsymbol{Y} - \boldsymbol{\Gamma}_{Y})$ is the standard vectorisation. 
This is so far infeasible because ${\boldsymbol{\Gamma}}_{X}$, and ${\boldsymbol{\Gamma}}_{Y}$ need to be estimated. 

Consider estimators for ${\boldsymbol{\Gamma}}_{X}$, and ${\boldsymbol{\Gamma}}_{Y}$ as $\widehat{\boldsymbol{\Gamma}}_{X}$, and $\widehat{\boldsymbol{\Gamma}}_{Y}$, respectively, and $\widehat{\Omega}_X = N^{-1}{\rm vec}_K(\boldsymbol{X} - \widehat{\boldsymbol{\Gamma}}_{X})^\prime {\rm vec}_K(\boldsymbol{X} - \widehat{\boldsymbol{\Gamma}}_{X})$. 
Then,  with $\widehat{\boldsymbol{\eta}}:= \boldsymbol{X} - \widehat{\boldsymbol{\Gamma}}_{X}$,
\begin{align}
    \widehat{\Omega}_X^{-1}N^{-1}{\rm vec}_K(\widehat{\boldsymbol{\eta}})^\prime 
    &{\rm vec}(\boldsymbol{Y} - \widehat{\boldsymbol{\Gamma}}_{Y}) 
    \nonumber
    = 
    \widehat{\Omega}_X^{-1}N^{-1}{\rm vec}_K(\widehat{\boldsymbol{\eta}})^\prime 
    {\rm vec}(\boldsymbol{X}\cdot\beta^0 + \boldsymbol{\mathcal{A}} + \boldsymbol{\varepsilon} - \widehat{\boldsymbol{\Gamma}}_{X}\cdot{\beta^0} - \widehat{\boldsymbol{\mathcal{A}}})
    \nonumber
    \\
    &=
    \beta^0 + 
    \widehat{\Omega}_X^{-1}N^{-1}
    {\rm vec}_K(\widehat{\boldsymbol{\eta}})^\prime
    \big(
    {\rm vec}
    (\boldsymbol{\mathcal{A}}  - \widehat{\boldsymbol{\mathcal{A}}})
    +
    {\rm vec}(\boldsymbol{\varepsilon})
    \big)
    \nonumber
    \\
    &=
    \label{eqn:InfError}
    \beta^0 + \widehat{\Omega}_X^{-1}N^{-1}
    {\rm vec}_K(\widehat{\boldsymbol{\eta}})^\prime{\rm vec}(\boldsymbol{\mathcal{A}}  - \widehat{\boldsymbol{\mathcal{A}}})
    + \widehat{\Omega}_X^{-1}N^{-1}{\rm vec}_K(\widehat{\boldsymbol{\eta}})^\prime{\rm vec}(\boldsymbol{\varepsilon}),
\end{align}
where $\widehat{\boldsymbol{\mathcal{A}}}$ is the preliminary estimator of $\boldsymbol{\mathcal{A}}$ used to form $\widehat{\boldsymbol{\Gamma}}_{Y}$. 
If $\widehat{\Omega}_X^{-1}N^{-1}{\rm vec}(\boldsymbol{\mathcal{A}}  - \widehat{\boldsymbol{\mathcal{A}}}) = o_p\left({\prod_{n = 1}{N_n^{-1/2}}}\right)$, then standard asymptotic normality arguments follow. 
Hence, convergence rates on $({\boldsymbol{\Gamma}}_{X}
- \widehat{\boldsymbol{\Gamma}}_{X})$, and $(\boldsymbol{\mathcal{A}}  - \widehat{\boldsymbol{\mathcal{A}}})$ are studied. 

From Section~\ref{sect:estimatorsKernel}, with the kernel weighted estimator, $\Big(\big(\prod_{n = 1}{N_n^{-1}}\big){\rm vec}(\boldsymbol{\mathcal{A}}  - \widehat{\boldsymbol{\mathcal{A}}})^\prime {\rm vec}(\boldsymbol{\mathcal{A}}  - \widehat{\boldsymbol{\mathcal{A}}})\Big)^{1/2} = O_p(\prod_n h_n)$, where $h_n \gtrsim N_n^{-1/2}$ is the bandwidth used for dimension $n$. 
Define, 
\begin{align}\label{eqn:RateDefn}
    \frac{1}{N}{\rm vec}_K({\boldsymbol{\Gamma}}_{X} - \widehat{\boldsymbol{\Gamma}}_{X})^\prime{\rm vec}(\boldsymbol{\mathcal{A}}  - \widehat{\boldsymbol{\mathcal{A}}}) = O_p(\xi_{X\mathcal{A}}),
    \\ \nonumber
    \frac{1}{N}\Big\|{\rm vec}_K({\boldsymbol{\Gamma}}_{X} - \widehat{\boldsymbol{\Gamma}}_{X})^\prime{\rm vec}_K({\boldsymbol{\Gamma}}_{X} - \widehat{\boldsymbol{\Gamma}}_{X})\Big\|^2 = O_p(\xi_{{X}}^2),
    &&
    \frac{1}{N}{\rm vec}(\boldsymbol{\mathcal{A}}  - \widehat{\boldsymbol{\mathcal{A}}})^\prime{\rm vec}(\boldsymbol{\mathcal{A}}  - \widehat{\boldsymbol{\mathcal{A}}}) = O_p(\xi_{\mathcal{A}}^2). 
\end{align}
Under Assumption~\ref{ass:regCondKer}, $\widehat{\Omega}_X^{-1} = O_p(1)\mathbbm{1}_{K\times K}$, and iid $\boldsymbol\varepsilon, \boldsymbol\eta$ it can be shown,
\begin{align*}
    \sqrt{N}(\widehat{\beta}_{IC} -\beta^0) = 
    \sqrt{N}\left(
    O_p(\xi_{X\mathcal{A}})
    + N^{-1/2}O_p(\xi_{\mathcal{A}} + \xi_{{X}})
    + \widehat{\Omega}_X^{-1}N^{-1}{\rm vec}_K({\boldsymbol{\eta}})^\prime{\rm vec}(\boldsymbol{\varepsilon})
    \right),
\end{align*}
and under weak dependence structures in $\boldsymbol\varepsilon, \boldsymbol\eta$, 
\begin{align*}
    \sqrt{N}(\widehat{\beta}_{IC} -\beta^0) = 
    \sqrt{N}\left(
    O_p(\xi_{X\mathcal{A}})
    + N^{-1/4}O_p(\xi_{\mathcal{A}} + \xi_{{X}})
    + \widehat{\Omega}_X^{-1}N^{-1}{\rm vec}_K({\boldsymbol{\eta}})^\prime{\rm vec}(\boldsymbol{\varepsilon})
    \right).
\end{align*}
By Cauchy-Schwarz  $\xi_{X\mathcal{A}} \leq \xi_\mathcal{A}\xi_{{X}}$ which is $O_p(N^{-1/2}\xi_{{X}})$ by Proposition~\ref{lemma:consKernel}. 
Hence, under iid noise terms, $O_p(\xi_{X\mathcal{A}}) = o_p(N^{-1/2})$ and $\{\xi_{\mathcal{A}}, \xi_{{X}}\} = o_p(1)$, asymptotic bias is sufficiently small. 
Under weak dependence, $O_p(\xi_{X\mathcal{A}}) = o_p(N^{-1/2})$, and $\{\xi_{\mathcal{A}}, \xi_{{X}}\} = o_p(N^{-1/4})$, is required. 
Proposition~\ref{lemma:consKernel} establishes the upper bound on convergence for $\xi_{\mathcal{A}}$ can be $O_p(N^{-1/2})$, such that for any $\xi_X = o_p(N^{-1/4})$, $o_p(N^{-1/2})$ convergence rate for asymptotic bias is achieved. 

Dependence between either $\boldsymbol\eta$, respectively $\boldsymbol{\varepsilon}$, and $\widehat{\boldsymbol{\Gamma}}_{X}$ or $\widehat{\boldsymbol{\mathcal{A}}}$ can occur because the estimator $\widehat{\boldsymbol{\Gamma}}_{X}$, and/or $\widehat{\boldsymbol{\mathcal{A}}}$, may be functions of $\boldsymbol\eta$, respectively $\boldsymbol{\varepsilon}$. 
To break this dependence, a simple sample splitting procedure can be implemented, 
e.g., as outlined in \cite{freeman2023linear}, which is transferable to the independently distributed error setting. 

Below regularity conditions from \cite{Bai2009,BaiNg2002} ensure estimates of the fixed-effects used as proxies in the kernel weights converge at the sufficient rate. 

\begin{assumption}[\cite{Bai2009} conditions]\label{ass:bai2009}
Let $N:= \prod_{n = 1}{N_n}$. 
\begin{enumerate*}[(i).,series = tobecont, itemjoin = \quad]
    \item $\mathbb{E}\|X_{i_1\dots i_d}\|^4$ bounded.
    \\
    \item For each $n=1,\dots, d$, $\mathbb{E}\|\varphi_{i_n}^{(n)}\|^4$ bounded, and $\frac{1}{N_n}\sum_{i_n = 1}^{N_n}\varphi_{i_n}^{(n)}\varphi_{i_n}^{(n)\,\prime}$ converges to an $r_n\times r_n$ p.d. matrix as $N_n\rightarrow\infty$. 
    \item $\mathbb{E}\varepsilon_{i_1\dots i_d} = 0$, $\mathbb{E}(\varepsilon_{i_1\dots i_d})^{8}$ bounded. 
    \\
    \item $\mathbbm{E}\varepsilon_{i_1^{\,}\dots i_d^{\,}}
            \varepsilon_{i_1^{\prime}\dots i_d^{\prime}} := \tau_{i_1^{\,}\dots i_d^{\,}; i_1^{\prime}\dots i_d^{\prime}}$, and 
            $N^{-1}\sum_{i_1^{\,}\dots i_d^{\,}} \sum_{i_1^{\prime}\dots i_d^{\prime}} |\tau_{i_1^{\,}\dots i_d^{\,}; i_1^{\prime}\dots i_d^{\prime}}|$ bounded. 
    \item $\varepsilon_{i_1\dots i_d}$ independent of $X_{i_1^\prime\dots i_d^\prime}$ and $\varphi_{i_n^\prime}^{(n)}$ for all $i_1\dots i_d$ and $i_1^\prime\dots i_d^\prime$, and $n=1,\dots, d$. 
\end{enumerate*}
\begin{enumerate}[(i).,resume = tobecont, ,topsep = 0pt, partopsep = 0pt]
    \item For $\sigma_{i_n,j_n,k_n,m_n}^{(n)} := cov(\varepsilon_{i_1,\dots,i_n,\dots,i_d}\varepsilon_{i_1,\dots,j_n,\dots,i_d},\varepsilon_{i_1,\dots,k_n,\dots,i_d}\varepsilon_{i_1,\dots,m_n,\dots,i_d})$, for all $n=1,\dots,d$;  
    \begin{align*}
        N_n^{-2}\prod_{n^\prime \neq n} N_{n^\prime}^{-1} \sum_{i_n,j_n,k_n,m_n} \sum_{i_{n^\prime}, n^\prime\neq n} |\sigma_{i_n,j_n,k_n,m_n}^{(n)}| \leq M <\infty
    \end{align*}
\end{enumerate}
\end{assumption}

\begin{corollary}\label{cor:bai2009}
    If Assumption~\ref{ass:bai2009}, and assumptions in Proposition~\ref{prop:bai} hold then: \\ $C_n^{-2} = 1/\min\{N_n, \prod_{n^\prime\neq n} N_{n^\prime}\}$ for each $n = 1,\dots,d$ if the \cite{Bai2009} estimator is used in the matrix low-rank setting from Section~\ref{sect:Matrixestimation} separately for each $n=1,\dots,d$.\footnote{See Proposition A.1 in \cite{Bai2009} appendix}
\end{corollary}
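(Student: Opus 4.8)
The plan is to obtain the corollary as a direct application of the factor-estimation consistency result of \cite{Bai2009} to each flattened panel, so that the proxy rate $C_n^{-2}$ required by Proposition~\ref{lemma:consKernel} is pinned down explicitly. First I would fix a dimension $n$ and flatten the tensor along it to recover the two-dimensional model $Y_{(n)} = X_{(n)}^\prime \beta^0 + \varphi^{(n)} \Gamma_n^\prime + \varepsilon_{(n)}$ from Section~\ref{sect:Matrixestimation}. This is precisely the interactive fixed-effects model of \cite{Bai2009}, with the row dimension $N_n$ acting as the cross-section, the column dimension $T_n := \prod_{n^\prime\neq n} N_{n^\prime}$ acting as the time series, $\varphi^{(n)}$ as the $N_n\times r_n$ matrix of loadings, and $\Gamma_n$ as the $T_n\times r_n$ matrix of factors.

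Next I would verify that Assumption~\ref{ass:bai2009}, which is stated at the level of the multi-indexed tensor, implies the regularity conditions that \cite{Bai2009} requires for this flattened panel. The fourth-moment bounds in Assumption~\ref{ass:bai2009}.(i)--(iii) yield the corresponding moment conditions on covariates, loadings, and errors, and the normalisation in Assumption~\ref{ass:bai2009}.(ii) guarantees that $\frac{1}{N_n}\varphi^{(n)\prime}\varphi^{(n)}$ has a positive-definite $r_n\times r_n$ limit. The dependence bounds in Assumption~\ref{ass:bai2009}.(iv) and (vi), once the nuisance indices are summed out, reduce to the weak cross-sectional and serial correlation restrictions Bai imposes on $\varepsilon_{(n)}$, while Assumption~\ref{ass:bai2009}.(v) supplies the exogeneity of $\varepsilon$ with respect to $X$ and the loadings. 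Together with the conditions inherited from Proposition~\ref{prop:bai}, these reproduce Bai's regularity conditions for the $(n)$-flattening.

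With these conditions in place I would invoke Proposition A.1 of \cite{Bai2009}, which bounds the average squared deviation of the estimated loadings from the truth, up to a rotation matrix $H_n$, as $\frac{1}{N_n}\sum_{i_n}\|\widehat{\varphi}_{i_n}^{(n)} - H_n\varphi_{i_n}^{(n)}\|^2 = O_p(C_{N_n T_n}^{-2})$, with $C_{N_n T_n}^{-2} = 1/\min\{N_n,T_n\} = 1/\min\{N_n,\prod_{n^\prime\neq n} N_{n^\prime}\}$. Identifying the right-hand side with $C_n^{-2}$ and repeating the argument for each $n=1,\dots,d$ then delivers the corollary.

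The only point needing care, rather than a genuine obstacle, is the rotation $H_n$ intrinsic to factor-model identification: Bai's bound controls $\widehat{\varphi}^{(n)}$ only up to $H_n$, whereas Proposition~\ref{lemma:consKernel} is phrased for $\varphi^{(n)}$ itself. This is immaterial here, because the kernel weights in \eqref{eqn:kernelWeights} depend only on the pairwise distances $\|\widehat{\varphi}_{i_n}^{(n)} - \widehat{\varphi}_{j_n}^{(n)}\|$; under the positive-definite limit in Assumption~\ref{ass:bai2009}.(ii) the rotation $H_n$ is asymptotically bounded and invertible, so these distances, and hence the rate feeding into Proposition~\ref{lemma:consKernel}, are preserved up to constants. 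The genuine work is thus the second step, translating the tensor-level dependence bounds into the two-dimensional form Bai requires; the rate itself then drops straight out of the cited proposition.
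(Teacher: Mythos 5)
Your proposal is correct and follows essentially the same route as the paper, which treats the corollary as an immediate application of Proposition A.1 of \cite{Bai2009} to each $(n)$-flattening and disposes of the rotation matrices via the pairwise-distance argument formalised in Lemma~\ref{lemma:LinComboBound}. One small imprecision worth noting: in the presence of covariates Bai's Proposition A.1 actually gives $\frac{1}{N_n}\sum_{i_n}\|\widehat{\varphi}_{i_n}^{(n)} - H_n^\prime\varphi_{i_n}^{(n)}\|^2 = O_p\big(\|\beta^0-\widehat{\beta}\|^2\big) + O_p\big(1/\min\{N_n,\prod_{n^\prime\neq n}N_{n^\prime}\}\big)$ rather than the second term alone, but since the corollary assumes the hypotheses of Proposition~\ref{prop:bai}, the slope-estimation term is of exactly the same order and the stated rate $C_n^{-2} = 1/\min\{N_n,\prod_{n^\prime\neq n}N_{n^\prime}\}$ is unaffected.
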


Below are sampling assumptions to achieve asymptotic normality. 
\begin{assumption}\label{ass:inf3}
Let $\xi_X$, and $\xi_{\mathcal{A}}$ be sequences bounded in probability as $N_n\rightarrow\infty$ for each $n= 1,\dots d$. Maintain $N:= \prod_{n = 1}^d N_n$.
As $N_n\rightarrow\infty$ for each $n= 1,\dots d$, $\{\xi_X,\xi_{\mathcal{A}}\} = o_p(N^{-1/4})$, 
$ \xi_{X{\mathcal{A}}} = o_p(N^{-1/2})$, where $\{\xi_X,\xi_{\mathcal{A}},\xi_{X{\mathcal{A}}}\}$ are defined in \eqref{eqn:RateDefn}.
\end{assumption}

Assumption~\ref{ass:inf3} implies there exists a consistent estimate of $\mathbb{E}\left(\boldsymbol{X}|\boldsymbol{\mathcal{A}}\right)$ at rate $N^{-1/4}$. 
For iid idiosyncratic terms $\boldsymbol\varepsilon, \boldsymbol\eta$ this can be relaxed to consistency at arbitrary rate. 
Hence, in the context of Proposition \ref{lemma:consKernel}, this is not an overly strong assumption on $\widehat{\boldsymbol{\Gamma}}_{X_k}$. 
Indeed, by Proposition \ref{lemma:consKernel} and Corollary~\ref{cor:bai2009} the convergence rates in Assumption~\ref{ass:inf3} are $\xi_{\mathcal{A}} = O_p(1/\sqrt{N})$. 
Assumption~\ref{ass:inf3} details a reprieve from requiring $\xi_{\mathcal{A}} = O_p(1/\sqrt{N})$ if indeed $\xi_X$ converges at a faster rate.

The next assumption restricts $\boldsymbol\eta$ and $\boldsymbol{\varepsilon}$ to be independent of $\widehat{\boldsymbol{\Gamma}}_{X}$ and $\widehat{\boldsymbol{\mathcal{A}}}$. 
As mentioned already, a sample splitting device can achieve this.\footnote{Appendix~\ref{sect:AppendixSupp} details a sample splitting device that can be used for  $\widehat{\boldsymbol{\Gamma}}_{X}$ and $\widehat{\boldsymbol{\mathcal{A}}}$ in the presence of weakly dependent idiosyncratic terms. }
Let $\independent$ denote stochastic independence. 
\begin{assumption}\label{ass:inferenceInd}
Let $\widehat{\boldsymbol{\Gamma}}_{X}$ be the estimate for $\boldsymbol{\Gamma}_{X}$ and $\widehat{\boldsymbol{\mathcal{A}}}$ the estimate for ${\boldsymbol{\mathcal{A}}}$ in \eqref{eqn:inference}. Then, for all $k =1,\dots K$,
    \begin{enumerate*}[(i).,series = tobecont, itemjoin = \quad]
        \item $\boldsymbol{\eta}_{k,i_1\dots i_d}\independent 
        {\boldsymbol{\mathcal{A}}}_{i_1\dots i_d},
        \widehat{\boldsymbol{\mathcal{A}}}_{i_1\dots i_d}$;
        \item $\boldsymbol{\varepsilon}_{i_1\dots i_d}\independent \widehat{\boldsymbol{\Gamma}}_{X_k,i_1\dots i_d}$.
    \end{enumerate*}
\end{assumption}
Independence between $\boldsymbol{\eta}_{k,i_1\dots i_d}$ and  ${\boldsymbol{\mathcal{A}}}_{i_1\dots i_d}$ in Assumption~\ref{ass:inferenceInd}.(i) is made for a cleaner derivation, but can be weakly dependent under further regularity conditions on $\boldsymbol{\eta}_{k,i_1\dots i_d}$. 
Assumption~\ref{ass:inferenceInd}.(ii) implies in \eqref{eqn:InfError}, $\widehat{\Omega}_X^{-1}{\rm vec}_K(\widehat{\boldsymbol{\eta}})^\prime{\rm vec}(\boldsymbol{\varepsilon}) = \widehat{\Omega}_X^{-1}{\rm vec}_K({\boldsymbol{\eta}})^\prime{\rm vec}(\boldsymbol{\varepsilon}) + o_p(N^{-1/2})$. 
\begin{assumption}\label{ass:inf2}
    $\mathbb{E}\left[ {\eta}_{i_1\dots i_d}^{\,} {\eta}_{i_1\dots i_d}^{\prime} \right] $ is
non--singular and $E\left[ \varepsilon_{i_1\dots i_d}|{\eta}_{i_1\dots i_d}\right] =0$.
\end{assumption}
With Proposition~\ref{lemma:consKernel} assumptions, Assumptions~\ref{ass:bai2009}-\ref{ass:inf2}, and with $C_n^{-1} \asymp h_n = O_p(1/\sqrt{N_n})$ for all $n = 1,\dots,d$ and $\mathcal{M} = \{1,\dots,d\}$, maintaining that $N:= \prod_{n = 1}{N_n}$, 
\begin{align*}
    \sqrt{N}(\widehat{\beta}_{IC} -\beta^0) 
    &= 
    o_p(1)
    + 
    \left(\frac{1}{N}\widehat{\Omega}_X\right)^{-1}
    \frac{1}{\sqrt{N}}{\rm vec}_K({\boldsymbol{\eta}})^\prime{\rm vec}(\boldsymbol{\varepsilon})
    \\
    &=
    o_p(1)
    + 
    \left(
    \mathbb{E}\left[ \boldsymbol{\eta}_{i_1\dots i_d}^{\,} \boldsymbol{\eta}_{i_1\dots i_d}^{\prime} \right] 
    + o_p(1)
    \right)^{-1}
    \frac{1}{\sqrt{N}}{\rm vec}_K({\boldsymbol{\eta}})^\prime{\rm vec}(\boldsymbol{\varepsilon})
\end{align*}

In the presence of both heteroskedasticity and correlation in all dimensions, the most flexible specification to be considered here, a central limit theorem is required for
$(N)^{-1/2}{\rm vec}_K({\boldsymbol{\eta}})^\prime{\rm vec}(\boldsymbol{\varepsilon})$. 
In this case the variance is, 
\begin{align*}
    \textrm{var}\left( 
    \frac{1}{\sqrt{N}}\sum_{i_1\dots i_d} {\eta}_{i_1\dots i_d} \varepsilon_{i_1\dots i_d}
    \right)
    = 
    \frac{1}{{N}}
    \sum_{i_1^{\,},i_1^{\prime}} \dots
    \sum_{i_d^{\,} i_d^{\prime}}
    \mathbbm{E}({\eta}_{i_1\dots i_d}^{\,} {\eta}_{i_1^\prime\dots i_d^\prime}^\prime
    )\mathbbm{E}(
    \varepsilon_{i_1\dots i_d} \varepsilon_{i_1^\prime\dots i_d^\prime} ).
\end{align*}

Stronger independent and identically distributed assumptions are considered first, which are sequentially relaxed for stronger results later in the section. 
\begin{assumption}\label{ass:erroriid}
    \begin{enumerate*}[(i).,series = tobecont, itemjoin = \quad]
    \item $\left\{ \varepsilon _{i_1\dots i_d},{\eta}_{i_1\dots i_d}\right\} $, are i.i.d. across $i_1\dots i_d$,
    \\
    \item $\mathbb{E}\big(\varepsilon _{i_1\dots i_d}^2|{\eta}_{i_1\dots i_d}\big)
    =:\sigma_{\varepsilon}^2 \leq M < \infty $.
    \end{enumerate*}
\end{assumption}

\begin{theorem}[Asymptotic distribution under homoskedasticity]\label{thm:AsyNormHom}
        Let the assumptions in Proposition~\ref{lemma:consKernel} hold.
    Additionally, let Assumptions~\ref{ass:bai2009}-\ref{ass:erroriid} hold. Then, for $N_n\rightarrow\infty$, with $N_n \lesssim \prod_{n^\prime\neq n}N_{n^\prime}$, for all $n = 1,\dots,d$, 
    \begin{align*}
        \sqrt{N}\big(\widehat{\beta}_{IC} -\beta^0\big) 
        \xrightarrow[d]{}
        \mathcal{N}\left(0, \sigma_{\varepsilon}^2\Omega_X^{-1}
        \right),
\quad\quad 
\Omega_X:= \plim_{N\rightarrow\infty} \frac{1}{N}\sum_{i_1\dots i_d} {\eta}_{i_1\dots i_d}^{\,} {\eta}_{i_1\dots i_d}^{\prime}
    \end{align*}
\end{theorem}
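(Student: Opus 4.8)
The plan is to show that, under the homoskedasticity and i.i.d. restrictions of Assumption~\ref{ass:erroriid}, the full expansion assembled in the text above collapses to a single normalised score to which a classical central limit theorem applies, after which Slutsky's theorem delivers the stated Gaussian limit. I would take the display immediately preceding the theorem as the starting point. With $C_n^{-1}\asymp h_n = O_p(1/\sqrt{N_n})$ and $\mathcal{M}=\{1,\dots,d\}$, Proposition~\ref{lemma:consKernel} together with Corollary~\ref{cor:bai2009} delivers $\xi_\beta = O_p(N^{-1/2})$ and $\xi_{\mathcal{A}} = O_p(N^{-1/2})$, so the rate products in Assumption~\ref{ass:inf3} give $\sqrt{N}\,\xi_X\xi_\beta = o_p(1)$ and $\sqrt{N}\,\xi_X\xi_{\mathcal{A}} = o_p(1)$. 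This is what forces the bias block $\widehat{\Omega}_X^{-1}N^{-1}B$ in \eqref{eqn:InfError} to be $o_p(N^{-1/2})$, leaving
\[
\sqrt{N}\big(\widehat{\beta}_{IC}-\beta^0\big)
= o_p(1) + \Big(\tfrac{1}{N}\widehat{\Omega}_X\Big)^{-1}\tfrac{1}{\sqrt{N}}{\rm vec}_K(\boldsymbol{\eta})^\prime{\rm vec}(\boldsymbol{\varepsilon}),
\]
where the replacement of $\widehat{\boldsymbol{\eta}}$ by $\boldsymbol{\eta}$ in the score is licensed by Assumption~\ref{ass:inferenceInd}.(iii), exactly as recorded in the text.

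I would then treat the two surviving factors separately. For the sample-covariance factor, a law of large numbers applied to the i.i.d. array $\eta_{i_1\dots i_d}\eta_{i_1\dots i_d}^\prime$ (finite second moments follow from Assumption~\ref{ass:bai2009}.(i), since $\boldsymbol{\eta}=\boldsymbol{X}-\boldsymbol{\Gamma}_X$ with $\boldsymbol{\Gamma}_X$ a conditional expectation) gives $\tfrac{1}{N}\sum\eta\eta^\prime\xrightarrow{p}\mathbb{E}[\eta\eta^\prime]=\Omega_X$; combined with the $\widehat{\boldsymbol{\eta}}\to\boldsymbol{\eta}$ control ($\xi_X=o_p(1)$) and Assumption~\ref{ass:regCondKer}.(i) this yields $(\tfrac{1}{N}\widehat{\Omega}_X)^{-1}\xrightarrow{p}\Omega_X^{-1}$, with nonsingularity of $\Omega_X$ supplied by Assumption~\ref{ass:inf2}. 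For the score, Assumption~\ref{ass:erroriid}.(i) makes $\eta_{i_1\dots i_d}\varepsilon_{i_1\dots i_d}$ an i.i.d. sequence, with mean zero by $\mathbb{E}[\varepsilon\mid\eta]=0$ (Assumption~\ref{ass:inf2}) and per-observation covariance $\mathbb{E}[\eta\eta^\prime\varepsilon^2]=\mathbb{E}[\eta\eta^\prime\,\mathbb{E}(\varepsilon^2\mid\eta)]=\sigma_\varepsilon^2\Omega_X$ using the conditional homoskedasticity of Assumption~\ref{ass:erroriid}.(ii); finiteness of this covariance follows from Cauchy--Schwarz together with the bounded fourth moment of $\boldsymbol{X}$ and eighth moment of $\varepsilon$ in Assumption~\ref{ass:bai2009}. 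The multivariate Lindeberg--L\'evy theorem then gives $\tfrac{1}{\sqrt{N}}\sum\eta\varepsilon\xrightarrow{d}\mathcal{N}(0,\sigma_\varepsilon^2\Omega_X)$.

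Finally I would combine the two limits by Slutsky's theorem: $(\tfrac{1}{N}\widehat{\Omega}_X)^{-1}\xrightarrow{p}\Omega_X^{-1}$ premultiplies the Gaussian score and the $o_p(1)$ remainder is absorbed, producing $\Omega_X^{-1}\mathcal{N}(0,\sigma_\varepsilon^2\Omega_X)=\mathcal{N}(0,\sigma_\varepsilon^2\Omega_X^{-1}\Omega_X\Omega_X^{-1})=\mathcal{N}(0,\sigma_\varepsilon^2\Omega_X^{-1})$, which is the claim. I expect the genuinely delicate step to be not the distributional mechanics, which are routine once the array is i.i.d., but rather the justification that the bias block is $o_p(N^{-1/2})$ and that the $\widehat{\boldsymbol{\eta}}\mapsto\boldsymbol{\eta}$ substitution in the score costs only $o_p(N^{-1/2})$. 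Both hinge on feeding the preliminary rates of Proposition~\ref{lemma:consKernel} and Corollary~\ref{cor:bai2009} through Assumption~\ref{ass:inf3}, and crucially on exploiting the independence in Assumption~\ref{ass:inferenceInd}.(iii) rather than a naive Cauchy--Schwarz bound, which would demand the too-strong $\xi_X=o_p(N^{-1/2})$.
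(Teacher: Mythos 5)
Your proposal is correct and follows essentially the same route as the paper: it starts from the expansion in \eqref{eqn:InfError}, kills the bias block via the rates from Proposition~\ref{lemma:consKernel} and Corollary~\ref{cor:bai2009} fed through Assumption~\ref{ass:inf3}, uses Assumption~\ref{ass:inferenceInd}.(iii) to pass from $\widehat{\boldsymbol{\eta}}$ to $\boldsymbol{\eta}$ in the score, and then applies Lindeberg--L\'evy plus Slutsky under the i.i.d.\ and conditional-homoskedasticity restrictions of Assumptions~\ref{ass:inf2}--\ref{ass:erroriid}. This is exactly the argument the paper sketches in Section~\ref{sect:inference} and carries out in full generality in the proof of Theorem~\ref{thm:AsyNormHetACCC}, of which the homoskedastic case is the specialisation with $\xi_{\sigma,N}=0$ and the CLT supplied classically rather than assumed.
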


Consistent estimation of the variance term is possible under the same set of assumptions. 
Estimate $\widehat{\boldsymbol{\varepsilon}} =  \boldsymbol{Y} - \boldsymbol{X}\cdot\widehat\beta_{IC} - \widehat{\boldsymbol{\mathcal{A}}}$. 
Then, $\frac{1}{N}\sum_{i_1\dots i_d}\widehat{\varepsilon}_{i_1\dots i_d}^2 = \sigma_\varepsilon^2 + o_p(1)$ follows from the consistency of $\widehat{\boldsymbol{\Gamma}}_{Y}$, $\widehat{\boldsymbol{\Gamma}}_{X}$, and $\widehat\beta_{IC}$. 
Likewise $\mathbb{E}\left[ \boldsymbol{\eta}_{i_1\dots i_d}^{\,} \boldsymbol{\eta}_{i_1\dots i_d}^{\prime} \right]$ can be estimated consistently from the sample analog $\frac{1}{N}\sum_{i_1\dots i_d}
\left(\boldsymbol{X}_{i_1\dots i_d} - \widehat{\boldsymbol{\Gamma}}_{X, _{i_1\dots i_d}}\right)
\left(\boldsymbol{X}_{i_1\dots i_d} - \widehat{\boldsymbol{\Gamma}}_{X, _{i_1\dots i_d}}\right)^\prime
$.

Assumptions on the error can be used to satisfy Lyapunov conditions. Instead, a central limit theorem is assumed, as is common in the interactive fixed-effects literature.\footnote{See for example Assumption E in Section 5 in \cite{Bai2009}. }

\begin{assumption}\label{ass:heteroCLT}
    \begin{enumerate}[(i).]
        \item $\varepsilon _{i_1\dots i_d}$ independent across $i_1\dots i_d$. ${\eta}_{i_1\dots i_d}$ are i.i.d. $i_1\dots i_d$.
        \item $\sigma_{i_1\dots i_d}^2 := \mathbb{E}\big(\varepsilon _{i_1\dots i_d}^2|\boldsymbol{\eta}\big)$ is bounded.
        \item For nonrandom positive definite $\Sigma$, 
        \begin{align*}
            \plim_{N\rightarrow\infty} \frac{1}{N}\sum_{i_1\dots i_d} \sigma_{i_1\dots i_d}^2 
            {\eta}_{i_1\dots i_d}^{\,} {\eta}_{i_1\dots i_d}^{\prime}
            = \Sigma,
            \quad\quad\quad
            \frac{1}{\sqrt{N}}\sum_{i_1\dots i_d} 
            {\eta}_{i_1\dots i_d} \varepsilon_{i_1\dots i_d}
            \xrightarrow[]{d}
            \mathcal{N}(0,\Sigma).
        \end{align*}
    \end{enumerate}
\end{assumption}

\begin{theorem}[Asymptotic distribution under heteroskedasticity]\label{thm:AsyNormHet}
        Let the assumptions in Proposition~\ref{lemma:consKernel} hold.
    Additionally, let Assumptions~\ref{ass:bai2009}-\ref{ass:inf2} and Assumption~\ref{ass:heteroCLT} hold. Then, for $N_n\rightarrow\infty$, with $N_n \lesssim \prod_{n^\prime\neq n}N_{n^\prime}$, for all $n = 1,\dots,d$,
    \begin{align*}
        \sqrt{N}\big(\widehat{\beta}_{IC} -\beta^0\big) 
        \xrightarrow[d]{}
        \mathcal{N}\left(0, \Omega_X^{-1}\,\Sigma\,\Omega_X^{-1}
        \right), 
\quad\quad 
\Omega_X:= \plim_{N\rightarrow\infty} \frac{1}{N}\sum_{i_1\dots i_d} {\eta}_{i_1\dots i_d}^{\,} {\eta}_{i_1\dots i_d}^{\prime}.
    \end{align*}
\end{theorem}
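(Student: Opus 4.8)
The plan is to establish that $\widehat{\beta}_{IC}$ admits an asymptotically linear representation whose leading term is a normalised score to which the central limit theorem assumed in Assumption~\ref{ass:heteroCLT}.(iii) applies directly; the proof then differs from the homoskedastic case only in the final distributional step. The starting point is the identity \eqref{eqn:InfError}. Subtracting $\beta^0$ and scaling by $\sqrt{N}$ gives
\[
\sqrt{N}\big(\widehat{\beta}_{IC} -\beta^0\big)
= \widehat{\Omega}_X^{-1}\,\frac{1}{\sqrt{N}}\,B
\;+\;
\widehat{\Omega}_X^{-1}\,\frac{1}{\sqrt{N}}\,{\rm vec}_K(\widehat{\boldsymbol{\eta}})^\prime{\rm vec}(\boldsymbol{\varepsilon}),
\]
so the task reduces to (a) showing the first term is $o_p(1)$, (b) showing $\widehat{\Omega}_X^{-1}\xrightarrow{p}\Omega_X^{-1}$ and that $\widehat{\boldsymbol{\eta}}$ may be replaced by $\boldsymbol{\eta}$ in the score, and (c) applying the assumed CLT.

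First I would control the bias term $B$. Using the rates $\xi_X,\xi_{\beta},\xi_{\mathcal{A}}$ from Assumption~\ref{ass:inf3} together with the restrictions $\xi_X\xi_{\beta}=o_p(N^{-1/2})$ and $\xi_X\xi_{\mathcal{A}}=o_p(N^{-1/2})$, the three bias contributions displayed in the preamble---of orders $O_p(\xi_X\xi_{\beta})$, $O_p(\xi_X\xi_{\mathcal{A}})$, and $O_p(\xi_{\beta}N^{-1/2})$---each vanish after scaling by $\sqrt{N}$. Proposition~\ref{lemma:consKernel} and Corollary~\ref{cor:bai2009} permit $\xi_{\beta},\xi_{\mathcal{A}}=O_p(N^{-1/2})$ under the stated choice $C_n^{-1}\asymp h_n=O_p(1/\sqrt{N_n})$, while $\xi_X=o_p(1)$ suffices for Assumption~\ref{ass:inf3}, so $\widehat{\Omega}_X^{-1}N^{-1/2}B=o_p(1)$.

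Next I would pass from the feasible score to the infeasible one. By Assumption~\ref{ass:inferenceInd}.(iii), $\boldsymbol{\varepsilon}$ is independent of $\widehat{\boldsymbol{\Gamma}}_X$, so the difference ${\rm vec}_K(\widehat{\boldsymbol{\eta}})^\prime{\rm vec}(\boldsymbol{\varepsilon})-{\rm vec}_K(\boldsymbol{\eta})^\prime{\rm vec}(\boldsymbol{\varepsilon})$ involves only $({\boldsymbol{\Gamma}}_X-\widehat{\boldsymbol{\Gamma}}_X)$ paired with $\boldsymbol{\varepsilon}$; since $\xi_X\to 0$ in Assumption~\ref{ass:inf3}.(i), this contributes $o_p(N^{-1/2})$ to the score and hence $o_p(1)$ after scaling. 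In parallel, consistency of $\widehat{\boldsymbol{\Gamma}}_X$ with a law of large numbers and the non-singularity in Assumption~\ref{ass:inf2} gives $\widehat{\Omega}_X\xrightarrow{p}\mathbb{E}[\boldsymbol{\eta}_{i_1\dots i_d}\boldsymbol{\eta}_{i_1\dots i_d}^\prime]=\Omega_X$, which is invertible. Assumption~\ref{ass:inferenceInd}.(i)-(ii) guarantee the estimated nuisances do not induce spurious correlation between $\widehat{\boldsymbol{\eta}}$ and $\boldsymbol{\varepsilon}$ or $\boldsymbol{\mathcal{A}}$.

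Finally, Assumption~\ref{ass:heteroCLT}.(iii) delivers $\tfrac{1}{\sqrt{N}}{\rm vec}_K(\boldsymbol{\eta})^\prime{\rm vec}(\boldsymbol{\varepsilon})\xrightarrow{d}\mathcal{N}(0,\Sigma)$, and Slutsky's theorem with $\widehat{\Omega}_X^{-1}\xrightarrow{p}\Omega_X^{-1}$ yields
\[
\sqrt{N}\big(\widehat{\beta}_{IC} -\beta^0\big)\xrightarrow{d}\Omega_X^{-1}\,\mathcal{N}(0,\Sigma)=\mathcal{N}\big(0,\,\Omega_X^{-1}\Sigma\,\Omega_X^{-1}\big),
\]
the stated sandwich form. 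I expect the main obstacle to be step one: verifying that the preliminary estimation errors in $\widehat{\boldsymbol{\Gamma}}_X$, $\widehat{\boldsymbol{\Gamma}}_Y$, $\widehat{\boldsymbol{\mathcal{A}}}$ and $\widetilde{\beta}$ do not leak into the $\sqrt{N}$-scaled limit. This is precisely where the independence secured by Assumption~\ref{ass:inferenceInd} (implementable via the sample-splitting device) is indispensable; once the bias is controlled and the feasible-to-infeasible replacement is justified, the heteroskedastic statement follows from the homoskedastic argument by substituting the assumed limit variance $\Sigma$ for $\sigma_{\varepsilon}^2\Omega_X$.
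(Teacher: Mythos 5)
Your proposal is correct and follows essentially the same route as the paper: the paper's appendix proves the more general heteroskedastic-and-correlated case (Theorem~\ref{thm:AsyNormHetACCC}) via exactly your decomposition --- bias terms controlled by Assumptions~\ref{ass:inf3}--\ref{ass:inferenceInd}, replacement of $\widehat{\boldsymbol{\eta}}$ by $\boldsymbol{\eta}$ in the score through a mean-zero/vanishing-variance argument, then the assumed CLT plus Slutsky --- and the present theorem is the special case where independence of $\varepsilon_{i_1\dots i_d}$ across observations kills the cross terms in those variance calculations. The only looseness is that you fold the term $N^{-1/2}{\rm vec}_K(\boldsymbol{\eta})^\prime{\rm vec}(\boldsymbol{\mathcal{A}}-\widehat{\boldsymbol{\mathcal{A}}})$ into the generic bias orders rather than giving it the separate conditional mean-zero and variance bound the paper supplies, but the independence in Assumption~\ref{ass:inferenceInd}.(i) that you invoke is precisely what makes that bound go through.
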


\begin{assumption}\label{ass:HeteroCorrCLT}
        For nonrandom positive definite $\widetilde{\Sigma}$, 
        \begin{align*}
            \plim_{N\rightarrow\infty} \frac{1}{N}
            \sum_{i_1^{\,}\dots i_d^{\,}} \sum_{i_1^{\prime}\dots i_d^{\prime}} 
            \varepsilon_{i_1^{\,}\dots i_d^{\,}}
            \varepsilon_{i_1^{\prime}\dots i_d^{\prime}}
            {\eta}_{i_1^{\,}\dots i_d^{\,}}^{\,} {\eta}_{i_1^{\prime}\dots i_d^{\prime}}^{\prime}
            = \widetilde{\Sigma},
            \quad\quad\quad
            \frac{1}{\sqrt{N}}\sum_{i_1\dots i_d} 
            {\eta}_{i_1\dots i_d} \varepsilon_{i_1\dots i_d}
            \xrightarrow[]{d}
            \mathcal{N}(0,\widetilde{\Sigma}).
        \end{align*}
\end{assumption}
Under correlation, independence in Assumption~\ref{ass:inferenceInd} needs a stronger sample split than from \cite{freeman2023linear} -- a novel sample split device is devised in Appendix~\ref{sect:AppendixSupp}. 
\begin{assumption}\label{ass:infHighLevel}
    Define 
        $\sigma_{i_1\dots i_d; i_1^\prime\dots i_d^\prime}
        :=
        \mathbbm{E}(\varepsilon_{i_1\dots i_d} \varepsilon_{i_1^\prime\dots i_d^\prime} )$ 
        and 
        $\sigma_{k;i_1\dots i_d; i_1^\prime\dots i_d^\prime}
        :=
        \mathbbm{E}(\eta_{k:i_1\dots i_d} \eta_{k:i_1^\prime\dots i_d^\prime} )$.
        For finite $M<\infty$ such that, $\plim_{N\rightarrow\infty} \xi_{\sigma,N}^2
         \leq M$, $\plim_{N\rightarrow\infty} \xi_{\sigma_k,N}^2
        \leq M$ where,   
        $$\xi_{\sigma,N}^2 := \frac{1}{N} \sum_{i_1^\prime\dots i_d^\prime\neq i_1\dots i_d}
        \sum_{i_1\dots i_d}
        \sigma_{i_1\dots i_d; i_1^\prime\dots i_d^\prime}^2 ,
        \quad \quad
        \xi_{\sigma_k,N}^2 := \frac{1}{N} \sum_{i_1^\prime\dots i_d^\prime\neq i_1\dots i_d}
        \sum_{i_1\dots i_d}
        \sigma_{k;i_1\dots i_d; i_1^\prime\dots i_d^\prime}^2.
        $$
        For $\xi_X,\xi_{\mathcal{A}}$ defined in Assumption~\ref{ass:inf3}, $\xi_X^2 \xi_{\sigma,N} N^{1/2} = o_p(1)$, $\xi_{\mathcal{A}}^2 \xi_{\sigma_k,N} N^{1/2} = o_p(1)$ for all $k$.
\end{assumption}
Assumption~\ref{ass:infHighLevel} can be achieved if, e.g. $\{\xi_X,\xi_\mathcal{A}\}  = o_p(N^{-1/4})$, without further restrictions on $\xi_{\sigma,N}$. If $\varepsilon_{i_1\dots i_d}$ are i.i.d., or i.n.i.d., $\xi_{\sigma,N}=0$, satisfying the assumption. 

\begin{theorem}[Asymptotic distribution under heteroskedasticity and correlation]\label{thm:AsyNormHetACCC}
    Let the assumptions in Proposition~\ref{lemma:consKernel} hold. Additionally, let Assumptions~\ref{ass:bai2009}-\ref{ass:inf2}, and \ref{ass:HeteroCorrCLT}-\ref{ass:infHighLevel} hold. 
    Then, for $N_n\rightarrow\infty$, with $N_n \lesssim \prod_{n^\prime\neq n}N_{n^\prime}$, for all $n = 1,\dots,d$,
    \begin{align*}
        \sqrt{N}\big(\widehat{\beta}_{IC} -\beta^0\big) 
        \xrightarrow[d]{}
        \mathcal{N}\left(0, \Omega_X^{-1}\,\widetilde{\Sigma}\,\Omega_X^{-1}
        \right).
    \end{align*}
\end{theorem}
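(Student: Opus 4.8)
The plan is to reuse the linearisation already set out in \eqref{eqn:InfError} and to argue that, relative to the preceding heteroskedasticity theorem, the only genuinely new work lies in controlling the cross terms between the first-stage estimation errors and the \emph{correlated} disturbances. Starting from \eqref{eqn:InfError}, write
\begin{align*}
    \sqrt{N}\big(\widehat{\beta}_{IC} - \beta^0\big)
    = \widehat{\Omega}_X^{-1}\,\frac{1}{\sqrt{N}}B
    + \widehat{\Omega}_X^{-1}\,\frac{1}{\sqrt{N}}{\rm vec}_K(\widehat{\boldsymbol{\eta}})^\prime {\rm vec}(\boldsymbol{\varepsilon}),
\end{align*}
and substitute $\widehat{\boldsymbol{\eta}} = \boldsymbol{\eta} + (\boldsymbol{\Gamma}_X - \widehat{\boldsymbol{\Gamma}}_X)$ throughout. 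First I would establish $\widehat{\Omega}_X \xrightarrow{p} \Omega_X$: expanding $\widehat{\Omega}_X = N^{-1}{\rm vec}_K(\boldsymbol{\eta})^\prime{\rm vec}_K(\boldsymbol{\eta}) + 2N^{-1}{\rm vec}_K(\boldsymbol{\Gamma}_X-\widehat{\boldsymbol{\Gamma}}_X)^\prime{\rm vec}_K(\boldsymbol{\eta}) + N^{-1}{\rm vec}_K(\boldsymbol{\Gamma}_X-\widehat{\boldsymbol{\Gamma}}_X)^\prime{\rm vec}_K(\boldsymbol{\Gamma}_X-\widehat{\boldsymbol{\Gamma}}_X)$, the leading term converges by a law of large numbers and Assumption~\ref{ass:inf2}, while Cauchy--Schwarz with Assumption~\ref{ass:inf3}.(i) bounds the remaining two terms by $O_p(\xi_X)$ and $O_p(\xi_X^2)$, both $o_p(1)$.

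Next I would dispatch the bias term $N^{-1/2}B$. Using the grouping $B = {\rm vec}_K(\widehat{\boldsymbol{\eta}})^\prime{\rm vec}\big(\widehat{\boldsymbol{\Gamma}}_X\cdot(\beta^0-\widetilde{\beta})\big) + {\rm vec}_K(\widehat{\boldsymbol{\eta}})^\prime{\rm vec}(\boldsymbol{\mathcal{A}}-\widehat{\boldsymbol{\mathcal{A}}})$ and the split of $\widehat{\boldsymbol{\eta}}$, the pieces carrying the factor $(\boldsymbol{\Gamma}_X-\widehat{\boldsymbol{\Gamma}}_X)$ are $O_p(\sqrt{N}\,\xi_X\xi_\beta)$ and $O_p(\sqrt{N}\,\xi_X\xi_{\mathcal{A}})$ after scaling, hence $o_p(1)$ by the product-rate conditions in Assumption~\ref{ass:inf3}. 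The term $N^{-1/2}{\rm vec}_K(\boldsymbol{\eta})^\prime{\rm vec}\big(\widehat{\boldsymbol{\Gamma}}_X\cdot(\beta^0-\widetilde{\beta})\big)$ is conditionally mean zero by Assumption~\ref{ass:inferenceInd}.(ii) and is $O_p(\xi_\beta)$ after scaling, vanishing since $\xi_\beta\to 0$. The delicate piece is $N^{-1/2}{\rm vec}_K(\boldsymbol{\eta})^\prime{\rm vec}(\boldsymbol{\mathcal{A}}-\widehat{\boldsymbol{\mathcal{A}}})$: by Assumption~\ref{ass:inferenceInd}.(i) it is conditionally mean zero, and its conditional second moment decomposes into a diagonal part bounded by $O_p(\xi_{\mathcal{A}}^2)$ and an off-diagonal part which, after a Cauchy--Schwarz split over the covariances of $\boldsymbol{\eta}$, is bounded by $O_p(N^{1/2}\xi_{\sigma_k,N}\xi_{\mathcal{A}}^2)$; both are $o_p(1)$ by $\xi_{\mathcal{A}}\to0$ and the restriction $\xi_{\mathcal{A}}^2\xi_{\sigma_k,N}N^{1/2}=o_p(1)$ in Assumption~\ref{ass:infHighLevel}.

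I would then reduce the stochastic term to the feasible form $N^{-1/2}{\rm vec}_K(\boldsymbol{\eta})^\prime{\rm vec}(\boldsymbol{\varepsilon})$ by showing that the contribution of $(\boldsymbol{\Gamma}_X-\widehat{\boldsymbol{\Gamma}}_X)$ is negligible. The term $N^{-1/2}{\rm vec}_K(\boldsymbol{\Gamma}_X-\widehat{\boldsymbol{\Gamma}}_X)^\prime{\rm vec}(\boldsymbol{\varepsilon})$ is conditionally mean zero by Assumption~\ref{ass:inferenceInd}.(iii), and exactly as above its conditional variance splits into a diagonal term of order $\xi_X^2$ and an off-diagonal term of order $N^{1/2}\xi_{\sigma,N}\xi_X^2$, both $o_p(1)$ by $\xi_X\to0$ and the restriction $\xi_X^2\xi_{\sigma,N}N^{1/2}=o_p(1)$ in Assumption~\ref{ass:infHighLevel}. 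Collecting the pieces gives $\sqrt{N}(\widehat{\beta}_{IC}-\beta^0) = \Omega_X^{-1}N^{-1/2}{\rm vec}_K(\boldsymbol{\eta})^\prime{\rm vec}(\boldsymbol{\varepsilon}) + o_p(1)$; the conclusion then follows from the central limit theorem postulated in Assumption~\ref{ass:HeteroCorrCLT}, whose limiting covariance $\widetilde{\Sigma}$ is precisely the full double sum over correlated pairs, combined with Slutsky's theorem to yield the sandwich $\Omega_X^{-1}\widetilde{\Sigma}\Omega_X^{-1}$.

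I expect the main obstacle to be the off-diagonal covariance bookkeeping in the two ``delicate'' cross terms. Under the independent or i.n.i.d.\ errors of the earlier two theorems these terms were trivially $O_p(N^{-1})$ in mean square and required no further control; with correlation present the accumulated off-diagonal covariances must be tracked, and it is exactly the weighted double sums $\xi_{\sigma,N}$ and $\xi_{\sigma_k,N}$ of Assumption~\ref{ass:infHighLevel}---paired with the first-stage rates $\xi_X$ and $\xi_{\mathcal{A}}$ through Cauchy--Schwarz---that keep them $o_p(1)$. Once those two bounds are in hand the remainder of the argument mirrors the heteroskedasticity case and the assumptions carried over from Proposition~\ref{lemma:consKernel}.
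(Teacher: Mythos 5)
Your proposal is correct and follows essentially the same route as the paper's proof: the same decomposition of the bias into the two terms $B_1$ and $B_2$, the same splitting of $\widehat{\boldsymbol{\eta}}$ into $\boldsymbol{\eta}$ plus the first-stage error, the same Cauchy--Schwarz treatment of the off-diagonal covariance sums via $\xi_{\sigma,N}$ and $\xi_{\sigma_k,N}$ from Assumption~\ref{ass:infHighLevel}, and the same appeal to Assumption~\ref{ass:HeteroCorrCLT} and Slutsky for the limit. The only cosmetic difference is that you spell out the consistency of $\widehat{\Omega}_X$ explicitly, which the paper states in one line.
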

From \cite{Bai2009}, $\widetilde{\Sigma}$ can be estimated with the \cite{newey1987hac} kernel approach for time dimension, in conjunction with a partial sample estimator for cross-sections.

\section{Discussion of estimators}\label{sect:Discussion}

This section serves to discuss the results in Section~\ref{sect:estimators}, motivate further some of the chosen methods, and provide some methods to estimate proxies for kernel weights. 

\subsection{Matrix method results}\label{sect:DiscussionMatrix}

Proposition~\ref{prop:bai} takes for granted the dimension to flatten across admits a low rank interactive fixed-effect term for the least square method in \cite{Bai2009}. 
Established diagnostics in \cite{BaiNg2002}, \cite{ahn2013eigenvalue} and \cite{hallin2007determining} can be used to determine the number of factors in the pure factor model, which could be repeated along the different flattenings in the multidimensional setting. 
However, establishing the interactive fixed-effect rank in the presence of covariates is an open area of research. 

Consider the factor model applied to a flattening that may not be low-rank. 
First, assume $\varphi^{(1)}$ varies in a high-dimensional parameter space, e.g. with $N_1 < N_2N_3$, $\varphi^{(1)} \in \mathbb{R}^{N_1\times N_1}$ and $\Gamma\in\mathbb{R}^{N_2N_3\times N_1}$ with each column mutually orthogonal for both these matrices. 
Then $\varphi^{(1)}\Gamma^\prime$ is full-rank and the factor model will not control for this. 
On the contrary, consider $\varphi^{(1)} \in \mathbb{R}^{N_1\times N_1}$ where all columns are linearly dependent. 
Then the matrix $\varphi^{(1)} \Gamma^\prime$ is rank-1 regardless of $L$ and of how $\varphi^{(2)}$ and $\varphi^{(3)}$ vary, thus can be projected with a factor model estimated with 1 factor. 
This situation is exemplified in simulations in Section~\ref{sect:sims}. 
The matrix method requires knowledge of this low rank dimension - the kernel weighted-within does not.

Knowing the multilinear rank,
\cite{MoonWeidner2015} show a factor model with at least $r_n$ factors results in consistent $\beta$ estimates.\footnote{This actually requires knowing an upper bound on multilinear rank, not the actual multilinear rank. } 
However, with generic heteroskedasticity and weak correlations, the factor model can only be bounded by a slow rate of convergence, which is exemplified in simulations below. 
In the three dimensional setting with proportional dimensions, this bound is as slow as $N^{-1/6}$, where $N$ is total sample size. 
This is too slow to use for standard inference procedures.\footnote{Debias do exist, but current proposals would achieve at best $N^{-1/3}$ convergence. There may be split sample debias innovations that can achieve sufficiently small asymptotic bias, but this is not explored. }

\subsection{Estimating kernel weight proxies}\label{sect:DiscussionProxyEst}

Discussed here are some functionals of multidimensional arrays that are useful for estimating proxies to form kernel weights. 
For this purpose, it is important to find proxies that isolate variation in each dimension since weights are formed separately for each index.

Consider for any of the dimensions $n$ the corresponding matrix of left singular vectors from above, $U^{(n)}$, estimated with noise $\varepsilon_{ijt}$. 
That is, each $U^{(n)}$ are calculated from the matrix $\boldsymbol{\mathcal{V}}_{(n)} = \boldsymbol{\mathcal{A}}_{(n)} + \boldsymbol{\varepsilon}_{(n)}$. 
Under regularity conditions on the noise term $\boldsymbol{\varepsilon}$, the left singular vectors from this decomposition consistently estimate $U^{(n)}$, up to rotations. 
In the three dimensional case,
define the $r_n$-vector $\widehat{U}^{(n)}_{i}$ as the $i$-{th} row of the left singular matrix of $\boldsymbol{\mathcal{A}} + \boldsymbol{\varepsilon}$ flattened in the $n$\textsuperscript{th} dimension. 
\cite{BaiNg2002} show the following ``up-to-rotation'' consistency result:
\begin{lemma}[Theorem 1 from \cite{BaiNg2002}]\label{lemma:cons_proxy}
For any fixed integer $k\geq 1$, there exists an $(r_n\times k)$ matrix $H^k_n$ with ${\rm rank}(H^k_n) = \min\{k,r_n\}$ and $C_n = \min\big\{\sqrt{N_n}, \prod_{n^\prime\neq n}\sqrt{N_{n^\prime}}\big\}$ such that for each $n$ under some regularity conditions 
\begin{align*}
    C_n^2\norm{\widehat{U}^{(n)}_{i_n} - H^{k\,\prime}_n\varphi^{(n)}_{i_n}}^2 = O_p(1).
\end{align*}
\end{lemma}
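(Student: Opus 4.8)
The plan is to recognise that, after flattening along dimension $n$ and stripping off the covariate contribution, the matrix $\boldsymbol{\mathcal{V}}_{(n)} = \varphi^{(n)}\Gamma_n^\prime + \boldsymbol{\varepsilon}_{(n)}$ is an approximate factor model of dimension $N_n \times \prod_{n^\prime\neq n}N_{n^\prime}$ in exactly the sense of \cite{BaiNg2002}, with $\varphi^{(n)}$ in the role of the latent factors that the left singular vectors $\widehat{U}^{(n)}$ estimate and $\Gamma_n$ in the role of the loadings. Writing $T := \prod_{n^\prime\neq n}N_{n^\prime}$, the first step is to check that Assumption~\ref{ass:bai2009} supplies precisely the primitive conditions their Theorem~1 needs: bounded fourth moments and nondegenerate limits of $\frac{1}{N_n}\varphi^{(n)\prime}\varphi^{(n)}$ and $\frac{1}{T}\Gamma_n^\prime\Gamma_n$, together with the moment and weak-dependence restrictions on $\boldsymbol{\varepsilon}_{(n)}$, once these are re-expressed in terms of the single column index running over the collapsed dimensions $n^\prime\neq n$.

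Second, I would set up the eigenvector identity. Adopting the \cite{BaiNg2002} normalisation $\frac{1}{N_n}\widehat{U}^{(n)\prime}\widehat{U}^{(n)} = \mathbb{I}_k$, the columns of $\widehat{U}^{(n)}$ are $\sqrt{N_n}$ times the $k$ leading unit eigenvectors of $\frac{1}{N_n T}\boldsymbol{\mathcal{V}}_{(n)}\boldsymbol{\mathcal{V}}_{(n)}^\prime$, so that $\widehat{U}^{(n)} = \frac{1}{N_n T}\boldsymbol{\mathcal{V}}_{(n)}\boldsymbol{\mathcal{V}}_{(n)}^\prime\widehat{U}^{(n)}\widehat{D}_n^{-1}$ with $\widehat{D}_n$ the diagonal matrix of corresponding eigenvalues. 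Substituting $\boldsymbol{\mathcal{V}}_{(n)} = \varphi^{(n)}\Gamma_n^\prime + \boldsymbol{\varepsilon}_{(n)}$ and defining the rotation
\begin{align*}
    H_n^k := \Big(\tfrac{1}{T}\Gamma_n^\prime\Gamma_n\Big)\Big(\tfrac{1}{N_n}\varphi^{(n)\prime}\widehat{U}^{(n)}\Big)\widehat{D}_n^{-1},
\end{align*}
which is $r_n\times k$ and has rank $\min\{k,r_n\}$ since its first factor has full rank $r_n$, one obtains the standard decomposition of $\widehat{U}^{(n)}_{i_n} - H_n^{k\prime}\varphi^{(n)}_{i_n}$ into four averages over the collapsed index of $\widehat{U}^{(n)}$ against, respectively, the noise autocovariance, the two signal--noise cross terms, and the pure noise--noise term.

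Third, I would bound the cross-sectional average of each squared term. Using $\frac{1}{N_n}\sum_{i_n}\norm{\widehat{U}^{(n)}_{i_n}}^2 = O_p(1)$ from the normalisation together with the moment and weak-dependence bounds inherited from Assumption~\ref{ass:bai2009}, each of the four pieces contributes at order $O_p(N_n^{-1})$ or $O_p(T^{-1})$, whose maximum is $O_p(C_n^{-2})$; summing gives $\frac{1}{N_n}\sum_{i_n}\norm{\widehat{U}^{(n)}_{i_n} - H_n^{k\prime}\varphi^{(n)}_{i_n}}^2 = O_p(C_n^{-2})$, which is the stated bound understood in the averaged sense of \cite{BaiNg2002}.

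The main obstacle is controlling $\widehat{D}_n^{-1}$, that is, showing the eigenvalues entering the rotation are bounded away from zero so that $\widehat{D}_n^{-1} = O_p(1)$. This requires a preliminary eigenvalue-separation argument: the $r_n$ signal eigenvalues of $\frac{1}{N_n T}\boldsymbol{\mathcal{V}}_{(n)}\boldsymbol{\mathcal{V}}_{(n)}^\prime$ converge to the positive eigenvalues of a limit governed by $\frac{1}{T}\Gamma_n^\prime\Gamma_n$ and $\frac{1}{N_n}\varphi^{(n)\prime}\varphi^{(n)}$, while $\norm{\boldsymbol{\varepsilon}_{(n)}}^2/(N_n T)$ is asymptotically negligible (Assumption~\ref{ass:norms} controls $\norm{\boldsymbol{\varepsilon}_{(n)}}$). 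The genuinely delicate case is $k > r_n$, where the surplus estimated directions track noise-level eigenvalues; here one must argue, following \cite{BaiNg2002}, that the reduced-rank rotation $H_n^k$ still absorbs the leading factor space and that the extra components do not inflate the bound, which is exactly why the rank of $H_n^k$ collapses to $\min\{k,r_n\}$ rather than $k$.
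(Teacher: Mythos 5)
The paper does not actually prove this lemma --- it is imported verbatim as Theorem 1 of \cite{BaiNg2002} applied to the factor-$n$ flattening $\boldsymbol{\mathcal{V}}_{(n)} = \varphi^{(n)}\Gamma_n^\prime + \boldsymbol{\varepsilon}_{(n)}$ --- and your proposal is a faithful reconstruction of exactly that argument: the correct factor/loading identification ($\varphi^{(n)}$ as factors, $\Gamma_n$ as loadings over the collapsed index), the eigenvector identity, the rotation $H_n^k = \big(\tfrac{1}{T}\Gamma_n^\prime\Gamma_n\big)\big(\tfrac{1}{N_n}\varphi^{(n)\prime}\widehat{U}^{(n)}\big)\widehat{D}_n^{-1}$, the four-term decomposition, and the eigenvalue-separation step controlling $\widehat{D}_n^{-1}$ are all the Bai--Ng proof transplanted to this setting. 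Your reading of the displayed bound in the averaged sense $\frac{1}{N_n}\sum_{i_n}\|\widehat{U}^{(n)}_{i_n} - H_n^{k\prime}\varphi^{(n)}_{i_n}\|^2 = O_p(C_n^{-2})$ is also the sense in which the paper subsequently uses the result, so the proposal is correct and takes essentially the same route.
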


Hence, if the true error, $\boldsymbol{\mathcal{V}} = \boldsymbol{\mathcal{A}} + \boldsymbol{\varepsilon}$, is observed, this establishes consistency of $\widehat{U}^{(n)}_{i_n} $ as proxies. 
Matrices, $H^k_n$, in Lemma~\ref{lemma:cons_proxy} can be ignored as these do not impact convergence rates, see end of Section~\ref{sect:AppendixIterative}. 
However, the true error, $\boldsymbol{\mathcal{V}} = \boldsymbol{\mathcal{A}} + \boldsymbol{\varepsilon}$, is not observed. 
Instead, the observed error
$\widehat{\mathcal{V}}_{ijt} = Y_{ijt} - X_{ijt}\widehat{\beta} =  X_{ijt}(\beta^0 - \widehat{\beta}) + \mathcal{A}_{ijt} + \varepsilon_{ijt}$,
depends on the estimate $\widehat{\beta}$, hence should be accounted for. 
Proposition A.1 in \cite{Bai2009} provides, 
\begin{align*}
    \frac{1}{N_n}\sum_{i_n = 1}^{N_n}\norm{\widehat{U}^{(n)}_{i_n} - H^{k\,\prime}_n\varphi^{(n)}_{i_n}}^2 = O_p(\|\beta^0 - \widehat{\beta}\|^2) + O_p\left( \frac{1}{\min\{N_n, \prod_{n^\prime \neq n} N_{n^\prime}\}} \right),
\end{align*}
if $\widehat{U}^{(n)}_{i_n}$ come from least squares in \cite{Bai2009}. 
Ignoring $H^{k}_n$, 
$\frac{1}{N_n}\sum_{i_n = 1}^{N_n}\norm{\widehat{U}^{(n)}_{i_n} - \varphi^{(n)}_{i_n}}^2 = O_p\left( {N_n}^{-1}\right)$ 
if each dimension sample size grows at a roughly comparable rate and the convergence rate for $\|\beta^0 - \widehat{\beta}\|$ in Proposition~\ref{prop:bai} is used.\footnote{
To be precise, the comparable rate restriction for any dimension $n = 1,\dots,d$ is $N_n \lesssim \prod_{n^\prime \neq n} N_{n^\prime}$. The comparable rate commonly used in the two-dimensional panel literature is $N\approx T$. 
Hence, in the multidimensional problem, the relative rate requirement is milder than the two-dimensional case. 
}
Then, $C_n^{-2}$ from Proposition~\ref{lemma:consKernel} is $1/N_n$ and the parametric rate for the kernel weighted estimator can be established.

\subsection{Projection Using the Linear Kernel}

As a special case, consider the weighted-within transformation with a linear kernel, $k: \mathcal{U}\times \mathcal{U} \to \mathbb{R}$ using estimated singular vectors $\widehat{U}_{i_n}^{(n)}$: $k(u_i, u_j) = u_i'u_j$.
Then in the transformation, 
\begin{align*}
    \check{\mathbf{Y}}
    &:=  
    \mathbf{Y}\times_1 M_1 \times_2 M_2\times_3\dots\times_d M_d, 
    =
    \mathbf{Y}\bigtimes_{n =1}^d (\mathbb{I}_{N_n} - W_n),
\end{align*}
each $W_n$ is replaced by $\widehat{U}^{(n)}\widehat{U}^{(n)\,'}$. 
By construction, these singular vectors are orthonormal, hence, this is equivalent to $W_n = \widehat{U}^{(n)}\left(\widehat{U}^{(n)\,'}\widehat{U}^{(n)} \right)^{-1} \widehat{U}^{(n)\,'}$. 
Then, 
\begin{align*}
    \mathbf{Y}\bigtimes_{n =1}^d (\mathbb{I}_{N_n} - W_n)
    = 
    \mathbf{Y}\bigtimes_{n =1}^d 
    \left(\mathbb{I}_{N_n} - \widehat{U}^{(n)}\left(\widehat{U}^{(n)\,'}\widehat{U}^{(n)} \right)^{-1} \widehat{U}^{(n)\,'}\right)
\end{align*}
is simply a multilinear projection of $\{\widehat{U}^{(n)}\}_{n=1}^d$. 
The error can be very simply bounded in this case by arguments presented in Appendix~\ref{sect:LinearKernel}. 
This paper explores the more general case under generic kernel functions since it likely has applications extending past the parametric model for interactive effects. 
However, the linear kernel case is included for completeness.

\section{Simulations}\label{sect:sims}

Two simulation exercises are considered. 
The first analyses performance 
when sample size is allowed to increase. 
The second analyses fixed sample properties of the estimators when Assumption~\ref{ass:multiLrank}, 
which governs fixed-effect multilinear rank,
is violated in some dimensions. 

\subsection{Growing sample exercise}

Data is generated as, 
\begin{align}\label{eqn:simdgp1}
    Y_{ijt} &= X_{ijt} + \mathcal{A}_{ijt} + \varepsilon_{ijt}\\
    \nonumber
    X_{ijt} &=  
    2\mathcal{A}_{ijt}
    -\rho
    \sum_{\ell = 1}^{2} 
    \Big(\lambda_{i\ell} + \lambda_{i-1\ell}\Big)
    \Big(\gamma_{j\ell} + \gamma_{j-1\ell}\Big)
    \Big(f_{t\ell} + f_{t-1,\ell} \Big)
    + \eta_{ijt}
    \\
    \nonumber
    \varepsilon_{ijt} &= \Big(1/\sqrt{2}\Big)\big(\nu_{ijt} + \nu_{ijt-1} +\nu_{ij-1t} +\nu_{ij-1t-1} +\nu_{i-1jt} +\nu_{i-1jt-1} +\nu_{i-1j-1t} +\nu_{i-1j-1t-1}\big)
\end{align}
with, 
$\mathcal{A}_{ijt} = \sum_{\ell = 1}^{2} \lambda_{i\ell}  \gamma_{j\ell}  f_{t\ell} $. Also, 
\begin{align*}
    \eta_{ijt} \stackrel{ i.i.d.}{\sim} N(0,1), \nu_{ijt} \stackrel{ i.ni.d.}{\sim} N(0,\eta_{ijt}^2) & \textrm{ and for each } \ell, \,\, \lambda_{i\ell}, \gamma_{j\ell}, f_{t\ell} \stackrel{ i.i.d.}{\sim} \,\,N(0,1). 
\end{align*}
Error $\varepsilon_{ijt}$ admits heteroskedasticity, and correlation in each dimension of the data. 
The order of $i$ and $j$, are randomised to simulate an unknown cross correlation structure. 

The results depicted in Figure~\ref{fig:asymptotic} show two specifications for $\rho$. 
The left panel depicts $\rho = 1$, and the right panel depicts $\rho = 1/3$. 
Estimators considered use 2 estimated factors such that the multilinear rank is correctly predicted. 
The left panel shows that whilst the matrix methods, labelled Factor, look consistent, they converge at a rate much slower than the empirical bounds. 
The right panel shows more egregious convergence when the bias problem is made worse with $\rho = 1/3$. 
Indeed, for the sample sizes considered, the empirical bounds do not ever cover the true parameter value in either panel. 

This contrasts with the results for the kernel weighted fixed-effect projection. 
Finite sample bias is negligible compared to variance in both the left and right panel. 
This estimator has roughly the same variance as the matrix methods, and for the sample sizes considered, the empirical bounds are always valid. 

\begin{figure}[h]
    \begin{center}
    \includegraphics[scale = 0.4]{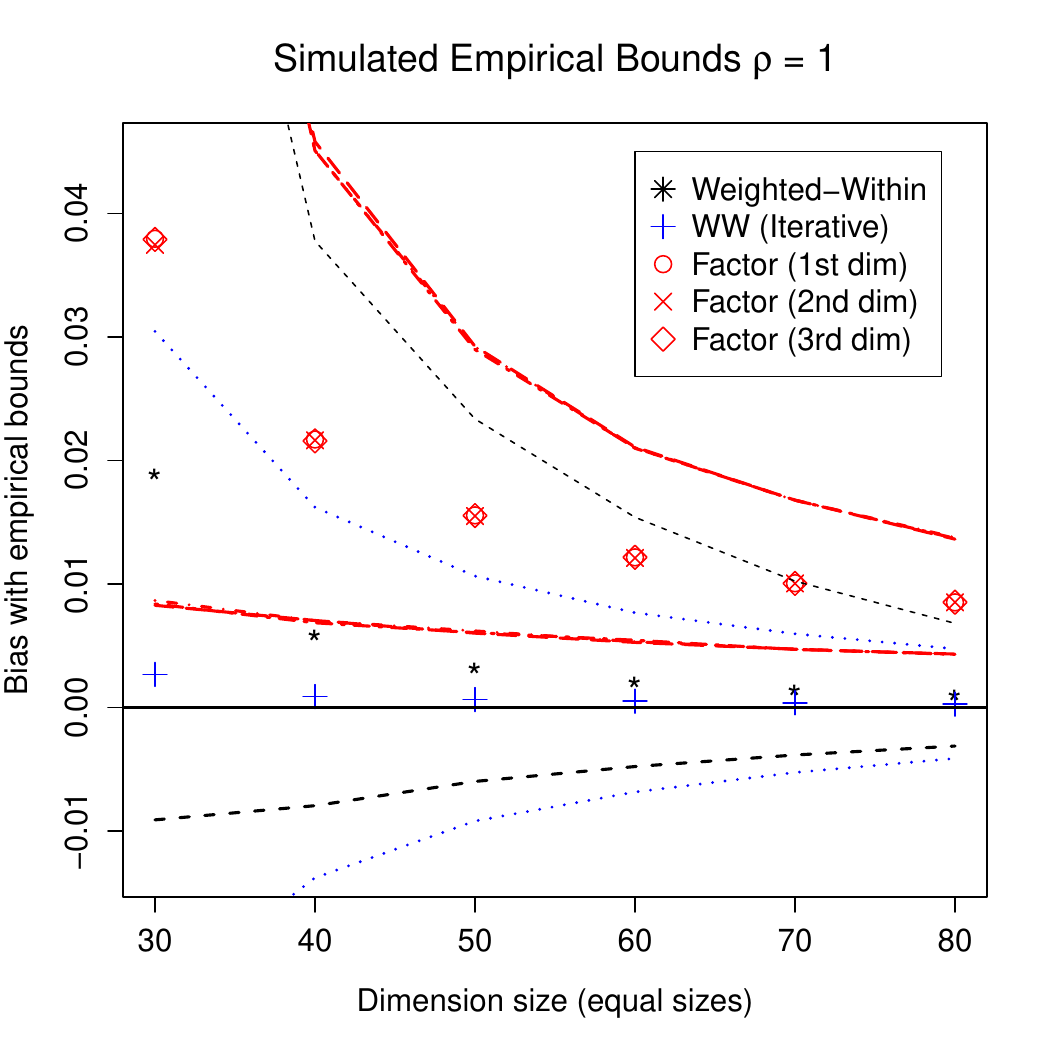}
    \includegraphics[scale = 0.4]{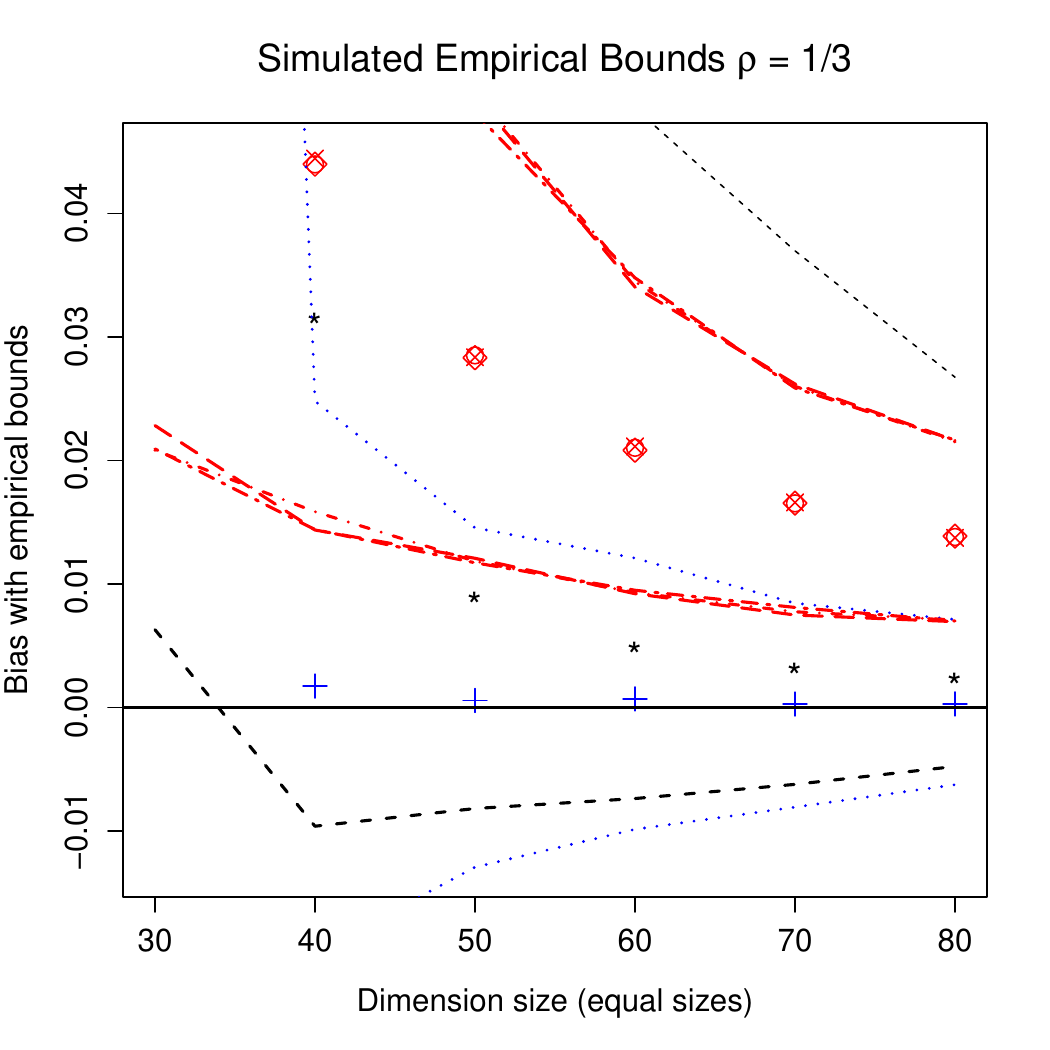}
    \end{center}
    \vspace{-0.5cm}
    \caption{Bias of $\beta$ estimates with 95\% empirical bounds for DGP in \eqref{eqn:simdgp1}}\label{fig:asymptotic}
    \footnotesize{Left panel, $\rho = 1$ 10,000 Monte Carlo rounds, right panel $\rho = 1/3$ 1,000 Monte Carlo rounds. 
    Points depict mean bias. 
    Dotted lines depict 95\% empirical bounds. 
    Factor \# depicts the matrix factor methods with dimension \# as rows. 
    Weighted and weighted inference depicts the kernel weighted method, and kernel weighted method with inference correction. 
    }
\end{figure}

Table~\ref{tbl:simCoverHom} displays coverage and compares the weighted estimator, also with the inference correction from Section~\ref{sect:inference}, and the factor model applied to all dimensions.
The W Inf coverage, which is coverage for the inference corrected estimator, is always better than the uncorrected estimator, and achieves nominal coverage for larger sample sizes when the multilinear rank is exactly or over estimated. 
There is evidence of slight over-coverage when the rank is overestimated, but this is expected when undersmoothing. 
Coverage for the uncorrected estimator improves with overestimation of the rank. 
The factor model undercovers with progressively worse coverage as the sample size grows. 

\begin{table}
\centering
\begin{tabular}{c|ccccc||ccccc|}
    \hline
  &  $\widehat{r} = 2$  & \multicolumn{4}{c}{Coverage} & $\widehat{r} = 3$  & \multicolumn{4}{c}{Coverage}\\  
\hline
Dim & W & W (Iter) & F1  & F2 & F3 & W & W (Iter) & F1  & F2 & F3 \\ 
  \hline
  10 & 0.03 & 0.45 & 0.11 & 0.10 & 0.14 & 0.02 & 0.68 & 0.14 & 0.15 & 0.23 \\ 
  20 & 0.12 & 0.76 & 0.07 & 0.06 & 0.12 & 0.19 & 0.83 & 0.07 & 0.07 & 0.17 \\ 
  30 & 0.42 & 0.95 & 0.04 & 0.04 & 0.11 & 0.64 & 0.96 & 0.04 & 0.05 & 0.12 \\  
  40 & 0.56 & 0.95 & 0.03 & 0.02 & 0.09 & 0.78 & 0.95 & 0.03 & 0.02 & 0.09 \\ 
  50 & 0.61 & 0.96 & 0.01 & 0.01 & 0.06 & 0.84 & 0.96 & 0.01 & 0.01 & 0.07 \\
  60 & 0.63 & 0.96 & 0.01 & 0.01 & 0.06 & 0.84 & 0.95 & 0.01 & 0.00 & 0.06 \\  
  70 & 0.64 & 0.95 & 0.00 & 0.00 & 0.04 & 0.90 & 0.96 & 0.00 & 0.00 & 0.04 \\  
  80 & 0.66 & 0.96 & 0.00 & 0.00 & 0.03 & 0.88 & 0.95 & 0.00 & 0.00 & 0.04 \\  
   \hline
\end{tabular}
\caption{Estimated coverage for \%5 nominal test with true rank and overestimated rank \\
\footnotesize{
True multilinear rank is (2,2,2). 
Left panel estimates true rank, right panel estimates rank + 1. 
Dim is the size of each dimension, with equal dimension sample sizes. 
W is the weighted estimator without inference correction. W (Iter) is the iterative weighted estimator. F\# is the matrix method making dimension \# the rows, and the other dimensions jointly the columns. Coverage is for heteroskedasticity and correlation robust standard errors.  
}
}\label{tbl:simCoverHom}
\end{table}

\subsection{Fixed sample exercise}

Table~\ref{tbl:sim} shows simulation results for the following DGP, 
\begin{align*}
    \begin{split} 
        Y_{ijt} &= X_{ijt}\beta + \mathcal{A}_{ijt} + \varepsilon_{ijt}\\
        X_{ijt} &=  \mathcal{A}_{ijt} + 
        \sum_{\ell = 1}^{N_1} 
        \Big(\lambda_{i\ell} + \lambda_{i-1\ell}\Big)
        \Big(\gamma_{j\ell} + \gamma_{j-1\ell}\Big)
        \Big(f_{t\ell} + f_{t-1,\ell} \Big) + \eta_{ijt}
    \end{split}
\end{align*}
with $\mathcal{A}_{ijt} = \sum_{\ell = 1}^{N_1} \lambda_{i\ell}  \gamma_{j\ell}  f_{t\ell} $ and all other parameters the same as \eqref{eqn:simdgp1}. 
$\mathcal{A}_{ijt}$ is
normalised to have unit variance.
$\boldsymbol{\mathcal{A}}$ is specified such that it is rank 1 when flattened in the first dimension and rank $N_1$ when flattened in either dimension two or three. 
That is, the multilinear rank is $\mathbf{r} = (1,N_1,N_1)$. 
To achieve this, the matrix $\lambda$ is designed to be rank-1 and the matrices $\gamma$ and $f$ are designed to be rank-$N_1$. 

In Table~\ref{tbl:sim}, the estimators OLS and Fixed-effects are simply the pooled OLS estimator and the pooled OLS estimator after additive fixed-effects are projected out, respectively. 
As expected both of these have bias. 
The factor model is used after first flattening along each dimension as Factor(dim = $n$), where $n$ is the dimension used for flattening. 
In each case, 2 factors are projected. 
The results show the bias is close to zero when the correct dimension is flattened over (the first dimension in this case) and very poor bias when the incorrect dimension is used (the second and third dimensions). 
Lastly, the weighted differencing estimator is estimated with Gaussian kernel function with various bandwidths; which are standardised to be equivalent to standard deviations of the proxy measures. 
Weighted estimators show a clear bias-variance trade-off, but all with bias of the same order as the correct factor model. 

This analysis is repeated for the four dimensional case in Table~\ref{tbl:sim}, where the first and second dimensions admit low-dimensional unobserved interactive fixed-effects parameters. 
The simulations suggest similar results as the three dimensional case, where the factor models perform well when flattened in the low-dimensional dimensions (first and second) and poorly in the high-dimensional dimensions (third and fourth). 

\begin{table}
\centering
\begin{tabular}{|l|ccc||c|ccc|}
  \hline
3-D  & Bias & St. dev. & RMSE & 4-D  & Bias & St. dev. & RMSE  \\ 
  \hline
    OLS & 0.3655 & 0.0038 & 0.3655 && 0.1197 & 0.0212 & 0.1216  \\ 
  Fixed-effects & 0.3709 & 0.0040 & 0.3709 && 0.1335 & 0.0264 & 0.1361 \\ 
  WW (h = 0.1) & 0.0008 & 0.0066 & 0.0067 && 0.0001 & 0.0042 & 0.0042  \\ 
  WW (h = 0.25) & 0.0046 & 0.0051 & 0.0069 && 0.0002 & 0.0025 & 0.0025  \\ 
  WW (h = 0.5) & 0.0298 & 0.0063 & 0.0305 && 0.0050 & 0.0024 & 0.0055 \\ 
  WW (h = 1) & 0.1538 & 0.0133 & 0.1544 &&  0.0525 & 0.0087 & 0.0532 \\ 
  WW(Iter.) (h = 0.1) & -0.0025 & 0.0182 &0.0184 && 0.0011 & 0.0389 & 0.0389 \\ 
  WW(Iter.) (h = 0.25) & -0.0023 & 0.0076 & 0.0079 && 2.999e-05 & 0.0054 & 0.0054 \\ 
  WW(Iter.) (h = 0.5) & -0.0008 & 0.0056 & 0.0056 && -1.946e-05 & 0.0030 & 0.0030 \\ 
  WW(Iter.) (h = 1) & 0.0013 & 0.0048 & 0.0050 && 2.315e-05 & 0.0022 & 0.0022  \\ 
  Factor1 & -0.0028 & 0.0041 & 0.0049 && -0.0014 & 0.0016 & 0.0021 \\ 
  Factor2 & 0.3604 & 0.0064 & 0.3605 && -0.0014 & 0.0016 & 0.0022  \\ 
  Factor3 & 0.3605 & 0.0064 & 0.3605 && 0.1219 & 0.0268 & 0.1248 \\ 
  Factor4 & NA & NA & NA && 0.1219 & 0.0271 & 0.1249  \\
   \hline
\end{tabular}
\caption{Fixed-sample size simulation with heterogeneous multilinear rank\\
\footnotesize{3D model ($N_1 = N_2 = N_3 = 40$), with 10,000 Monte Carlo rounds. \\
4D model ($N_1 = N_2 = N_3 = N_4 = 20$), with 1,000 Monte Carlo rounds. Bandwidth scaled up by $\sqrt{2}$ to account for smaller per dimension sample. 
All results relate to $\beta$ estimation. }
}\label{tbl:sim}
\end{table}

\section{Conclusion}\label{sect:conclusion}

This paper develops methods to generalise the interactive fixed-effect to multidimensional datasets with more than two dimensions. 
Theoretical results show that standard matrix methods can be applied to this setting but require additional knowledge of the data generating process and generally have slow convergence rates. Nonetheless, these provide useful preliminary estimates. 
The multiplicative interactive fixed-effect error from the kernel weighted method show an improvement on the convergence rate of slope coefficient estimates to the parametric rate, and suggest a more robust approach to projecting fixed-effects. 
Simulations show finite sample properties when the sample size is allowed to grow and when it is fixed. 
These simulations exemplify how sensitive the two-dimensional methods are to model specification issues as simple as how to organise the dataset. 
They also show the robustness of the kernel weighted fixed-effects estimators without having to make these same specifications. 
A method for inference with the weighted fixed-effects estimator is also introduced. 

The model is applied to a demand model for beer consumption. 
The application demonstrates that simply applying the two-dimensional factor model approach is sensitive to how dimensions are arranged to suit these estimators. 
The weighted-within transformation estimates elasticities close to an instrumental variable point estimate, but with substantially better precision.

\appendix

\section{Appendix: Proofs}\label{sect:consGFEproof}

\renewcommand{\thetheorem}{A.\arabic{theorem}} 
\setcounter{theorem}{0}
\renewcommand{\thelemma}{A.\arabic{lemma}} 
\setcounter{lemma}{0}
\renewcommand{\theproposition}{A.\arabic{proposition}} 
\setcounter{proposition}{0}
\renewcommand{\theequation}{A.\arabic{equation}} 
\setcounter{equation}{0}
\renewcommand{\theassumption}{A.\arabic{assumption}} 
\setcounter{assumption}{0}

\subsection{Proofs of main paper results}

\begin{proof}[\bf Proof of Proposition~\ref{lemma:consKernel}]

Let ${\rm vec}_K\big(\widetilde{\mathbf{X}}\big)$ be the $\prod_nN_n\times K$ matrix of vectorised covariates after weighted-within. 
    Operator ${\rm vec}$ is standard vectorisation.
    Let $N=\prod_nN_n$ and the subscript $i = 1,\dots, N$ be the index for the vectorised data when $i$ has no subscript. 
    Then,
    \begin{align*}
        \norm{\beta_{\mathcal{W}} - \beta^0} &= \Big\| \left({\rm vec}_K\big(\widetilde{\mathbf{X}}\big)^\prime {\rm vec}_K\big(\widetilde{\mathbf{X}}\big) \right)^{(-1)} {\rm vec}_K\big(\widetilde{\mathbf{X}}\big)^\prime \left({\rm vec}\big(\boldsymbol{\widetilde{\mathcal{A}}}\big) + {\rm vec}\big(\widetilde{\boldsymbol{\varepsilon}}\big) \right) \Big\| 
        \leq \norm{\kappa_N} + \norm{\omega_N}
    \end{align*}
    where 
    \begin{align*}
        \norm{\kappa_N} &:=  \Big\| \left({\rm vec}_K\big(\widetilde{\mathbf{X}}\big)^\prime {\rm vec}_K\big(\widetilde{\mathbf{X}}\big) \right)^{(-1)} {\rm vec}_K\big(\widetilde{\mathbf{X}}\big)^\prime {\rm vec}\big(\boldsymbol{\widetilde{\mathcal{A}}}\big)  \Big\|; \\
        \norm{\omega_N} &:=  \Big\| \left({\rm vec}_K\big(\widetilde{\mathbf{X}}\big)^\prime {\rm vec}_K\big(\widetilde{\mathbf{X}}\big) \right)^{(-1)}{\rm vec}_K\big(\widetilde{\mathbf{X}}\big)^\prime {\rm vec}\big(\widetilde{\boldsymbol{\varepsilon}}\big) \Big\|. 
    \end{align*}
    $\norm{\omega_N}$ is bounded from Assumption~\ref{ass:regCondKer}, and results in Appendix~\ref{sect:AppendixSupp}. Focus on $\norm{\kappa_N}$. 
    \begin{align*}
        \norm{\kappa_N} &\leq  \Big\| \left({\rm vec}_K\big(\widetilde{\mathbf{X}}\big)^\prime {\rm vec}_K\big(\widetilde{\mathbf{X}}\big) \right)^{(-1)}\Big\| \Big\| {\rm vec}_K\big(\widetilde{\mathbf{X}}\big)^\prime {\rm vec}\big(\boldsymbol{\widetilde{\mathcal{A}}}\big)  \Big\|
    \end{align*}
    Focus on the right hand part, 
    and let $\big\langle .,. \big\rangle_F$ be the Frobenius inner product,
    \begin{align}\label{proof:norm2}
        \Big\| {\rm vec}_K\big(\widetilde{\mathbf{X}}\big)^\prime {\rm vec}\big(\boldsymbol{\widetilde{\mathcal{A}}}\big)  \Big\| &= \norm{
        \begin{bmatrix}
          \big\langle \widetilde{X}_{1} , 
         \boldsymbol{\widetilde{\mathcal{A}}} \big\rangle_F   \\
          \vdots \\
          \big\langle \widetilde{X}_{K} , 
         \boldsymbol{\widetilde{\mathcal{A}}} \big\rangle_F
         \end{bmatrix} } \leq \norm{
        \begin{bmatrix}
          \left|\sum_{i=1}^N \widetilde{X}_{i,1}
         {\widetilde{\mathcal{A}}_i}\right|    \\
          \vdots \\
          \left|\sum_{i=1}^N \widetilde{X}_{i,K}
         {\widetilde{\mathcal{A}}_i}\right| 
         \end{bmatrix} }
    \end{align}

    Let $\Bar{\varphi}_{i^*_n,\ell}^{(n)}$ denote now the estimate of $\varphi_{i_n,\ell}^{(n)}$ according the the kernel weighted estimator. 
    The entries for each $k$ in \eqref{proof:norm2} can be written as,
   \begin{align*}
        \frac{1}{N}\sum_{i=1}^N \widetilde{X}_{i,k}
         {\widetilde{\mathcal{A}}_i}
         &=
         \frac{1}{N}\sum_{i_1,\dots,i_d}\widetilde{X}_{i_1,\dots,i_d;k} \sum_{\ell = 1}^L \prod_n \left(\varphi_{i_n,\ell}^{(n)} - \Bar{\varphi}_{i^*_n,\ell}^{(n)}\right)
         \\
         &=
         \frac{1}{N}\sum_{i_1,\dots,i_d}\widetilde{X}_{i_1,\dots,i_d;k} \sum_{\ell = 1}^L \prod_n 
         \sum_{j_n}
         W^{(n)}_{i_n,j_n} \left(\varphi_{i_n,\ell}^{(n)} - {\varphi}_{j_n,\ell}^{(n)}\right)
         \\
         &\leq 
         \left(
         \frac{1}{N}
         \sum_{\ell = 1}^L 
         \sum_{i_1,\dots,i_d}
         \widetilde{X}_{i_1,\dots,i_d;k}^2
         \sum_{j_1,\dots,j_d}
         \prod_n {W^{(n)}_{i_n,j_n}}
         \right)^{1/2}
         \times\dots
         \\
         &\dots\times
         \left(
         \frac{1}{N}\sum_{i_1,\dots,i_d}\sum_{j_1,\dots,j_d} 
         \prod_n {W^{(n)}_{i_n,j_n}} 
         \left\|\varphi_{i_n}^{(n)} - {\varphi}_{j_n}^{(n)}\right\|^2
         \right)^{1/2}
         \\
         &=
         \sqrt{L}O_p(1) 
         \left(
         \frac{1}{N}\sum_{i_1,\dots,i_d}\sum_{j_1,\dots,j_d} 
         \prod_n {W^{(n)}_{i_n,j_n}} 
         \left\|\varphi_{i_n}^{(n)} - {\varphi}_{j_n}^{(n)}\right\|^2
         \right)^{1/2}
    \end{align*}
    The third line is from weights being in [0,1] and summing to 1. 
    The last line is from bounded second moments in $\widetilde{X}$. 
The final term can be treated separately for each $n$,
\begin{align*}
    \frac{1}{N_n}\sum_{i_n}\sum_{j_n} 
          {W^{(n)}_{i_n,j_n}} 
         \left\|\varphi_{i_n}^{(n)} - {\varphi}_{j_n}^{(n)}\right\|^2.
\end{align*}

Use the triangle inequality 
and $(a + b + c)^2 \leq 3(a^2 + b^2 + c^2)$ to bound this by 
\begin{align*}
    \frac{1}{N_n}\sum_{i_n}\sum_{j_n} 
          {W^{(n)}_{i_n,j_n}} 
          O\left(
          \left\| \varphi_{i_n}^{(n)} - \widehat{\varphi}_{i_n}^{(n)} \right\|^2
    + 
    \left\| \varphi_{j_n}^{(n)} - \widehat{\varphi}_{j_n}^{(n)} \right\|^2
    + 
    \left\| \widehat{\varphi}_{i_n}^{(n)} - \widehat{\varphi}_{j_n}^{(n)} \right\|^2
         \right).
\end{align*}
Take the first of these terms, 
\begin{align*}
          \frac{1}{N_n}\sum_{i_n}\sum_{j_n} 
          {W^{(n)}_{i_n,j_n}} 
          \left\| \varphi_{i_n}^{(n)} - \widehat{\varphi}_{i_n}^{(n)} \right\|^2
          &=
          \frac{1}{N_n}\sum_{i_n}
          \left\| \varphi_{i_n}^{(n)} - \widehat{\varphi}_{i_n}^{(n)} \right\|^2
          \sum_{j_n} 
          {W^{(n)}_{i_n,j_n}} 
          = 
          O_p(C^{-2}_n)
\end{align*}
Likewise, the second term is $O_p(C^{-2}_n)$ by Assumption~\ref{ass:regCondKer}.(iii) such that $\sum_{i_n}{W^{(n)}_{i_n,j_n}}  = O_p(1)$ by standard expansions for nonparametric estimators and the Markov inequality.

From Assumption~\ref{ass:Kernels}.(iv) $k\left(\big\| \widehat{\varphi}_{i_n}^{(n)} - \widehat{\varphi}_{j_n}^{(n)} \big\|/h_n\right) = 0$
if $\big\| \widehat{\varphi}_{i_n}^{(n)} - \widehat{\varphi}_{j_n}^{(n)} \big\| > Uh_n$.
Hence,
\begin{align}
    \label{proof:weightedNorms}
    \frac{1}{N_n}\sum_{i_n}\sum_{j_n} 
          {W^{(n)}_{i_n,j_n}} 
    \left\| \widehat{\varphi}_{i_n}^{(n)} - \widehat{\varphi}_{j_n}^{(n)} \right\|^2
    &\leq 
    U^2h_n^2 
    \frac{1}{N_n}\sum_{i_n}\sum_{j_n} 
          {W^{(n)}_{i_n,j_n}} 
    =
    O(h_n^2) 
\end{align}

Hence, for each $n$,
\begin{align*}
    \frac{1}{N_n}\sum_{i_n}\sum_{j_n} 
          {W^{(n)}_{i_n,j_n}} 
         \left\|\varphi_{i_n}^{(n)} - {\varphi}_{j_n}^{(n)}\right\|^2
    \leq
    \left(O_p\left(C_n^{-2}\right) + O_p\left(h_n^{2}\right) \right).
\end{align*}
Taking the product of these terms over dimensions forms the statement of the result.

\end{proof}

Theorems~\ref{thm:AsyNormHom}-\ref{thm:AsyNormHet} are not proven since Theorem~\ref{thm:AsyNormHetACCC} is more general.

\begin{proof}[\bf Proof of Theorem~\ref{thm:AsyNormHetACCC}]
    Let $N:= \prod_{n = 1}{N_n}$. Begin with \eqref{eqn:InfError}, 
    \begin{align}\label{eqn:proofIC}
        \widehat{\beta}_{IC} - \beta^0 = \widehat{\Omega}_X^{-1}N^{-1}B
    + \widehat{\Omega}_X^{-1}{\rm vec}_K(\widehat{\boldsymbol{\eta}})^\prime{\rm vec}(\boldsymbol{\varepsilon})
    \end{align}
    where, the bias term, $B$ with,  
    \begin{align*}
        B :=
        {\rm vec}_K(\widehat{\boldsymbol{\eta}})^\prime
        {\rm vec}
        (\boldsymbol{\mathcal{A}}  - \widehat{\boldsymbol{\mathcal{A}}}).
    \end{align*}
    To proceed, need to show $N^{-1}B = o_p(N^{-1/2})$. 
\begin{align*}
    \frac{1}{N}B &= 
    \frac{1}{N}{\rm vec}_K(X - \widehat{\Gamma}_X)^\prime
        {\rm vec}
        (\boldsymbol{\mathcal{A}}  - \widehat{\boldsymbol{\mathcal{A}}})
        \\
        &= 
        \frac{1}{N}{\rm vec}_K({\Gamma}_X - \widehat{\Gamma}_X)^\prime
        {\rm vec}
        (\boldsymbol{\mathcal{A}}  - \widehat{\boldsymbol{\mathcal{A}}})
        +
        \frac{1}{N}{\rm vec}_K(\boldsymbol{\eta})^\prime
        {\rm vec}
        (\boldsymbol{\mathcal{A}}  - \widehat{\boldsymbol{\mathcal{A}}})
\end{align*}
Hence, 
\begin{align*}
     \frac{1}{\sqrt{N}}B &= 
     O_p(\sqrt{N}\xi_X\xi_{\mathcal{A}})\iota_K
    + \frac{1}{\sqrt{N}}{\rm vec}_K(\boldsymbol\eta)^\prime {\rm vec}
    (\boldsymbol{\mathcal{A}}  - \widehat{\boldsymbol{\mathcal{A}}}).
\end{align*}
The first term is $o_p(1)$ since $\xi_{\mathcal{A}} = O_p(N^{-1/2})$, and $\xi_X = o_p(1)$. 
For the second term,
\begin{align}\label{eqn:proofVarIC}
    \textrm{var}\left(\frac{1}{\sqrt{N}}{\rm vec}_K(\boldsymbol\eta)^\prime {\rm vec}
    (\boldsymbol{\mathcal{A}}  - \widehat{\boldsymbol{\mathcal{A}}})
    \right)
    =
    o_p(1)
\end{align}
and $\mathbb{E}\left({N}^{-1/2}{\rm vec}_K(\boldsymbol\eta)^\prime {\rm vec}
(\boldsymbol{\mathcal{A}}  - \widehat{\boldsymbol{\mathcal{A}}})\right) = 0$. 
The zero variance derivation is similar to the derivation below for ${\rm vec}_K({\boldsymbol{\Gamma}}_{X}
    - \widehat{\boldsymbol{\Gamma}}_{X})^\prime{\rm vec}(\boldsymbol{\varepsilon})$ term from Assumption~\ref{ass:infHighLevel}.
The last term in \eqref{eqn:proofIC} is, 
\begin{align*}
    \frac{1}{\sqrt{N}}{\rm vec}_K(\widehat{\boldsymbol{\eta}})^\prime{\rm vec}(\boldsymbol{\varepsilon})
    =
    \frac{1}{\sqrt{N}}{\rm vec}_K
    ({\boldsymbol{\Gamma}}_{X}
    - \widehat{\boldsymbol{\Gamma}}_{X})^\prime{\rm vec}(\boldsymbol{\varepsilon})
    +
    \frac{1}{\sqrt{N}}{\rm vec}_K({\boldsymbol{\eta}})^\prime{\rm vec}(\boldsymbol{\varepsilon}).
\end{align*}
The second term is asymptotically normally distributed by Assumption~\ref{ass:HeteroCorrCLT}. 
The first term is mean zero and has variance, 
\begin{align*}
    &\mathbb{E}\left[ 
    \frac{1}{N}\sum_{i_1,\dots,i_d}
    \left({\boldsymbol{\Gamma}}_{X, i_1,\dots,i_d} - \widehat{\boldsymbol{\Gamma}}_{X,i_1,\dots,i_d}\right)
    \left({\boldsymbol{\Gamma}}_{X, i_1,\dots,i_d} - \widehat{\boldsymbol{\Gamma}}_{X,i_1,\dots,i_d}\right)^\prime
    \varepsilon_{i_1,\dots,i_d}^2
    + \right. \dots \\
    &\left.
    \dots + 
    \frac{1}{N}
    \sum_{i_1^\prime\dots i_d^\prime\neq i_1\dots i_d}
    \sum_{i_1,\dots,i_d}
    \left({\boldsymbol{\Gamma}}_{X, i_1,\dots,i_d} - \widehat{\boldsymbol{\Gamma}}_{X,i_1,\dots,i_d}\right)
    \left({\boldsymbol{\Gamma}}_{X, i_1^\prime,\dots,i_d^\prime} - \widehat{\boldsymbol{\Gamma}}_{X,i_1^\prime,\dots,i_d^\prime}\right)^\prime
    \varepsilon_{i_1,\dots,i_d} 
    \varepsilon_{i_1^\prime,\dots,i_d^\prime}
    \right]
\end{align*}
The first term is $o_p(1)$ from Assumption~\ref{ass:inf3}-~\ref{ass:inferenceInd} and bounded $\mathbb{E}\varepsilon_{i_1,\dots,i_d}^2$. 
Use shorthand index notation $i$ in place of $i_1,\dots,i_d$.
From Assumption~\ref{ass:inferenceInd}, for each $k, k^\prime$, 
\begin{align*}
    \frac{1}{N}
    &\sum_{j\neq i}
    \sum_{i}
    \mathbb{E}\left[ 
    \left({\boldsymbol{\Gamma}}_{X_{k}, i} - \widehat{\boldsymbol{\Gamma}}_{X_{k},i}\right)
    \left({\boldsymbol{\Gamma}}_{X_{k^\prime}, j} - \widehat{\boldsymbol{\Gamma}}_{X_{k^\prime},j}\right)
    \right]
    \mathbb{E}\left[ 
    \varepsilon_{i} 
    \varepsilon_{j}
    \right]
    \\
    &\leq 
    \left(
    \frac{1}{N}
    \sum_{j\neq i}
    \sum_{i}
    \left(\mathbb{E}\left[ 
    \left({\boldsymbol{\Gamma}}_{X_{k}, i} - \widehat{\boldsymbol{\Gamma}}_{X_{k},i}\right)
    \left({\boldsymbol{\Gamma}}_{X_{k^\prime}, j} - \widehat{\boldsymbol{\Gamma}}_{X_{k^\prime},j}\right)
    \right]\right)^2
    \right)^{1/2}
    \left(
    \frac{1}{N}
    \sum_{j\neq i}
    \sum_{i}
    \left(\mathbb{E}\left[ 
    \varepsilon_{i} 
    \varepsilon_{j}
    \right]\right)^2
    \right)^{1/2}
    \\
    &\leq
    \left(
    \frac{1}{N}
    \sum_{j\neq i}
    \sum_{i}
    \mathbb{E}\left[ 
    \left({\boldsymbol{\Gamma}}_{X_{k}, i} - \widehat{\boldsymbol{\Gamma}}_{X_{k},i}\right)^2
    \right]
    \mathbb{E}\left[ 
    \left({\boldsymbol{\Gamma}}_{X_{k^\prime}, j} - \widehat{\boldsymbol{\Gamma}}_{X_{k^\prime},j}\right)^2
    \right]
    \right)^{1/2}
    \left(
    \frac{1}{N}
    \sum_{j\neq i}
    \sum_{i}
    \sigma_{ij}^2
    \right)^{1/2}
    \\
    &=
    O_p\left(N^{1/2} \xi_X^2 \xi_{\sigma,N}\right).
\end{align*}
Notation comes from Assumption~\ref{ass:infHighLevel}, which also implies $O_p\left(N^{1/2} \xi_X^2 \xi_{\sigma,N}\right) = o_p(1)$. 
Similar derivations get to \eqref{eqn:proofVarIC}. 

Finally, $\widehat{\Omega}_X^{-1} = (\mathbb{E}(\eta_{i_1,\dots,i_d}\eta_{i_1,\dots,i_d}^\prime) + o_p(1))^{-1}$. Assumption~\ref{ass:HeteroCorrCLT} gives the asymptotic distribution for $N^{-1/2}{\rm vec}_K({\boldsymbol{\eta}})^\prime{\rm vec}(\boldsymbol{\varepsilon})$, which completes the proof. 

\end{proof}

\subsection{Supplementary results}\label{sect:AppendixSupp}

To verify Assumption~\ref{ass:regCondKer}.(ii), a sample splitting device is constructed to remove dependence between fixed-effect estimates and idiosyncratic terms. The approach is similar to the sample splitting device used in \cite{freeman2023linear}.
Two dimensions are used for simplicity. 
Assume weak dependence in the noise terms, $\varepsilon,\eta$, where $\mathcal{P}$ is the probability law. 
Two mixing conditions are considered here for comparison: the strong, or  $\alpha$-mixing, condition and the uniform $\phi$-mixing condition. 
There are potentially other conditions between these two in strictness which admit different levels of boundedness in the distribution of noise terms. 

Let $\mathcal{A}$ and $\mathcal{B}$ denote $\sigma$-fields on the probability space that $\varepsilon$ lies in. 
Let $\varepsilon_{i_1i_2} \in \{\mathcal{A}\}_{i_1,i_2}$ and $\varepsilon_{i_1'i_2'} \in \{\mathcal{B}\}_{i_1',i_2'}$. 
Strong mixing for $\varepsilon$ with $\delta_1,\delta_2>0$ is restricted to,
\begin{align}\label{eqn:StrongMixing}
    \sup_{i_1,i_2, i_1',i_2'}
    \left\{|\mathcal{P}[\varepsilon_{i_1i_2}, \varepsilon_{i_1^\prime i_2'}]
    -
    \mathcal{P}[\varepsilon_{i_1i_2}] \mathcal{P}[\varepsilon_{i_1' i_2^\prime}]|: |i_1 - i_1'|\geq a_1, |i_2 - i_2'|\geq a_2\right\}
    = o(a_1^{-\delta_1}\cdot a_2^{-\delta_2}).
\end{align}
Strong mixing can be used for covariance bounds on $\varepsilon$ with sufficient decay as $a_1,a_2\to \infty$ when $\varepsilon$ is distributed in a bounded set. 
The uniform $\phi$-mixing condition that admits $\varepsilon$ distributed on an unbounded set, with regularity, is, 
\begin{align}\label{eqn:UniformMixing}
    \sup_{\substack{i_1,i_2, i_1',i_2'\\\mathcal{P}[B]\neq 0}}
    \left\{|\mathcal{P}[A|B]
    -
    \mathcal{P}[A] :
    A \in \mathcal{A},
    B \in \mathcal{B}, 
    |i_1 - i_1'|\geq a_1, |i_2 - i_2'|\geq a_2\right\}
    = o(a_1^{-\delta_1}\cdot a_2^{-\delta_2})
\end{align}

Mixing conditions can be motivated in, for example, time or geo-spatially distanced indices.\footnote{For geo-spatial indices, these could always be replaced with a distance measure to be used in constructing the proposed sample split. }
Conditions \eqref{eqn:StrongMixing}-\eqref{eqn:UniformMixing} are trivially satisfied with, e.g., full independence, but also admit more general classes of weak dependence. 

Weights are estimated using different partitions to ensure independence between $\varepsilon$ and weights. 
Denote $\mathcal{S} = \{1,\dots, N_1\}\times\{1,\dots,N_2\}$. 
Take some $b>0$, such that $b = O(\min\{N_1,N_2\})$,
\begin{align*}
    \begin{split}
        \mathcal{S}_1 &= \{1,\dots, N_1/2-b\}\times\{1,\dots,N_2/2-b\}\\
        \mathcal{S}_2 &= \{1,\dots, N_1/2-b\}\times\{N_2/2+b,\dots,N_2\}\\
        \mathcal{S}_3 &= \{N_1/2 +b,\dots, N_1\}\times\{1,\dots,N_2/2-b\}\\
        \mathcal{S}_4 &= \{N_1/2 +b,\dots, N_1\}\times\{N_2/2+b,\dots,N_2\}
    \end{split}
\end{align*}
Then, estimate proxies for weights $(i_1,i_2)\in\mathcal{S}_k$ from partition $(i_1, i_2)\in \mathcal{S}\backslash \mathcal{S}_k$ for all $k$. 
For example, to estimate $\lambda_{i_1}, f_{i_2}$ for $(i_1,i_2)\in\mathcal{S}_1$, take 
$\hat \lambda_{i_1}$ from $\mathcal{S}_2$ and $\hat f_{i_2}$ from $\mathcal{S}_3$. 
Then, for $b\rightarrow\infty$, \eqref{eqn:StrongMixing}-\eqref{eqn:UniformMixing} implies $\varepsilon_{i_1 i_2 }$ are asymptotically independent of weights. 
For any dimension that admits full independence in idiosyncratic terms there is no need to create this separation. For example, if $i_1$ is an independently sampled cross section, the first partition would become, 
\begin{align*}
        \mathcal{S}_1 &= \{1,\dots, N_1/2\}\times\{1,\dots,N_2/2-b\}, && \textrm{etc.}
\end{align*}

Take $(i_1,i_2)\in \mathcal{S}_k$ and $(i_1^\prime, i_2^\prime)\in \mathcal{S}_k$, i.e. from the same partition. Denote $\mathcal{F}_{-k}$ as the filtration of data $Y,X$, for $(i_1,i_2)\in \mathcal{S}\backslash \mathcal{S}_k$.
This sample split then admits negligible differences between considering $\mathbb{E}[\varepsilon_{i_1i_2} \varepsilon_{i_1^\prime i_2^\prime}|\mathcal{F}_{-k}]$ versus $\mathbb{E}[\varepsilon_{i_1i_2} \varepsilon_{i_1^\prime i_2^\prime}]$. 
To show, notice in the case of mixing condition \eqref{eqn:StrongMixing} with $\varepsilon$ distributed on a bounded set,  
\begin{align*}
    \Big|\mathbb{E}&[\varepsilon_{i_1i_2} \varepsilon_{i_1^\prime i_2^\prime}|\mathcal{F}_{-k}]
    - 
    \mathbb{E}[\varepsilon_{i_1i_2} \varepsilon_{i_1^\prime i_2^\prime}]\Big|
    = 
    \Big|\int_{\Omega_\varepsilon\times \Omega_\varepsilon} uv\,\, d\left[\mathcal{P}_{EE'|F_{-k}} - \mathcal{P}_{EE'} \right]  \Big|
    \\
    &\leq 
    \int_{\Omega_\varepsilon\times \Omega_\varepsilon}  |uv| d\Big|\mathcal{P}_{EE'|F_{-k}} - \mathcal{P}_{EE'}  \Big| 
    \leq 
    \int_{\Omega_\varepsilon\times \Omega_\varepsilon}  |uv| d \sup \Big|\mathcal{P}_{EE'|F_{-k}} - \mathcal{P}_{EE'}  \Big|  = o_p(b^{-\delta_1 - \delta_2})
\end{align*}
where this uses that any $(i_1,i_2)\in \mathcal{S}_k$ is separated from data in the filtration $\mathcal{F}_{-k}$ by at least $b$ observations in both indices, then applies the condition in \eqref{eqn:StrongMixing}. 

For the uniform mixing case use $\mathcal{A}$ to denote the related $\sigma$-field of the event $\varepsilon_{i_1i_2} \varepsilon_{i_1^\prime i_2^\prime}$ where $(i_1,i_2)\in \mathcal{S}_k$, $(i_1^\prime, i_2^\prime)\in \mathcal{S}_k$. Denote $\mathcal{F}_{-k}$ as the filtration of data $Y,X$, for $(i_1,i_2)\in \mathcal{S}\backslash \mathcal{S}_k$, as above. 
Then, from \eqref{eqn:UniformMixing}, which considers $\mathcal{P}_{\mathcal{F}_{-k}}\neq 0$, arguments from \cite{rio2013inequalities} show, 
\begin{align*}
    \Big|\mathbb{E}&[\varepsilon_{i_1i_2} \varepsilon_{i_1^\prime i_2^\prime}|\mathcal{F}_{-k}]
    - 
    \mathbb{E}[\varepsilon_{i_1i_2} \varepsilon_{i_1^\prime i_2^\prime}]\Big|
    \leq  
    \int |A| \,\,d |\mathcal{P}_{\mathcal{A}|\mathcal{F}_{-k}} - \mathcal{P}_{\mathcal{A}}|
    =  
    \int |A| \,\,d \mathcal{P}_{\mathcal{A}}/ \mathcal{P}_{\mathcal{F}_{-k}}|\mathcal{P}_{\mathcal{F}_{-k}|\mathcal{A}} - \mathcal{P}_{\mathcal{F}_{-k}}|
    \\
    &\leq 
    \left(\int |A|^2 \,\,d \mathcal{P}_{\mathcal{A}} \right)^{1/2}
    \left(\int d 
    |\mathcal{P}_{\mathcal{F}_{-k}|\mathcal{A}} - \mathcal{P}_{\mathcal{F}_{-k}}|^2
    \mathcal{P}_{\mathcal{A}}/ \mathcal{P}_{\mathcal{F}_{-k}}^2
    \right)^{1/2}
    \leq 
    \mathbb{E}[(\varepsilon_{i_1i_2} \varepsilon_{i_1^\prime i_2^\prime})^2]^{1/2} o_p(b^{-\delta_1 - \delta_2})
\end{align*}
Hence, for bounded $\mathbb{E}[\varepsilon_{i_1i_2}^4]$ the term is bounded $o_p(b^{-\delta_1 - \delta_2})$.

Hence, we can confirm a conditional covariance bound with an unconditional one:
\begin{align*}
    \frac{1}{N_1N_2} \sum_{(i_1, i_2)\in \mathcal{S}_k}\sum_{i_1', i_2'\in \mathcal{S}_k}\mathbb{E}[\varepsilon_{i_1 i_2}\varepsilon_{i_1' i_2'}|\mathcal{F}_{-k}]
    &= 
    \frac{1}{N_1N_2} \sum_{(i_1, i_2)\in \mathcal{S}_k}\sum_{i_1', i_2'\in \mathcal{S}_k}
    (\mathbb{E}[\varepsilon_{i_1 i_2}\varepsilon_{i_1' i_2'}]
    + 
    o_p(b^{-2\delta}))
    \\
    &=
    o_p(1)
    +
    \frac{1}{N_1N_2} \sum_{(i_1, i_2)\in \mathcal{S}_k}\sum_{i_1', i_2'\in \mathcal{S}_k}
    \mathbb{E}[\varepsilon_{i_1 i_2}\varepsilon_{i_1' i_2'}], 
\end{align*}
whenever $\delta_1,\delta_2 >1$ and $o_p(N_1N_2\cdot b^{-\delta_1 - \delta_2})) = o_p(1)$. 
Say $N_1\propto N_2$ and $\delta := \delta_1 = \delta_2$. 
Then the condition is $b = c\cdot (N_1N_2)^{1/2\delta} \approx c\cdot N_1^{1/\delta}$. 
Stronger mixing processes, from higher $\delta$, imply the separation between partitions defined by $b$ can be smaller.
This fits intuition: the faster the random variable mixes, the less these partitions need to be separated. 

For the below Lemma, consider one dimension of data to simplify exposition. 
We also abstract away from the mixing conditions presented above to give cleaner expressions, and refer to the arguments made already that it is without loss of generality to consider expectations of noise terms without conditioning on the filtrations used to generate weights.
That is, that it can be assumed weights are independent of noise terms. 

\begin{lemma}[Verification of Assumption~\ref{ass:regCondKer}.(ii)]\label{lemma:regCondKer}
    Maintain Assumptions~\ref{ass:norms}-\ref{ass:Exog}, and 
    $\varepsilon_i$ independent of fixed-effects and weights,
    with bandwidth $h\gtrsim N^{-1/2}$. Impose in $\plim_{N\rightarrow\infty}$,\footnote{Assume without loss that $X_i$ is mean zero. }
    \begin{align*}
        &\Big(\frac{1}{N}\sum_{i}\sum_{i^\prime}\big( \mathbb{E}\big[\varepsilon_i \varepsilon_{i^\prime }|X\big]\big)^2\Big)^{1/2};\,\,
        \frac{1}{N}\sum_{i}\sum_{i^\prime} 
        \mathbb{E}\big[\varepsilon_i \varepsilon_{i^\prime }|X\big];\,\,
        \frac{1}{N}\sum_{i}\sum_{i^\prime} \mathbb{E}\big[{X}_i{X}_{i^\prime}\big]
        \leq M <\infty,
    \end{align*}
    If weights for each $i$ such that $\sum_j W_{i,j}^2 = O_p({N}h)^{-1}$;
    \footnote{Assumption~\ref{ass:regCondKer}.(iii) and standard nonparametric arguments attain this pointwise over $i$ for $h > cN^{-1/2}$ when weights are functions of scalars -- see Section~\ref{sect:AppendixIterative} for the case using iterative scalar weights. }
    then, 
    $N^{-1}\sum_i \check{X}_i \check{\varepsilon}_i =O_p(N^{-1/2})$.
\end{lemma}
\begin{proof}[\bf  Proof of Lemma~\ref{lemma:regCondKer}]
    The term $N^{-1}\sum_i \check{X}_i \check{\varepsilon}_i$ can be written, 
    \begin{align*}
        \frac{1}{N}\sum_i \check{X}_i \check{\varepsilon}_i
        =
        \frac{1}{N}\sum_i \left[ 
        {X}_i{\varepsilon}_i
        - \sum_jW_{i,j}{X}_j{\varepsilon}_i
        - \sum_\ell W_{i,\ell}{X}_i{\varepsilon}_\ell
        + \sum_j\sum_\ell W_{i,j}W_{i,\ell}{X}_j{\varepsilon}_\ell
        \right]
    \end{align*}
    The term ${N^{-1}}\sum_i{X}_i{\varepsilon}_i = O_p(N^{-1/2})$ from Assumption~\ref{ass:Exog}. 
    The second term, 
    \begin{align*}
        Var&\Big(\frac{1}{N}\sum_i \sum_jW_{i,j}{X}_j{\varepsilon}_i\Big) 
        =
        \frac{1}{N^2}\mathbb{E}\left[\sum_i \sum_{i^\prime} 
        \mathbb{E}\big[\varepsilon_i \varepsilon_{i^\prime}|X] 
        \mathbb{E}\Big[\sum_j  W_{ij}  {X}_j \sum_{j^\prime} W_{i^\prime j^\prime}{X}_{j^\prime}\big|X\Big] \right]
        \\
        &\leq 
        \frac{M}{N^{3/2}}\mathbb{E}\left[\Big(
        \mathbb{E}\Big[
        \sum_i \Big(\sum_j  W_{ij}  {X}_j \Big)^2
        \sum_{i^\prime} \Big(\sum_{j^\prime} W_{i^\prime j^\prime}{X}_{j^\prime}\Big)^2  \Big|X\Big]\Big)^{1/2}\right]
        \\
        &\leq
        \frac{M}{N^{3/2}}\mathbb{E}\left[\Big(
        \mathbb{E}\Big[\Big(
        \sum_j \sum_{j'}{X}_j {X}_{j'} \sum_i W_{ij} W_{ij'}\Big)^2
        \Big|X\Big]
        \Big)^{1/2}\right]
        \lesssim 
        \frac{M^2}{N^{3/2}h} = O(N^{-1}),
    \end{align*}
    by Jensen's and Cauchy-Schwarz inequality. 
    The third term follows similarly. 
    Last term, 
    \begin{align*}
        Var\Big(\frac{1}{N}&\sum_{i j\ell} W_{i,j}W_{i,\ell}{X}_j{\varepsilon}_\ell\Big)
        = 
        \frac{1}{N^2}\sum_{ j\ell} \sum_{ j'\ell'} 
        \mathbb{E}\left[
        {X}_j{\varepsilon}_\ell
        {X}_{j'}{\varepsilon}_{\ell'}
         \sum_i W_{i,j}W_{i,\ell}
        \sum_{i'}  W_{i',j'}W_{i',\ell'}
        \right]
        \\
        &\lesssim 
        \frac{1}{N^2}\sum_{ j\ell} \sum_{ j'\ell'} 
        \mathbb{E}\left[
        {X}_j{\varepsilon}_\ell
        {X}_{j'}{\varepsilon}_{\ell'}
         (Nh)^{-2}
        \right]
        = 
        \frac{1}{N^2}\sum_{j j'} 
        \mathbb{E}\left[
        {X}_j
        {X}_{j'}
        \sum_{\ell\ell'}\mathbb{E}[{\varepsilon}_\ell{\varepsilon}_{\ell'}|X]
         O(Nh)^{-2}
        \right]
        \\
        &\leq
        (Nh)^{-2}O_p(M^2) = O(N^{-1}).
    \end{align*}
\end{proof}

\section{Iterative Estimator}\label{sect:AppendixIterative}

The following allows for high-dimensional set of proxy measures used to form projection weights.  
Every tensor has a higher-order singular value decomposition (HOSVD), 
\begin{align*}
    \mathcal{A} = \mathcal{Q}\times (U^{(1)}, \dots, U^{(d)})
\end{align*}
where $\mathcal{Q}\in\mathbb{R}^{\bigtimes_{n=1}^d r_n}$ and unitary $N_n^{-1/2}U^{(n)} \in \mathbb{R}^{N_n\bigtimes r_n}$, where $r_n$ is the $n$-th compononent of the multilinear rank.\footnote{Unitary $N^{-1/2}U^{(n)}$ instead of unitary $U^{(n)}$ simply rescales singular values $\mathcal{Q}$. }
$U^{(n)}$ are considered columnwise mean zero without loss. 
The iterative estimator is a backfitting algorithm that works as follows. 

\begin{enumerate}
    \item Obtain proxies for $U^{(n)}$, called $\widehat{U}^{(n)}$ for each $n = 1,\dots, d$.
    Normalise these to be columnwise mean zero. 
    Use these to form weights, for $m_n = 1,\dots,r_n$;
    \begin{align}\label{eqn:AppKernelWeights}
	S_{nm_n,i_nj_n} 
    := 
    \frac{k\left(\frac{1}{h_n} \left({\widehat{U}^{(n)}_{i_n,m_n} - \widehat{U}^{(n)}_{j_n,m_n}}\right)^2 \right)}
    {\sum_{i^\prime_n = 1}^{N_n} k\left(\frac{1}{h_n} \left({\widehat{U}^{(n)}_{i_n,m_n} - \widehat{U}^{(n)}_{i^\prime_n,m_n}}\right)^2\right)} 
    \quad\quad
    \textrm{ for each }
    n = 1,\dots, d. 
\end{align}
\begin{enumerate}[(i).]
    \item For each $m_n = 1,\dots,r_n$ difference out weighted means from $\boldsymbol{Y}$ and $\boldsymbol{X}_k$ for each $k = 1,\dots, K$ along each dimension, $n = 1,\dots, d$ as follows, 
    \begin{align}\label{eqn:AppKernelDiff}
    \widetilde{\mathcal{A}}_{(\ell)} 
    &= \widetilde{\mathcal{A}}_{(\ell-1)} \times \Bigg(\prod_{m_1 = 1}^{r_1}\big(\mathbb{I} - S_{1m_1}\big), \dots, \prod_{m_d = 1}^{r_d}\big(\mathbb{I} - S_{dm_d}\big)\Bigg)
    \end{align}
    \item Iterate over $\ell$ until convergence. Call the convergent value $\widetilde{\mathcal{A}}$. 
\end{enumerate}
    \item Perform pooled OLS of $\widetilde{\boldsymbol{Y}}$ on $\widetilde{\boldsymbol{X}}$. Call the estimator $\widehat\beta_{IW}$, for iterative weighted. 
\end{enumerate}

Step 2 is the backfitting component, see e.g. \cite{opsomer1999root,opsomer2000asymptotic}. 
Label $W$ as the weights implied by the backfitting algorithm. From \cite{opsomer1999root} $W$ admit the additive form $W_n = \sum_{m_n=1}^{r_n} W_{nm_n}$ such that, 
\begin{align*}
    \widetilde{\mathcal{A}} = 
    \mathcal{Q}\times ((\mathbb{I} - W_1)U^{(1)}, \dots, (\mathbb{I} - W_d)U^{(d)}).
\end{align*}
\cite{opsomer2000asymptotic} show these are asymptotically equivalent to the smoother weights $S$: $W_{nm_n} = S_{nm_n} + O_p(\iota_{N_n}\iota_{N_n}'/N_n)$, which significantly simplifies the theory. 

Proof techniques from previous results follow from the representation in \eqref{eqn:AppKernelDiff}. 
As proposed in the main text, proxies $\widehat{U}^{(n)}$ can be estimated using the matrix method with each dimension used as rows. 
Proposition A.1 in \cite{Bai2009} can be used to bound the estimation error of these proxies with respect to the true $U^{(n)}$. 
The fact Proposition A.1 in \cite{Bai2009} is an up-to-rotations result does not affect convergence rates, and is discussed below the proof of Proposition~\ref{lemma:consIterKernel}. 
Hence, for the statement of the result, these rotations are ignored.

\begin{proposition}[Upper bound on iterative estimator]\label{lemma:consIterKernel}
    Make Assumption~\ref{ass:Kernels}, and Assumption~\ref{ass:regCondKer} specifically for the iterative transformation.
    For the HOSVD of fixed-effects, 
    impose $\mathcal{Q}_{m_1,\dots,m_d}$ bounded, and let 
    $\frac{1}{N_{n}}\sum_{i_{n^*}}\left\| {U}^{(n)}_{i_{n}} - \widehat{U}^{(n)}_{i_{*}}\right\|^2 = O_p(C_{n}^{-2})$.
    Then, 
    \begin{align*}
        \norm{\widehat{\beta}_{IW} - \beta^0} 
        = 
        \prod_{n=1}^d r_n O_p\Big((h_n + C_n^{-1}) + (r_n-1) \big(N_n \sqrt{h_n}\big)^{-1} \Big)
        + 
        O_p\left(\prod_{n = 1}^d\frac{1}{\sqrt{N_{n}}} \right)
        .
    \end{align*}
    For $N^{-1}\lesssim h_n \lesssim O(C_n^{-1})$ this reduces to 
    \begin{align*}
        \norm{\widehat{\beta}_{IW} - \beta^0} 
        = 
        \prod_n r_n
        O_p\left(\prod_{n}
        {O_p\left({C_{n}^{-1}} \right)}
        \right) 
        + 
        O_p\left(\prod_{n = 1}^d\frac{1}{\sqrt{N_{n}}} \right)
        .
    \end{align*}
\end{proposition}

\begin{proof}[\bf Proof of Proposition~\ref{lemma:consIterKernel}]

Weights $W_{nm_n} = S_{nm_n} + O_p(\iota_{N_n}\iota_{N_n}'/N_n)$. 
For each column, $\ell_n$ of $(\mathbb{I} - \sum_{m_n=1}^{r_n}W_{n,m_n}) U^{(n)}$ notice, 
\begin{align*}
    (\mathbb{I} - \sum_{m_n=1}^{r_n}W_{n,m_n}) U^{(n)}_{\ell_n}
    &= 
    (\mathbb{I} - W_{n,\ell_n}) U^{(n)}_{\ell_n}
    - \sum_{m_n\neq \ell_n}W_{n,m_n} U^{(n)}_{\ell_n}
    \\
    &= 
    (\mathbb{I} - S_{n,\ell_n}) U^{(n)}_{\ell_n}
    - \Big(r_nO_p(\mathbbm{1}/N_n) + \sum_{m_n\neq \ell_n}S_{n,m_n}\Big) U^{(n)}_{\ell_n}.
\end{align*}
It is without loss to consider mean zero $U^{(n)}_{\ell_n}$. 
Hence, $r_nO_p(\mathbbm{1}/N_n) U^{(n)}_{\ell_n} = \boldsymbol{0}_{N_n}$. 
Call $\widetilde{A}_{1,\ell_n}^{(n)} := (\mathbb{I} - S_{n,\ell_n}) U^{(n)}_{\ell_n}$, and $\widetilde{A}_{2,\ell_n}^{(n)} := \sum_{m_n\neq \ell_n}S_{n,m_n} U^{(n)}_{\ell_n}$.
Then, 
\begin{align*}
    \widetilde{\boldsymbol{A}} = 
    \widetilde{\boldsymbol{S}}\times 
    (\widetilde{A}_{1}^{(1)} + \widetilde{A}_{2}^{(1)} , \dots, \widetilde{A}_{1}^{(d)} + \widetilde{A}_{2}^{(d)} )
\end{align*}
Consider first, the component, 
\begin{align*}
    \frac{1}{N} \left|\left\langle \widetilde{\boldsymbol{X}}, \widetilde{\boldsymbol{A}}_1 \right\rangle_F\right|
    \leq 
    \frac{1}{N} \left\langle \widetilde{\boldsymbol{X}}, \widetilde{\boldsymbol{X}} \right\rangle_F^{1/2}
    \left\langle \widetilde{\boldsymbol{A}}_1, \widetilde{\boldsymbol{A}}_1 \right\rangle_F^{1/2}
    = 
    O_p(1) 
    \frac{1}{\sqrt{N}}
    \left\| \widetilde{\boldsymbol{A}}_1\right\|_F,
\end{align*}
where $\widetilde{\boldsymbol{A}}_1 := \widetilde{\boldsymbol{S}}\times (\widetilde{A}_{1}^{(1)}, \dots, \widetilde{A}_{1}^{(d)} )$.
Use the HOSVD 
\begin{align*}
    \frac{1}{\sqrt{N}}
    \left\| \widetilde{\boldsymbol{A}}_1\right\|_F 
    &= 
    \frac{1}{\sqrt{N}}
    \left\| \mathcal{Q}\times (\widetilde{A}_{1}^{(1)}, \dots, \widetilde{A}_{1}^{(d)}\right\|_F =
    \frac{1}{\sqrt{N}}
    \left\| \widetilde{A}_{1}^{(1)} \mathcal{Q}_{[1]} \left[\bigotimes_{n \neq 1} \widetilde{A}_{1}^{(d)}\right]'\right\|_F 
\end{align*}
where $\mathcal{Q}_{[1]} \in \mathbb{R}^{r_1\times \prod_{n\neq 1}r_n}$ is the flattening of core tensor $\mathcal{Q}_{[1]}$ in the first dimension. 
\begin{align*}
    \frac{1}{\sqrt{N}}
    &\left\| \widetilde{A}_{1}^{(1)} \mathcal{Q}_{[1]} \left[\bigotimes_{n \neq 1} \widetilde{A}_{1}^{(n)}\right]'\right\|_F 
    \leq 
    \frac{1}{\sqrt{N}}
    \left\| \widetilde{A}_{1}^{(1)} \right\|_2
    \left\|\mathcal{Q}_{[1]} \left[\bigotimes_{n \neq 1} \widetilde{A}_{1}^{(n)}\right]'\right\|_F
    \\
    &\leq 
    \frac{1}{\sqrt{N}}
    \left\|\widetilde{A}_{1}^{(1)}\right\|_2
    \left\|\mathcal{Q}_{[1]}\right\|_F 
    \left\|\bigotimes_{n \neq 1} \widetilde{A}_{1}^{(n)}\right\|_2= 
    \frac{1}{\sqrt{N}}
    \left\|\mathcal{Q}\right\|_F
    \prod_{n = 1}^d
    \left\|\widetilde{A}_{1}^{(n)}\right\|_2 
\end{align*}
The term $\left\|\mathcal{Q}\right\|_F = O\big(\prod_{n=1}^d \sqrt{r_n}\big)$ as it is a bounded matrix of dimension $\times_{_n=1}^d r_n$. 
Since weights are formed over estimated $U^{(n)}$, this must be accounted for. 
For each column $\ell_n$ of $\widetilde{A}_{1}^{(n)}$:
\begin{align*}
    \frac{1}{\sqrt{N_n}} (\mathbb{I} - S_{n,\ell_n})U^{(n)}_{\ell_n}   
    &=
    \frac{1}{\sqrt{N_n}} (\mathbb{I} - S_{n,\ell_n})[\hat U^{(n)}_{\ell_n}  + U^{(n)}_{\ell_n}  - \hat U^{(n)}_{\ell_n} ]
    \\
    &= 
    \frac{1}{\sqrt{N_n}} (\mathbb{I} - S_{n,\ell_n})\hat U^{(n)}_{\ell_n} 
    +
    \frac{1}{\sqrt{N_n}} (\mathbb{I} - S_{n,\ell_n})[U^{(n)}_{\ell_n}  - \hat U^{(n)}_{\ell_n} ]
\end{align*}
The first term is a vector bounded by $O_p(h_n)\iota_{N_n}/\sqrt{N_n}$ from standard nonparametric arguments, see \cite{opsomer1999root}. 
Hence, the spectral norm of $\widetilde{A}_{1}^{(n)}$ is bounded,
\begin{align*}
    \frac{1}{\sqrt{N_n}}\left\|\widetilde{A}_{1}^{(n)}\right\|_2 
    \leq 
    \frac{O_p(h_n)}{\sqrt{N_n}}\|\mathbbm{1}_{N_n\times r_n}\|_2
    + 
    O_p(\sqrt{r_n}) \frac{1}{\sqrt{N_n}}\left\|U^{(n)}  - \hat U^{(n)} \right\|_2 
    = 
    \sqrt{r_n}O_p(h_n + C^{-1}_n).
\end{align*}

Each column $\ell_n$ of $\widetilde{A}_{2,\ell_n}^{(n)}$, $\sum_{\ell_n'\neq \ell_n} S_{n, \ell_n'} U_{\ell_n}^{(n)} =\iota_{N_n} \sum_{\ell_n'\neq \ell_n}O_p(N_nh_n)^{-1/2}$ since $U_{\ell_n}^{(n)}$ are mean zero. 
Hence, each column $\widetilde{A}_{2,\ell_n}^{(n)} = \iota_{N_n}(r_n-1)O_p(N_nh_n)^{-1/2}$. 
Let $\widetilde{\boldsymbol{A}}_2 := \widetilde{\boldsymbol{S}}\times (\widetilde{A}_{2}^{(1)}, \dots, \widetilde{A}_{2}^{(d)} )$, 
\begin{align*}
    \frac{1}{N} \left\langle \widetilde{\boldsymbol{X}}, \widetilde{\boldsymbol{A}}_2 \right\rangle_F
&=
    \frac{1}{N} \left\langle \widetilde{\boldsymbol{X}}, \widetilde{\boldsymbol{S}}\bigtimes_{n=1}^d \widetilde{A}_{2}^{(n)}\right\rangle_F
    =
    \frac{1}{N} \left\langle \widetilde{\boldsymbol{X}}, \widetilde{\boldsymbol{S}}\bigtimes_{n=1}^d (r_n-1) \iota_{N_n} O_p\left(\frac{1}{\sqrt{N_n h_n}}\right)\right\rangle_F
\end{align*}
This is then $o_p(\prod_n N_n^{-1/2})$ since $\widetilde{\boldsymbol{X}}$ is also mean zero:
\begin{align*}
    \frac{1}{N^{1/2}}\prod_{n}(r_n-1)h_n^{-1/2} 
    \frac{1}{N}\left\langle \widetilde{\boldsymbol{X}}, \widetilde{\boldsymbol{S}}\bigtimes_{n=1}^d  \iota_{N_n} \right\rangle_F
    = 
    \frac{1}{N^{1/2}}\prod_{n}(r_n-1)h_n^{-1/2} O_p(N^{-1/2}) \sum_{n'=1}^d \sum_{m_n'} O( m_{n'})
\end{align*}
where the last equality is from $\mathcal{Q}$ being the core tensor of bounded singular values, which vary only over $m_1,\dots, m_d$. Hence the term is bounded $O_p(N^{-1}\prod_{n}r_n(r_n-1)h_n^{-1/2}$. 
Each $r_n <\infty$, hence this is $o_p(\prod_n N_n^{-1/2})$ for $h_n = c_n\cdot N_n^{\rho_n}$ and $\rho_n\in (0,1)$.  

Now to account for the remaining terms, i.e. the inner product between $\widetilde{\boldsymbol{X}}$ and combinations of $\widetilde{A}_{1}^{(n)}$ and $\widetilde{A}_{2}^{(n)}$. 
Order of dimensions is irrelevant, hence consider $m\in(1,\dots,d-1)$ in, 
\begin{align*}
    \frac{1}{N} \left\langle \widetilde{\boldsymbol{X}}, \mathcal{Q} \times\Big( \widetilde{A}_{2}^{(1)},  \dots, \widetilde{A}_{2}^{(m)}, \widetilde{A}_{1}^{(m+1)}, \dots \widetilde{A}_{1}^{(d)}\Big)\right\rangle_F
\end{align*}
From above $\sum_{\ell_n'\neq \ell_n} S_{n, \ell_n'} U^{(n)} = (r_n - 1)O_p(N_nh_n)^{-1/2}\cdot \iota_{r_n}$. 
Hence, this is, 
\begin{align*}
    \prod_{n=1}^m \frac{r_n-1}{\sqrt{N_n h_n}}\cdot \frac{1}{N} 
    &\left\langle \widetilde{\boldsymbol{X}}, 
    \mathcal{Q} \times\Big( \iota_{N_1}\iota_{r_1}',  \dots, \iota_{N_m}\iota_{r_m}' , \widetilde{A}_{1}^{(m+1)}, \dots \widetilde{A}_{1}^{(d)}\Big)\right\rangle_F
    \\
    &= 
    \left(\prod_{n=1}^m \frac{r_n-1}{\sqrt{N_n h_n}}\right)
    \cdot \frac{1}{N} 
    Tr\left\{\widetilde{X}_{[1:m]}'  
    \bigotimes_{n = 1}^m \iota_{N_n} \Big(\bigotimes_{n = 1}^m \iota_{r_n}\Big)'
    \mathcal{Q}_{[1:m]} \bigotimes_{n = m+1}^d \widetilde{A}_{1}^{(n)}\right\}
    \\
    &=
    \left(\prod_{n=1}^m \frac{r_n-1}{N_n \sqrt{h_n}}\right)
    \cdot 
   \prod_{n'=m+1}^d \frac{1}{N_{n'}} 
    Tr\left\{ 
    \bigotimes_{n = m+1}^m \iota_{N_n} \Big(\bigotimes_{n = 1}^m \iota_{r_n}\Big)'
    \mathcal{Q}_{[1:m]} \bigotimes_{n = m+1}^d \widetilde{A}_{1}^{(n)}\right\}
\end{align*}
The last line uses mean zero $\widetilde{\boldsymbol{X}}$. The rest is rudimentary. 
Lastly, write the last line as, 
\begin{align*}
    \left(\prod_{n=1}^m \frac{r_n-1}{N_n \sqrt{h_n}}\right)
    \cdot 
    \prod_{n'=m+1}^d \frac{1}{N_{n'}} 
    \left\langle
     \bigotimes_{n = 1}^m \iota_{r_n}
    \Big(\bigotimes_{n = m+1}^m \iota_{N_n}\Big)'
    , 
    \mathcal{Q} \times\Big( \mathbb{I}_{r_1},  \dots, \mathbb{I}_{r_m} , \widetilde{A}_{1}^{(m+1)}, \dots \widetilde{A}_{1}^{(d)}\Big)\right\rangle_F
\end{align*}
By Cauchy-Schwarz this is absolutely bounded, 
\begin{align*}
    \left(\prod_{n=1}^m \frac{r_n-1}{N_n \sqrt{h_n}}\right)\prod_{n'=m+1}^d \frac{1}{N_{n'}} 
    \Big\|\bigotimes_{n = 1}^m \iota_{r_n}
    \Big(\bigotimes_{n = m+1}^m \iota_{N_n}\Big)'\Big\|_F
    \|\mathcal{Q} \|_F
    \prod_{n\geq m}\big\|\widetilde{A}_{1}^{(n)}\big\|_2
    \\
    = 
    \left(\prod_{n=1}^m \frac{r_n-1}{N_n \sqrt{h_n}}\right)
    \Big(\prod_{n=1}^d \sqrt{r_n}\Big)
    \sqrt{\sum_{n} r_n}
    \prod_{n'=m+1}^d \frac{1}{\sqrt{N_{n'}}} \prod_{n\geq m}\big\|\widetilde{A}_{1}^{(n)}\big\|_2
\end{align*}
$r_n$ are all bounded terms, hence this simplifies to $O_p\big(\prod_{n=1}^m (N_n\sqrt{h_n})^{-1}\prod_{n'=m+1}^d (h_{n'} + C_{n'}^{-1})\big)$.

Hence all terms are bounded and 
\begin{align*}
    \frac{1}{N} \left|\left\langle \widetilde{\boldsymbol{X}}, \widetilde{\boldsymbol{A}} \right\rangle_F\right|
    = 
    \prod_{n=1}^d r_n O_p\Big((h_n + C_n^{-1}) + (r_n-1) \big(N_n \sqrt{h_n}\big)^{-1} \Big)
\end{align*}

\end{proof}

Justified here is a notion to ignore the up-to-rotations consistency of fixed-effects estimates. 
Take invertible matrix $H_n$ from \cite{Bai2009}, such that consistency of fixed-effect proxies are, 
\begin{align*}
    \frac{1}{N_{n}}\left\|\widehat{U}^{(n)} -  {U}^{(n)}H_n\right\|^2 
    =
    \frac{1}{N_{n}}\sum_{i_n = 1}^{N_n}\left\|\widehat{U}_{i_n}^{(n)} -  H_n'{U}_{i_n}^{(n)}\right\|^2
    = O_p(C_{n}^{-2}).
\end{align*}
The HOSVD, 
\begin{align*}
     \mathcal{A} &= \mathcal{Q}\times (U^{(1)}, \dots, U^{(d)})
     = \mathcal{Q}\times (U^{(1)}H_1H_1^{-1}, \dots, U^{(d)}H_dH_d^{-1})
     \\
     &=
     (\mathcal{Q}\times (H_1^{-1}, \dots, H_d^{-1}))\times (U^{(1)}H_1, \dots, U^{(d)}H_d)
     =
     \widetilde{\mathcal{Q}}\times (U^{(1)}H_1, \dots, U^{(d)}H_d)
\end{align*}
Consistency rate can then be analysed v\'is-a-v\'is the decomposition in the last line, and bounded singular values and invertible $H_n$ leaves the result unaffected. 
Mean square consistency over the vector $U_{i_n}^{(n)}$ implies mean square consistency for each component from $m_n=1,\dots,r_n$, 
\begin{align*}
    \frac{1}{N_{n}}\sum_{i_n = 1}^{N_n}\big(\widehat{U}_{i_nm_n}^{(n)} -  H_{n,\cdot m_n}'{U}_{i_n}^{(n)}\big)^2 = O_p(C_{n}^{-2}).
\end{align*}
Hence, can analyse mean square error $\big\{ \frac{1}{N_{n}}\sum_{i_n = 1}^{N_n}\big(\widehat{U}_{i_nm_n}^{(n)}  -H_{n,\cdot m_n}'{U}_{i_n}^{(n)}\big)^2\big\}_{m_n=1}^{r_n}$ without loss. 

\subsection{Linear Kernel}\label{sect:LinearKernel}

Considered here is the special case using the linear kernel: $k(u,v) = u'v$ to generate weights in the weighted-within transformation. 
Under this kernel the proof simplifies significantly. 

\begin{theorem}
    Make Proposition~\ref{lemma:consIterKernel} assumptions. 
    For weights, use the linear kernel with estimated unitary matrices from the HOSVD such that $W_n = \widehat{U}^{(n)}\widehat{U}^{(n)\,'}$. 
    Then, 
    \begin{align*}
        \norm{\widehat{\beta}_{W} - \beta^0} 
        = 
        \prod_{n=1}^d \sqrt{r_n} O_p\Big( C_n^{-1} \Big)
        + 
        O_p\left(\prod_{n = 1}^d\frac{1}{\sqrt{N_{n}}} \right).
    \end{align*}
\end{theorem}

\begin{proof}
    Take, 
\begin{align*}
    \frac{1}{N} \left|\left\langle \widetilde{\boldsymbol{X}}, \widetilde{\boldsymbol{A}} \right\rangle_F\right|
    \leq 
    \frac{1}{N} \left\langle \widetilde{\boldsymbol{X}}, \widetilde{\boldsymbol{X}} \right\rangle_F^{1/2}
    \left\langle \widetilde{\boldsymbol{A}}, \widetilde{\boldsymbol{A}} \right\rangle_F^{1/2}
    = 
    O_p(1) 
    \frac{1}{\sqrt{N}}
    \left\| \widetilde{\boldsymbol{A}}\right\|_F,
\end{align*}

The HOSVD decomposition of $\widetilde{\boldsymbol{A}}$ gives,
\begin{align*}
    \widetilde{\boldsymbol{A}} = \mathcal{Q}\bigtimes_{n =1 }^d \left(\mathbb{I}_{N_n} - \widehat{U}^{(n)}\widehat{U}^{(n)\,'}\right)U^{(n)}
\end{align*}
As in the proof of Proposition~\ref{lemma:consIterKernel}, 
\begin{align*}
   \frac{1}{\sqrt{N}}
    \left\| \widetilde{\boldsymbol{A}}\right\|_F 
    \leq 
    \frac{1}{\sqrt{N}}
    \left\|\mathcal{Q}\right\|_F
    \prod_{n = 1}^d
    \left\|\widetilde{A}^{(n)}\right\|_2 
    = 
    O(1)\cdot 
    \frac{1}{\sqrt{N}}
    \prod_{n = 1}^d
    \left\|\widetilde{A}^{(n)}\right\|_2. 
\end{align*}

Each term $\left\|\widetilde{A}^{(n)}\right\|_2$ can be handled separately. 
\begin{align*}
    \frac{1}{\sqrt{N_n}}\left\|\widetilde{A}^{(n)}\right\|_2
    &= 
    \frac{1}{\sqrt{N_n}}\left\|\left(\mathbb{I}_{N_n} - \widehat{U}^{(n)}\widehat{U}^{(n)\,'}\right)U^{(n)}\right\|_2
    \\
    &=
    \frac{1}{\sqrt{N_n}}\left\|\left(\mathbb{I}_{N_n} - \widehat{U}^{(n)}\widehat{U}^{(n)\,'}\right)(\widehat{U}^{(n)} +  U^{(n)} - \widehat{U}^{(n)})\right\|_2
    \\
    &\leq 
    \frac{1}{\sqrt{N_n}}\left\|\left(\mathbb{I}_{N_n} - \widehat{U}^{(n)}\widehat{U}^{(n)\,'}\right)\widehat{U}^{(n)}\right\|_2
    +  
    \left\|\mathbb{I}_{N_n} - \widehat{U}^{(n)}\widehat{U}^{(n)\,'}\right\|_2
    \frac{1}{\sqrt{N_n}}\left\|U^{(n)} - \widehat{U}^{(n)}\right\|_2
\end{align*}
The first term is 0, since $\widehat{U}^{(n)} - \widehat{U}^{(n)}\widehat{U}^{(n)\,'}\widehat{U}^{(n)} = \widehat{U}^{(n)} - \widehat{U}^{(n)}$ by enforcing $\widehat{U}^{(n)}$ to be unitary. 
The second term is bounded by, 
\begin{align*}
    \left\|\mathbb{I}_{N_n} - \widehat{U}^{(n)}\widehat{U}^{(n)\,'}\right\|_2
    \frac{1}{\sqrt{N_n}}\left\|U^{(n)} - \widehat{U}^{(n)}\right\|_2
    &\leq 
    \left(
    \left\|\mathbb{I}_{N_n}\right\|_2 
    +
    \left\|\widehat{U}^{(n)}\widehat{U}^{(n)\,'}\right\|_2
    \right)
    \frac{1}{\sqrt{N_n}}\left\|U^{(n)} - \widehat{U}^{(n)}\right\|_2
    \\
    &= 
    (1 + \sqrt{r_n}) O_p(C^{-1}_n)
\end{align*}

Multiplying over all $n =1,\dots, d$ completes the proof. 

\end{proof}

\setlength{\bibsep}{2pt} 
\bibliographystyle{chicago3}
\bibliography{refs}

\end{document}